\documentclass[a4paper,11pt]{article}
\usepackage{stmaryrd}
\usepackage{jheppub} 

\usepackage{amssymb}
\usepackage{amsmath}
\usepackage{amsthm}
\usepackage{float}
\usepackage{booktabs}
\usepackage{latexsym}
\usepackage{mathrsfs}
\usepackage{bm}
\usepackage{bm}
\usepackage[dvipsnames]{xcolor}
\usepackage{tikz}
\usepackage{newtxtext}
\usepackage{slashed}
\usepackage{bbm}
\usepackage{upgreek}
\usepackage{enumerate}

\DeclareMathOperator{\csch}{csch}
\theoremstyle{plain}
\newtheorem{proposition}{Proposition}
\newtheorem{lemma}{Lemma}
\newtheorem{theorem}{Theorem}

\newtheorem{corollary}{Corollary}

\newtheorem{remark}{Remark}

\def\bma{{\bm a}}

\def\bme{{\bm e}}

\def\bmg{{\bm g}}
\def\bmh{{\bm h}}
\def\bmi{{\bm i}}

\def\bml{{\bm l}}
\def\bmn{{\bm n}}
\def\bmm{{\bm m}}

\def\bmq{{\bm q}}

\def\bmv{{\bm v}}

\def\bmx{{\bm x}}

\def\bmzero{{\bm 0}}
\def\bmone{{\bm 1}}
\def\bmtwo{{\bm 2}}
\def\bmthree{{\bm 3}}

\def\bmA{{\bm A}}
\def\bmB{{\bm B}}
\def\bmC{{\bm C}}

\def\bmL{{\bm L}}

\def\bmN{{\bm N}}

\def\fraka{{\mathfrak a}}

\def\bmbeta{{\bm \beta}}

\def\bmdelta{{\bm \delta}}

\def\bmeta{{\bm \eta}}

\def\bmsigma{{\bm \sigma}}

\def\bmpartial{{\bm \partial}}
\def\bmnabla{{\bm \nabla}}

\usepackage{mathtools}
\usepackage{xparse}
\makeatletter
\newcommand{\raisemath}[1]{\mathpalette{\raisem@th{#1}}}
\newcommand{\raisem@th}[3]{\raisebox{#1}{$#2#3$}}
\makeatother
\NewDocumentCommand{\newrbar}{O{0pt} O{0pt}}{
  \ensuremath{\mathrlap{\raisemath{#2}{\hspace*{#1}{\mathchar'26\mkern-9mu}}}r}}

\newcounter{mnotecount}

\newcommand{\mnotex}[1]
{\protect{\stepcounter{mnotecount}}$^{\mbox{\footnotesize $\bullet$\themnotecount}}$ 
\marginpar{
\raggedright\tiny\em
$\!\!\!\!\!\!\,\bullet$\themnotecount: #1} }

\newcounter{mnote}

\usepackage{color}
\usepackage{xcolor}
\usepackage[normalem]{ulem}
\usepackage{soul}
\usepackage{hyperref}

\usepackage{cleveref}

\newcommand{\scri}[0]{\mathscr{I}}

\renewcommand{\leq}{\leqslant}
\renewcommand{\geq}{\geqslant}

\title{\boldmath Conformal geodesics and the hyperboloid at timelike infinity}

\author[a]{Achilleas V. Koukas,}
\author[b,c]{Mariem Magdy}
\author[d]{and Juan A. Valiente Kroon}

\affiliation[a]{Department of Physics, Princeton University,\\
Jadwin Hall, Washington Road, Princeton, New Jersey 08544, USA}
\affiliation[b]{Perimeter Institute for Theoretical Physics,\\
31 Caroline Street North, Waterloo, Ontario, N2L 2Y5, Canada}
\affiliation[c]{The Fields Institute for Research in Mathematical Sciences,\\
222 College St, Toronto, Ontario, M5T 3J1, Canada}
\affiliation[d]{School of Mathematical Sciences, Queen Mary, University of London,\\
Mile End Road, London E1 4NS, United Kingdom}

\emailAdd{ak2468@princeton.edu}
\emailAdd{mmagdy@perimeterinstitute.ca}
\emailAdd{j.a.valiente-kroon@qmul.ac.uk}

\abstract{A description of scattering in asymptotically flat spacetimes that includes both radiation and massive asymptotic states requires control of the relation between null and timelike infinity. Motivated by this problem, we use the properties of conformal geodesics to construct an extended representation of timelike infinity in Minkowski spacetime. The construction exploits the fact that every non-null metric geodesic in an Einstein spacetime is, up to an explicit local reparametrisation, a conformal geodesic. It replaces the point $i^+$ by a hyperboloidal boundary $\mathcal{H}^+$ meeting future null infinity $\mathscr{I}^+$ at a corner $\mathcal{C}^{+}$, while fixing the location of $\mathscr{I}^+$ in the associated conformal Gaussian gauge. In this gauge, we derive the equations governing massless scalar and spin-$2$ fields. In particular, we obtain a transport equation for the scalar field along future null infinity and analyse the asymptotic behaviour of its spherical-harmonic modes near timelike infinity. We also derive the explicit massless spin-$2$ equations in this gauge. Finally, we examine the relation to projective extensions and to the hyperbolic base of the extended Carrollian space $\mathrm{Ti}$. Our results provide a geometric framework for studying the connection between late-time and null asymptotics which is a prerequisite for scattering constructions relevant to flat space holography.}

\begin{document}
\maketitle
\flushbottom

\section{Introduction}
Flat space holography \cite{Racl21,DonFioHerRuc22,Donnay23} seeks a boundary description of quantum gravity in asymptotically flat spacetimes in which bulk scattering amplitudes are encoded by boundary observables. In this programme, the asymptotic geometry determines where scattering data are represented and how the symmetries acting on incoming and outgoing states are related. Incorporating both massless radiation and massive asymptotic states requires a description of null and timelike infinity as well as a relation between them. Such a geometric framework is therefore relevant both to the formulation of scattering theory and to the boundary structures underlying a holographic dictionary.

In the standard Penrose conformal description of Minkowski spacetime, future timelike infinity is represented by a single point $i^{+}$ which does not directly resolve the dependence of massive-field asymptotic data on timelike directions. Replacing $i^{+}$ by an extended space makes this dependence accessible to geometric and analytic study. Similar considerations motivated earlier constructions resolving spatial infinity $i^{0}$ \cite{AshHan78,AshRom92,Fri81a}. Inspired by Friedrich's approach to the blow-up of spatial infinity, our main goal is to construct an extended representation of timelike infinity that meets future null infinity at a corner.

Despite historically receiving less attention than null infinity or spatial infinity, there is, nevertheless, a substantial amount of work on the subject aimed at developing an understanding of the behaviour of fields (including gravitation) at $i^+$ and, at the same time, providing alternative descriptions that allow to disentangle the structures associated to this region of spacetime. Much of this work falls into three categories:

\begin{enumerate}[(a)]
    \item \emph{the analysis of the conformal and asymptotic structure of} $i^+$ with the aim of clarifying whether timelike infinity can be given enough geometric structure to describe the asymptotic state of an isolated gravitating system or the scattering of massive fields ---see e.g.\cite{Por82,Per82a,Cut89,GenShi98}. Early work along this line tried to extract or attach additional structure to timelike infinity ---e.g. direction dependent limits, projective structures, hyperboloidal completions so that the late-time state of the systems under consideration could be described geometrically rather than along curves in the spacetime.
    
    \item \emph{The analysis of asymptotic symmetries and conserved quantities at $i^+$} aimed at establishing the connection of these with their counterparts at null infinity $\mathscr{I}^+$ and understanding gravitational memory ---see e.g. \cite{SolBha25,ComGraWei23,ChaGhoHoqKhaVir22,GenShi98}.
    
    \item \emph{The analytic and microlocal treatments aimed at turning $i^+$} into a geometric boundary at which the long-time behaviour of solutions to hyperbolic equations, in particular the Einstein field equations, can be described and used in scattering constructions ---see e.g. \cite{HinVas20,BorHer25,BorChaHer25,FigHavProSal22}. 
\end{enumerate}

Of particular relevance to our goal are the extended descriptions of timelike infinity developed in gravitational scattering and flat space holography ---see e.g. \cite{FigHavProSal22,HaveNgyProSal24,ComGraWei23}. In Minkowski spacetime, the asymptotic four-velocities of massive particles are naturally parametrised by the unit hyperboloid $\mathbb{H}^3$, which can also be viewed as Euclidean anti-de Sitter space \cite{CampigliaLaddha15}. The extended timelike space $\mathrm{Ti}$ is a line bundle over this hyperboloid endowed with a Carrollian structure \cite{FigHavProSal22,HaveNgyProSal24,BorHer25,BorChaHer25}. Earlier work \cite{BorHer25} extended this description to curved spacetimes under an implicit assumption of vanishing mass aspect. This restriction is removed in \cite{BorChaHer25} where an invariant construction is given for spacetimes satisfying a notion of asymptotic flatness at timelike infinity formulated through regularity conditions on a suitably rescaled inverse metric and normal to the boundary. The line bundle $\mathrm{Ti}$ is constructed over the resulting timelike boundary. The fibres of $\mathrm{Ti}$ parametrise the freedom in the second-order term of the boundary defining function.

These constructions draw on the seminal work of Ashtekar and Hansen \cite{AshHan78}, in which spatial infinity is described through direction-dependent limits of fields at $i^{0}$. The unit spacelike directions of approach form a hyperboloid with three-dimensional de Sitter geometry. The associated extended space $\mathrm{Spi}$ is a line bundle over this hyperboloid and provides a geometric model for the later constructions at timelike infinity. Ashtekar and Romano \cite{AshRom92} subsequently formulated spatial infinity directly as a timelike boundary hypersurface of a completed spacetime and encoded the asymptotic gravitational field in regular tensor fields on this boundary.

An important limitation of the original Ashtekar-Hansen and Ashtekar-Romano constructions \cite{AshHan78,AshRom92} is that they do not provide a joint description of spatial and null infinity that controls the regularity of fields where these asymptotic regions meet. A similar limitation arises in the later constructions of extended timelike boundaries \cite{FigHavProSal22,HaveNgyProSal24,BorHer25,BorChaHer25} which do not by themselves establish the regularity of fields at their intersection with null infinity. In frameworks for gravitational scattering, matching between different asymptotic regions relies on additional assumptions about the regularity and asymptotic behaviour of the fields --- see e.g. \cite{ComGraWei23}. A joint description of these asymptotic regions makes it possible to investigate whether these conditions are satisfied by solutions of the Einstein field equations. This work then provides a setting for addressing this question, particularly for the intersection of future null and timelike infinity, with potential applications to the description of massive asymptotic states in flat space holography.

\medskip
\noindent
\textbf{Spacetimes endowed with $i^+$.} A foundational result for addressing this problem is Friedrich's semiglobal stability for the Minkowski spacetime \cite{Fri86b} which ensures that small enough perturbations of hyperboloidal suitably smooth initial data for the Minkowski spacetime develop into a future geodesically complete semiglobal spacetime with an asymptotic structure similar to that of the Minkowski spacetime. In particular, the spacetime can be smoothly conformally extended towards the future of the initial hyperboloid. Moreover, the generators of future null infinity, $\mathscr{I}^+$, are complete and intersect at a point $i^+$ ---future timelike infinity. The point $i^+$ is of particular physical significance as it represents the future endpoint of timelike geodesics.

Friedrich's stability theorem is based on the hyperboloidal initial value problem for the (conformal) Einstein field equations, first considered in \cite{Fri83}. In this article, we take the spacetimes arising from this initial value problem as the setting for our analysis. Their future timelike geodesic completeness and smooth Penrose conformal completion, including $i^+$, provide an ideal setting for constructing an extended representation of timelike infinity and investigating its relation to $\mathscr{I}^+$.

\medskip
\noindent
\textbf{Conformal geodesics.} The analysis in this article makes use of conformal geodesics to resolve the structure of timelike infinity thus providing a blow-up in the form of what we call the \emph{hyperboloid at timelike infinity},  $\mathcal{H}^+$. Conformal geodesics are a generalisation of the standard notion of metric geodesics with suitable transformation properties upon a conformal rescaling ---and more generally affine changes of the connection. Conformal geodesics were first considered in the context of General Relativity by Schmidt \cite{Sch86} ---see also \cite{FriSch87}. One key property of these geometric objects is that they provide a canonical way of constructing conformal extensions of spacetime.

Conformal geodesics naturally complement Friedrich's extended conformal Einstein field equations \cite{Fri95,Fri98a,CFEBook} by providing a geometric way of constructing gauges to analyse the evolution of the gravitational field by means of conformal methods ---the so-called \emph{conformal Gaussian gauge systems}. Crucially, for vacuum spacetimes with or without Cosmological constant, conformal Gaussian gauge systems allows to decouple the knowledge of the particular representative of the conformal class of the conformal metric from the evolution of the geometric fields describing the spacetime. In particular, the associated conformal factor is known explicitly once the gauge has been set up ---thus giving \emph{a priori} knowledge of the locus of the conformal boundary. This idea has been central in the formulation of an initial boundary value problem to construct anti-de Sitter-like spacetimes \cite{Fri95}. It can also be used to provide a simpler, more transparent, proof of the classic results \cite{Fri86b} on the global non-linear stability of the de Sitter spacetime and the semiglobal stability of the Minkwoski spacetime from hyperboloidal data ---see \cite{LueVal09,CFEBook}. 

Conformal geodesics in Einstein spacetimes, i.e. those for which the Ricci tensor is proportional to the metric, have special properties. In particular, every metric geodesic in an Einstein spacetime can be reparametrised as a conformal geodesic. In this article, we exploit this observation to reparametrise timelike geodesics in the Minkowski spacetime and suitable nonlinear perturbations thereof to obtain a congruence of conformal geodesics terminating at timelike infinity. This congruence allows, in turn, to set up a conformal Gaussian system which covers the domain of dependence $D^+(\mathcal{H}_\star)$ of a hyperboloid $\mathcal{H}_\star$. The reparametrisation of the metric geodesics attaches a future endpoint to the members of the family of curves under consideration. This construction, however, does not lead to a \emph{projective compactification} as it is linked to a specific type of timelike geodesics. However, as a consequence of the use of Gaussian coordinates, in this representation, the point $i^+$ has been blown up to a hyperboloid ---the \emph{hyperboloid at timelike infinity} $\mathcal{H}^+$. Thus, this construction provides a \emph{conformal compactification} of $D^+(\mathcal{H}_\star)$ for which the associated metric is degenerate but smooth.

\medskip
In order to gain intuition, all the constructions described in the previous paragraph are first analysed in the context of the (exact) Minkowski spacetime. Friedrich's semiglobal stability result allows, in turn, to readily extend these constructions to nonlinear perturbations ---i.e. dynamic spacetimes with, in general, non-vanishing fluxes of energy across null infinity.

\medskip
\noindent
\textbf{Evolution equations near $i^+$.} A principal motivation of the present analysis is to develop a framework for the analysis of evolution equations (including those governing the gravitational field) near timelike infinity. In particular, our construction gives rise to a manifold with a corner $\mathcal{C}^+$ ---this corresponds to the locus of points at which null infinity $\mathscr{I}^+$ and the hyperboloid at timelike infinity $\mathcal{H}^+$ intersect. Manifolds with corners at timelike infinity already occur in the microlocal stability analysis of Hintz and Vasy \cite{HinVas20} and in the scattering framework of \cite{ComGraWei23}. Our construction brings this perspective into a gauge adapted to Friedrich's extended conformal Einstein field equations.

We provide an initial analysis using the scalar wave equation on the rescaled spacetime. In particular, we derive its transport equation along $\mathscr{I}^+$ and analyse the asymptotic behaviour of the spherical-harmonic modes of its homogeneous part towards $\mathcal{C}^+$. Since the full transport equation also involves transverse derivatives, these mode asymptotics alone do not determine the full boundary evolution. We also briefly consider the wave equations describing the massless spin-$2$ field describing linearised gravity in this representation. A detailed analysis of this system and the treatment of massive fields lie beyond the scope of the present article.

\medskip
\noindent
\textbf{$\mathscr{I}$-fixing.} One of the consequences of the use of a conformal Gaussian system based on timelike curves ending at $i^+$ together with Friedrich's \emph{extended conformal Einstein field equations} is that one obtains a setting for the analysis of non-linear perturbations of the Minkowski spacetime for which the location of null infinity $\mathscr{I}^+$ is known \emph{a priori}  as the locus of points for which a \emph{spatial coordinate} vanishes. This type of setting has been much sought-after in hyperboloidal approaches to numerical relativity and is often known as \emph{$\mathscr{I}$-fixing} ---see e.g. \cite{Zen08b}. An alternative approach to $\mathscr{I}$-fixing for a conformal representation of the Einstein field equations has been recently considered in the construction of perturbations of the Minkowski spacetime with smooth conformal extensions by N\"utzi \cite{Nut25}. 

\medskip
The results presented here provide a first step towards a broader analysis of timelike infinity within the framework of the extended conformal Einstein field equations. Further potential developments include applications to the analysis of the hyperboloidal initial value problem for the Einstein field equations using techniques from microlocal analysis, numerical implementations of the constructions developed here, and their generalisation to the domain of outer communication of black hole spacetimes. These directions will be considered elsewhere.

\subsection*{Structure of the article}

In Section \ref{Section:Tools}, we provide a brief summary of the ideas and tools of conformal and projective geometry that will be used in the rest of the article. This is intended mostly as a quick reference. The reader interested in full details and proof is referred to the indicated bibliography. Section \ref{Section:TimelikeGeodesicsMinkowski} provides a discussion of conformal and projective properties of timelike geodesics in Minkowski. Particular focus is given to the way in which these curves can be used to construct representations of timelike infinity. This section also provides a first example of a \emph{blow-up} of timelike infinity by means of timelike geodesics. This example will serve as model for the rest of the article. We also discuss the connection of this construction to projective extensions. Section \ref{Section:Conformal Gaussian Systems} is the main one of this article and discusses the blow-up of timelike infinity in the Minkowski spacetime by means of conformal Gaussian systems based on a fiduciary hyperboloid. For reference, a brief discussion of the construction of hyperboloidal hypersurfaces by means of height functions is provided. Moreover, we identify two different specialisations of conformal Gaussian systems: namely one in which the hyperboloids are described by means of the so-called Poincar\'e ball model of hyperbolic space and the other an approach introduced by A. Zenginoglu. In Section \ref{Section:MasslessFields}, we apply the constructions of the previous section to study the behaviour of the wave equation near timelike infinity and along null infinity. We also make some general remarks on the structure of the massless spin-2 equations (linearised gravity) in this context. Section \ref{Section: projective-extnesion-Ti} explores the connection of the representations of timelike infinity constructed in Section \ref{Section:Conformal Gaussian Systems} with projective extensions and the so-called $\mbox{Ti}^+$ asymptote. In Section \ref{Section:NonlinearPerturbations}, we prove a stability result for our construction which is summarised in Theorem \ref{Proposition:StabilityGauge}. More precisely, it is shown that conformal Gaussian systems analogous to the ones in Section \ref{Section:Conformal Gaussian Systems} can be constructed to non-linear perturbations of the Minkowski spacetime. This result is essential if one is to turn the constructions discussed in this article into a tool to study the asymptotics of the gravitational field. Section \ref{Section:Conformal Gaussian Systems} provides some conclusions and an outlook for future work. The article contains, additionally three appendices in which further details of some of the main calculations in the article are provided.

\subsection*{Conventions and notation}

Our conventions are consistent with \cite{CFEBook}. In particular, our
metric signature is $(+, -, -, -)$, and the Riemann curvature tensor
associated with the Levi-Civita connection of a metric $g_{ab}$ is
defined by \mbox{$[\nabla_a, \nabla_b] u^d = R^d_{\phantom{d}cab}
u^c$}. For a given spin dyad $\{ o^A, \iota^A \}$ with $o_A \iota^A =
1$, we write $\epsilon_\bmA{}^A$, $\bmA \in \{ \bmzero, \bmone \}$, to denote
$\epsilon_{\bmzero}{}^A = o^A$ and $\epsilon_{\bmone}{}^A = \iota^A$. Spinorial
indices are raised and lowered using the antisymmetric
$\epsilon$-spinor $\epsilon_{AB} = o_A \iota_B - \iota_A o_B$ (with
inverse $\epsilon^{AB} = o^A \iota^B - \iota^A o^B$), e.g. $\xi_B =
\xi^A \epsilon_{AB}$, using the convention that contracted indices
should be \emph{adjacent, descending to the right}. As usual, the
spacetime metric $g_{ab}$ decomposes as $g_{ab} = \epsilon_{AB}
\bar{\epsilon}_{A' B'}$, where $\bar{\epsilon}_{A'B'} =
\overline{\epsilon_{AB}}$. The spin dyad $\epsilon_\bmA{}^A$ gives
rise to a tetrad of null vectors $\bme_{\bmA \bmA'} = \bme_{\bmA
\bmA'}{}^{AA'} \partial_{AA'} = \epsilon_\bmA{}^A
\bar{\epsilon}_{\bmA'}{}^{A'} \partial_{A A'}$. The spin connection
coefficients $\Gamma_{\bmA \bmA'}{}^\bmB{}_\bmC$ are then defined as
$\Gamma_{\bmA \bmA'}{}^\bmB{}_\bmC = - \epsilon_\bmC{}^Q \nabla_{\bmA
\bmA'} \epsilon^\bmB{}_Q$, where $\nabla_{\bmA \bmA'} = \bme_{\bmA
\bmA'}{}^{AA'} \nabla_{A A'}$. In Newman--Penrose language, the spin
connection coefficients $\Gamma_{\bmA \bmA' \bmB \bmC} = \Gamma_{\bmA
\bmA' \bmC \bmB}$ are exactly the spin coefficients $\gamma_{\bmA
\bmA' \bmB \bmC}$ (see \cite{PenRin86}, Summary of Vol. 1). When
integrating over $\mathrm{SU}(2)$, $\mu$ denotes the normalized Haar
measure on $\mathrm{SU}(2)$. Throughout the majority of the paper, we
will be working on \emph{rescaled} Minkowski space (in the F-gauge),
and we shall denote objects (fields, connections, spin dyad) on this
spacetime plainly. When working on physical Minkowski space, we will
denote objects with a tilde, e.g. $\tilde{\phi}$.

\section{Tools of conformal and projective geometry}
\label{Section:Tools}

In this section, we provide a brief summary of the notions of conformal and projective geometry that will be used in this article. For further details and proofs, see e.g. \cite{CFEBook}.

\subsection{Metric, Weyl and projective connections}
In this article, we will make use of various types of connections on Lorentzian manifolds. We briefly review the main definitions and properties.

\subsubsection{Changes of connection}
In the following let $\tilde\bmnabla$, $\bmnabla$ denote, respectively, the Levi-Civita connection of two (Lorentzian) metrics $\tilde\bmg$ and $\bmg$ over a (4-dimensional) manifold $\tilde{\mathcal{M}}$. By definition, one has that
\begin{eqnarray*}
&\tilde\bmnabla \tilde\bmg =0, \qquad \mathrm{Torsion}[\tilde\bmg]=0, & \\
& \bmnabla\bmg =0, \qquad \mathrm{Torsion}[\bmg]=0.
\end{eqnarray*}
We will say that the two metrics $\tilde\bmg$ and $\bmg$ are \emph{conformally related} if there exists a conformal factor $\Xi$ such that
\begin{equation}
\bmg =\Xi^2 \tilde{\bmg}.
\label{ConformalRescaling}
\end{equation}
The associated transformation for the connections is then given by
\[
\bmnabla -\tilde\bmnabla =\mathbf{S}\big(\Xi^{-1}\mathbf{d}\Xi  \big),
\]
where $\mathbf{S}$ denotes the rank 4 tensor which in abstract index notation is given by
\[
S_{ab}{}^{cd} = \delta_a{}^c \delta_b{}^c + \delta_a{}^d \delta_b{}^c - \tilde{g}_{ab}\tilde{g}^{cd}.
\]
The collection of metrics conformally related to $\tilde\bmg$ constitutes the \emph{conformal class} of $\tilde\bmg$ and is denoted by $[\tilde\bmg]$. 

A Weyl connection $\hat\bmnabla$ is a torsion-free connection which is not necessarily metric but which, nevertheless, respects the conformal class. More precisely, one requires that 
\[
\hat{\nabla}_a\tilde{g}_{ab} = -2 f_a \tilde{g}_{bc} 
\]
where $f_a$ is a 1-form. 

\medskip
Now, two torsion-free affine connections $\bmnabla$ and $\grave{\bmnabla}$ are said to be \emph{projectively equivalent} if there exists a 1-form $\bm \varpi$ such that 
\[
\grave{\nabla}_a v^b - \nabla_a v^b = \delta_a{}^b \varpi_c v^c+ v^b \varpi_a.
\]
The latter, in turn, implies the transformation rule
\begin{equation}
\grave{\Gamma}_\mu{}^\nu{}_\lambda -\Gamma_\mu{}^\nu{}_\lambda = \delta_\mu{}^\nu \varpi_\lambda + \delta_\lambda{}^\nu \varpi_\mu, 
\label{ProjectiveConnection}
\end{equation}
for given local coordinates $x=(x^\mu)$. A projective change of connection preserves geodesics in the sense that if the tangent vector $\bmx'$ to a curve $x(s)$, $s\in I\subseteq\mathbb{R}$, satisfies $\nabla_{\bmx'}{\bmx}'=0$, then 
\[
\grave{\nabla}_{{\bmx}'}{\bmx}' = 2 \langle {\bm \varpi},{\bmx}'\rangle \,{\bmx}',
\]
where $\langle{\bm \varpi},{\bmx}'\rangle$ denotes the action of the 1-form $\bm \varpi$ on the vector $\bmx'$ ---i.e. $\varpi_a x^{\prime a}$ in abstract index notation.

\subsubsection{Extensions and compactifications}
Recall that, given a spacetime $(\tilde{\mathcal{M}},\tilde{\bmg})$, a \emph{conformal extension} consists of a spacetime $(\mathcal{M},\bmg)$ together with an embedding of $\tilde{\mathcal{M}}$ into $\mathcal{M}$ such that the relation \eqref{ConformalRescaling} holds. In addition, the conformal factor $\Xi$ is required to be a \emph{boundary defining function} for $\mathcal{M}$, with $\Xi=0$ on $\partial\mathcal{M}$, while the conformal metric $\bmg$ extends regularly to $\partial\mathcal{M}$.

\begin{remark}
{\em Following standard usage, in the following, it will be assumed that the metric $\tilde{\bmg}$ satisfies the vacuum Einstein field equations and call $(\tilde{\mathcal{M}},\tilde\bmg)$ the \emph{physical spacetime}. The pair $(\mathcal{M},\bmg)$ will be referred to as the \emph{unphysical spacetime}.}
\end{remark}

Now, given a boundary defining function $\varsigma$, one speaks of a \emph{projective extension} of $(\tilde{\mathcal{M}},\tilde{\bmg})$ with boundary at the locus of points for which $\varsigma=0$ if the connection extends smoothly through the boundary ---that is, if after setting
\[
{\bm \varpi} = \frac{\mathbf{d}\varsigma}{\varsigma}
\]
in formula \eqref{ProjectiveConnection},
the connection coefficients $\grave{\Gamma}_\mu{}^\nu{}_\lambda$ in local coordinates $x=(x^\mu)$ given by
\begin{equation}
\grave{\Gamma}_\mu{}^\nu{}_\lambda = \Gamma_\mu{}^\nu{}_\lambda + \frac{1}{\varsigma}\left( \delta_\mu{}^\nu \partial_\lambda \varsigma + \delta_\lambda{}^\nu \partial_\mu\varsigma \right)
\label{ProjectiveExtension:ChristiffleTransformation}
\end{equation}
are smooth up to the boundary.

\subsection{Conformal geodesics in Einstein spacetimes}
Recall that given a curve $x(s)$, with $s\in I\subset \mathbb{R}$ with
tangent $\bmx'(s)$ is said to be a (metric) geodesic if it satisfies the
equation
\[
\tilde{\nabla}_{\bmx'}\bmx' =0. 
\]

\medskip
A \emph{conformal geodesic} on a spacetime $(\tilde{\mathcal{M}},\tilde{\bmg})$
is a pair $(x(\tau),\bmbeta(\tau))$ consisting of a curve $x(\tau)$ on
$\tilde{\mathcal{M}}$, $\tau\in I\subset \mathbb{R}$, with tangent
$\dot{\bmx}(\tau)$ and a covector $\bmbeta(\tau)$ along $x(\tau)$
satisfying the equations
\begin{subequations}
\begin{eqnarray}
&& \tilde{\nabla}_{\dot{\bmx}} \dot{\bmx} = -2 \langle
   \bmbeta,\dot{\bmx}\rangle \dot{\bmx} +
   \tilde{\bmg}(\dot{\bmx},\dot{\bmx}) \bmbeta^\sharp, \label{CGEqn1}\\
  && \tilde{\nabla}_{\dot{\bmx}}\bmbeta =\langle \bmbeta,
     \dot{\bmx}\rangle\bmbeta -\frac{1}{2}\tilde{\bmg}^\sharp
     (\bmbeta,\bmbeta)\dot{\bmx}^\flat +
     \tilde{\bmL}(\dot{\bmx},\cdot), \label{CGEqn2}
\end{eqnarray}
\label{CGEqn}
\end{subequations}
where $\tilde{\bmL}$ denotes the \emph{Schouten tensor} of the
Levi-Civita connection $\tilde{\bmnabla}$. For a solution to the vacuum Einstein field equations one 
has 
\[
\tilde{\bmL} =0 .
\]

In the present article, we will rely heavily on the following result
---see Lemma 5.2 in \cite{CFEBook}:

\begin{lemma}[Standard geodesics as conformal geodesics]
\label{Lemma:GeodesicsAsConformalGeodesics}
Any non-null $\tilde{\bmg}$-geodesic in an Einstein spacetime
$(\tilde{\mathcal{M}},\tilde{\bmg})$ is, up to a reparametrisation, a
non-null conformal geodesic.
\end{lemma}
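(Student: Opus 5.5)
The approach is a reparametrisation ansatz. Let $x(s)$ be a non-null $\tilde{\bmg}$-geodesic, so $\tilde\nabla_{\bmx'}\bmx'=0$, and normalise $\tilde{\bmg}(\bmx',\bmx')=\epsilon$ with $\epsilon=\pm1$. Since the spacetime is Einstein, $\tilde R_{ab}=\lambda\tilde g_{ab}$, and in four dimensions the Schouten tensor is $\tilde L_{ab}=\tfrac{1}{6}\lambda\tilde g_{ab}$. The plan is to seek a new parameter $\tau$ with $s=s(\tau)$, and a covector proportional to the tangent,
\[
\bmbeta = \varphi(\tau)\,\bmx'^\flat ,
\]
and then show that \eqref{CGEqn1}--\eqref{CGEqn2} reduce to ODEs for $s(\tau)$ and $\varphi$ alone.

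\emph{First equation.} Write $\dot{\bmx}=\dot s\,\bmx'$. Geodesy of $x(s)$ gives $\tilde\nabla_{\dot{\bmx}}\dot{\bmx}=\ddot s\,\bmx'$. Substituting into \eqref{CGEqn1}, the right-hand side becomes $(-2\epsilon\varphi\dot s^2+\epsilon\dot s^2\varphi)\bmx'=-\epsilon\varphi\dot s^2\bmx'$. Hence \eqref{CGEqn1} holds if and only if $\ddot s=-\epsilon\varphi\dot s^2$, which can be used to define $\varphi=-\epsilon\ddot s/\dot s^2$. This requires $\dot s\neq0$.

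\emph{Second equation.} For the left-hand side, $\tilde\nabla_{\dot{\bmx}}\bmbeta=\dot\varphi\,\bmx'^\flat$, again using geodesy. For the right-hand side, $\langle\bmbeta,\dot{\bmx}\rangle\bmbeta=\epsilon\varphi^2\dot s\,\bmx'^\flat$ and $-\tfrac12\tilde{\bmg}^\sharp(\bmbeta,\bmbeta)\dot{\bmx}^\flat=-\tfrac12\epsilon\varphi^2\dot s\,\bmx'^\flat$. The Einstein condition gives $\tilde{\bmL}(\dot{\bmx},\cdot)=\tfrac{\lambda}{6}\dot s\,\bmx'^\flat$. Every term is therefore proportional to $\bmx'^\flat$, so \eqref{CGEqn2} collapses to the scalar equation
\[
\dot\varphi=\tfrac12\epsilon\varphi^2\dot s+\tfrac{\lambda}{6}\dot s .
\]
This is the key structural point, and it is exactly where both hypotheses enter. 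The non-null assumption makes $\epsilon\neq0$, which is what allowed $\varphi$ to be solved for from the first equation. The Einstein assumption makes $\tilde{\bmL}(\dot{\bmx},\cdot)$ parallel to $\dot{\bmx}^\flat$, so no transverse component appears.

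\emph{Main obstacle.} The remaining step is to check that the coupled system for $(s,\varphi)$ is consistent and has a solution with $\dot s\neq0$. Eliminating $\varphi$ gives a third-order ODE for $s(\tau)$,
\[
-\epsilon\Big(\frac{\dddot s}{\dot s^2}-\frac{2\ddot s^2}{\dot s^3}\Big)=\frac{\ddot s^2}{2\epsilon\dot s^3}+\frac{\lambda}{6}\dot s ,
\]
using $\epsilon^2=1$ to simplify. This is a Schwarzian-type equation: multiplying by $-\epsilon\dot s$ gives
\[
\frac{\dddot s}{\dot s}-\frac32\frac{\ddot s^2}{\dot s^2}=-\frac{\epsilon\lambda}{6}\dot s^2,
\]
that is, the Schwarzian derivative $\{s,\tau\}$ equals $-\epsilon\lambda\dot s^2/6$. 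Equivalently, one can avoid the third-order form: treat $(s,\dot s,\varphi)$ as unknowns of the first-order system
\[
\ddot s=-\epsilon\varphi\dot s^2,\qquad \dot\varphi=\dot s\big(\tfrac12\epsilon\varphi^2+\tfrac{\lambda}{6}\big).
\]
Standard ODE theory then gives a local solution for any initial data with $\dot s(0)\neq0$, and $\dot s$ stays nonzero on the maximal interval. This yields a local reparametrisation $s(\tau)$ and a covector $\bmbeta$ along the curve such that $(x(s(\tau)),\bmbeta)$ solves \eqref{CGEqn}. The curve is non-null because its tangent is a nonzero multiple of $\bmx'$.

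Explicit solutions are then a matter of integration. In the vacuum case $\lambda=0$, the Schwarzian equation makes $s$ a Möbius function of $\tau$, e.g. $s=\tau/(1-c\tau)$-type rational maps. This is the explicit local reparametrisation used later in the paper.
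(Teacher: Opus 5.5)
Your proposal is correct and follows the same route as the paper. The ansatz $\bmbeta=\alpha\,\bmx'^\flat$ reduces \eqref{CGEqn1}--\eqref{CGEqn2} to the scalar system $\ddot s+\alpha\dot s^2\tilde{\bmg}(\bmx',\bmx')=0$, $\dot\alpha=\tfrac12\alpha^2\dot s\,\tilde{\bmg}(\bmx',\bmx')$, which is then solved locally. Your version is slightly more faithful to the stated Einstein hypothesis, because you keep the Schouten term $\tilde{\bmL}=\tfrac{\lambda}{6}\tilde{\bmg}$, whereas the paper's proof silently sets $\tilde{\bmL}=0$; your uniqueness argument that $\dot s$ never vanishes also fills in a detail the paper leaves implicit.
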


\begin{proof}
Let $x(s)$ denote a solution to the metric geodesic equation
$\tilde{\nabla}_{\bmx'} \bmx' =0$. Consider now a reparametrisation of
the curve of the form $s=s(\tau)$. We complete $x(\tau)$ to a
conformal geodesic using an Ansatz of the form
\[
\bmbeta =\alpha(\tau)
  \bmx^{\prime \flat}.
\]
 Writing $\dot{\bmx} = \dot{s} \bmx^\prime$, it
  follows readily that equation \eqref{CGEqn1} implies
\begin{equation}
  \ddot{s} + \alpha \dot{s}^2 \tilde{\bmg}(\bmx',\bmx')=0,
  \label{EquationParameterS}
\end{equation}
where it is noticed that $ \tilde{\bmg}(\bmx',\bmx')$ is constant
along the curve as it is $\tilde{\bmg}$-geodesic. Now, substituting
the Ansatz for $\bmbeta$ into equation \eqref{CGEqn2}, one concludes
that
\begin{equation}
\dot{\alpha} = \frac{1}{2}\alpha^2 \dot{s} \tilde{\bmg}(\bmx',\bmx') .
\label{EquationFactorAlpha}
\end{equation}
Equations \eqref{EquationParameterS} and \eqref{EquationFactorAlpha}
constitute a system of ordinary differential equations for $s(\tau)$
and $\alpha(\tau)$ which can always be solved ---at least locally. 
\end{proof}

We shall also make use of the following result generalising  Proposition 5.1 in \cite{CFEBook}:

\begin{proposition}[the conformal factors associated to a
  conformal geodesic]
  \label{Lemma:ConformalGaussianGauge}
Let $(\tilde{\mathcal{M}},\tilde{\bmg})$ denote an Einstein spacetime. Suppose
that $(x(\tau),\bmbeta(\tau))$ is a solution to the conformal geodesic
equations \eqref{CGEqn1}-\eqref{CGEqn2} and that $\{\bme_\bma\}$ is a
$\bmg$-orthonormal frame propagated using Weyl propagation. If $\bmg
=\Theta^2 \tilde{\bmg}$ is such that 
\[
\bmg(\dot{\bmx},\dot{\bmx})=\varkappa^2,
\]
with $\varkappa$ a scalar satisfying 
\[
\tilde{\nabla}_{\dot{\bmx}}\varkappa=0
\]
then the conformal factor $\Theta$ satisfies
\[
\Theta (\tau) = \Theta_\star + \dot{\Theta}_\star(\tau-\tau_\star) +
\frac{1}{2}\ddot{\Theta}_\star (\tau-\tau_\star)^2,
\]
where the coefficients $\Theta_\star \equiv \Theta(\tau_\star)$,
$\dot{\Theta}_\star \equiv \dot{\Theta}(\tau_\star)$ and
$\ddot{\Theta}_\star \equiv \ddot{\Theta}(\tau_\star)$ are constants
along the conformal geodesic and are subject to the constraints
\[
\dot{\Theta}_\star = \langle
\bmbeta_\star,\dot{\bmx}_\star\rangle\Theta_\star, \qquad \Theta_\star
\ddot{\Theta}_\star = \frac{\varkappa^2}{2}\tilde{\bmg}^\sharp
(\bmbeta_\star,\bmbeta_\star).
\]
Further, along each conformal geodesic
\[
\varkappa \Theta \beta_\bmzero = \dot{\Theta}, \qquad \Theta\beta_\bmi =
\Theta_\star (\beta_\bmi)_\star,
\]
where $\beta_\bma \equiv \langle \bmbeta,\bme_\bma \rangle$.
\label{Proposition:Canonical-conformal-factor}
\end{proposition}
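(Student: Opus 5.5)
The plan is to work directly with the conformal geodesic equations \eqref{CGEqn1}--\eqref{CGEqn2} and to reduce every claim to an ordinary differential equation along the curve. In an Einstein spacetime one has $\tilde{\bmL}=\tfrac{\lambda}{6}\tilde{\bmg}$, and $\lambda=0$ in the vacuum setting assumed in the paper. I would first derive the first order equation for $\Theta$, then show that $\ddot\Theta$ is constant, and finally obtain the relations for the frame components of $\bmbeta$ from the Weyl propagation of the frame.

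\emph{First step.} The Weyl connection $\hat{\bmnabla}=\tilde{\bmnabla}+\mathbf{S}(\bmbeta)$ satisfies $\hat\nabla_{\dot\bmx}\dot\bmx=0$, which is a restatement of \eqref{CGEqn1}. It also satisfies $\hat{\bmnabla}\bmg=-2(\bmbeta-\Theta^{-1}\mathbf{d}\Theta)\otimes\bmg$. Differentiating $\bmg(\dot\bmx,\dot\bmx)=\varkappa^2$ along the curve and using $\tilde\nabla_{\dot\bmx}\varkappa=0$ gives
\[
0=-2\varkappa^2\langle\bmbeta-\Theta^{-1}\mathbf{d}\Theta,\dot\bmx\rangle ,
\]
that is, $\dot\Theta=\langle\bmbeta,\dot\bmx\rangle\Theta$. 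Equivalently, one can differentiate $\Theta^2\tilde{\bmg}(\dot\bmx,\dot\bmx)$ and use \eqref{CGEqn1} directly. Evaluating at $\tau_\star$ yields the first constraint.

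\emph{Second step.} Differentiate $\langle\bmbeta,\dot\bmx\rangle$ using \eqref{CGEqn1}--\eqref{CGEqn2}; the result is $-\langle\bmbeta,\dot\bmx\rangle^2+\tfrac12\tilde{\bmg}(\dot\bmx,\dot\bmx)\tilde{\bmg}^\sharp(\bmbeta,\bmbeta)+\tilde{\bmL}(\dot\bmx,\dot\bmx)$. Combining this with the first step and with $\tilde{\bmg}(\dot\bmx,\dot\bmx)=\varkappa^2\Theta^{-2}$ gives
\[
\Theta\ddot\Theta=\varkappa^2 F, \qquad F\equiv\tfrac12\tilde{\bmg}^\sharp(\bmbeta,\bmbeta)+\tfrac{\lambda}{6}.
\]
The main point is to show that $\ddot\Theta$ is constant, which I expect to be the only non-routine step. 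I would contract \eqref{CGEqn2} with $\bmbeta^\sharp$ to get $\tfrac{d}{d\tau}F=\langle\bmbeta,\dot\bmx\rangle F=(\dot\Theta/\Theta)F$. Hence $F/\Theta$ is constant along the curve, and so $\ddot\Theta=\varkappa^2F/\Theta$ is constant. Integrating twice gives the quadratic expression for $\Theta$. Evaluating $\Theta\ddot\Theta=\varkappa^2F$ at $\tau_\star$ with $\lambda=0$ gives the second constraint; for $\lambda\neq0$ an extra term $\varkappa^2\lambda/6$ would appear, and I would remark on this.

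\emph{Third step: frame components.} Since $\bme_\bmzero=\varkappa^{-1}\dot\bmx$, one has $\beta_\bmzero=\varkappa^{-1}\langle\bmbeta,\dot\bmx\rangle$, so the first step gives $\varkappa\Theta\beta_\bmzero=\dot\Theta$ immediately. For the spatial components, Weyl propagation $\hat\nabla_{\dot\bmx}\bme_\bmi=0$ together with $\tilde{\bmg}(\dot\bmx,\bme_\bmi)=0$ gives
\[
\tilde\nabla_{\dot\bmx}\bme_\bmi=-\langle\bmbeta,\dot\bmx\rangle\bme_\bmi-\beta_\bmi\dot\bmx .
\]
From \eqref{CGEqn2}, using orthogonality and $\tilde{\bmL}\propto\tilde{\bmg}$, $\langle\tilde\nabla_{\dot\bmx}\bmbeta,\bme_\bmi\rangle=\langle\bmbeta,\dot\bmx\rangle\beta_\bmi$. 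Adding the two contributions gives $\dot\beta_\bmi=-\langle\bmbeta,\dot\bmx\rangle\beta_\bmi=-(\dot\Theta/\Theta)\beta_\bmi$. Therefore $(\Theta\beta_\bmi)^{\cdot}=0$, i.e. $\Theta\beta_\bmi=\Theta_\star(\beta_\bmi)_\star$.
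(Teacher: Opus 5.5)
Your argument is correct and follows essentially the same route as the proof of Proposition 5.1 of the CFE book, to which the paper defers: the normalisation $\bmg(\dot\bmx,\dot\bmx)=\varkappa^2$ gives $\dot\Theta=\langle\bmbeta,\dot\bmx\rangle\Theta$, the conformal geodesic equations together with $\tilde{\bmL}=\tfrac{\lambda}{6}\tilde{\bmg}$ make $\ddot\Theta$ constant (your conserved quantity $F/\Theta$ is just a frame-free way of organising this step), and Weyl propagation with $\bme_\bmzero=\varkappa^{-1}\dot\bmx$ gives the relations for $\beta_\bmzero$ and $\beta_\bmi$. You are also right that for $\lambda\neq 0$ the second constraint should read $\Theta_\star\ddot\Theta_\star=\varkappa^2\big(\tfrac12\tilde{\bmg}^\sharp(\bmbeta_\star,\bmbeta_\star)+\tfrac{\lambda}{6}\big)$, so the statement as printed holds only in the vacuum case $\lambda=0$ that the paper actually uses.
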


\begin{proof}
    The proof is similar to that of Proposition 5.1 in \cite{CFEBook} in which the particular case $\varkappa=1$ was discussed.
\end{proof}

\begin{remark}
{\em A frame $\{ \bme_\bma \}$ is said to be \emph{Weyl propagated} if it satisfies the propagation equation
\[
\tilde{\nabla}_{\dot{\bmx}} \bme_{\bma} = -\langle \bmbeta,\bme_\bma \rangle \dot{\bmx} - \langle \bmbeta,\dot{\bmx} \rangle \bme_\bma + \tilde{\bmg}(\bme_\bma,\dot{\bmx} )\bmbeta^\sharp.
\]
In particular, setting $\bme_\bmzero=\dot{\bmx}/\varkappa$ recovers the conformal geodesic equation \eqref{CGEqn1}. Weyl propagation preserves the orthogonality of a  set of vectors. 
}
\end{remark}

\section{Timelike geodesics in the Minkowski spacetime as conformal geodesics}
\label{Section:TimelikeGeodesicsMinkowski}

In this section, we make use of Lemma \ref{Lemma:GeodesicsAsConformalGeodesics} to recast a specific class of metric geodesics in the Minkowski spacetime as conformal geodesics. This construction, in turn, is used to obtain a conformal representation of the neighbourhood of timelike infinity. 

\subsection{Basic definitions and equations}
In the following, let $(\mathbb{R}^{1,3},\tilde{\bmeta})$ denote the Minkowski spacetime. We consider the Minkowski line element in spherical
coordinates ---namely,
\begin{equation}
    \tilde{\bmeta} = \mathbf{d}t\otimes \mathbf{d}t - \mathbf{d}r \otimes
\mathbf{d} r - r^2 \slashed \bmsigma,
\label{Minkowski-spherical}
\end{equation}
with $\slashed \bmsigma$ the round metric on $\mathbb{S}^2$. The elements of the family of
curves given by
\begin{equation}
\gamma_{(r_\star,\theta_\star,\varphi_\star)}\equiv (s,
r_\star,\theta_\star,\varphi_\star), \qquad s\in \mathbb{R}
\label{TimelikeGeodesics}
\end{equation}
are standard (i.e. metric)  geodesics with tangent
$\bmx'(s)=\bmpartial_t$ ---in the following, the dash
$\phantom{i}^\prime$ will denote differentiation with respect to the parameter $s$. When plotted in a Penrose diagram, these curves have
endpoints at the points $i^\pm$ ---future and past timelike infinity; see  Figure \ref{Timelike-geodesics-to-infinity}.

\begin{remark}
\label{Remark:BreakinLorentzInvariance}
{\em The choice of this particular congruence singles out the timelike direction $\bmpartial_t$ and is therefore not invariant under the full Lorentz group. Given
$$
\mathcal{D}\equiv (\bmpartial_t)^\perp
=\left\{\bmv \in T\mathbb{R}^{1,3}\; |\;  \tilde{\bmeta}(\bmpartial_t,\bmv)=0 \right\},
$$
a generic Lorentz transformation $\Lambda$ preserves orthogonality, but maps this distribution according to
$$
\Lambda(\mathcal{D})=(\Lambda\bmpartial_t)^\perp.
$$
For spatial rotations, $\Lambda\bmpartial_t=\bmpartial_t$, so that $\mathcal{D}$, and hence the congruence, is preserved. By contrast, a generic boost satisfies $\Lambda\bmpartial_t\neq\bmpartial_t$ and therefore maps the congruence to one associated with a different timelike direction. Generally, a generic Lorentz transformation would map the congruence $\gamma_{(r_\star,\theta_\star,\varphi_\star)}$ to a more general family of timelike geodesics with non-vanishing velocity as $s \to \infty$. Further discussion of the consequences of this observation is given in subsection \ref{Subsubsection:CauchyProjectiveExtension} and Section \ref{Section: projective-extnesion-Ti}. }
\end{remark}

\subsection{The reparametrisation}
\label{Subsection:Reparametrisation}
We follow the procedure outlined in Lemma 5.2 in \cite{CFEBook} to
cast the curves $\gamma_{(r_\star,\theta_\star,\varphi_\theta)}$ as
conformal geodesics. To this end, we consider a reparametrisation of
the form
\[
s\mapsto s(\tau),
\]
so that for a curve $x(s)$, one has that
\[
x(s) \mapsto x(\tau) = x(s(\tau)).
\]
In the following, we denote by $\dot{\phantom{X}}$ the derivative with
respect to the parameter $\tau$. In particular, for the tangent vector of the
curves $x(\tau)$, one writes $\dot{\bmx}(\tau)$. Then, one has that
$\dot{\bmx}=\dot{s}\bmx'$. The curve $x(\tau)$ is
supplemented by a 1-form $\bmbeta(\tau)$ using the Ansatz
\[
\bmbeta = \alpha(\tau) \bmx^{\prime\flat}.
\]
The functions $s(\tau)$ and $\alpha(\tau)$ satisfy the equations
\begin{subequations}
\begin{eqnarray}
  && \ddot{s} +\alpha \dot{s}^{ 2} =0, \label{ReparametrisationEquation1}\\
  && \dot{\alpha} = \frac{1}{2}\alpha^2 \dot{s}, \label{ReparametrisationEquation2}
\end{eqnarray}
\end{subequations}
where it has been used that
$\tilde{\bmeta}(\bmx^\prime,\bmx^\prime)=1$. The general solution to the above system of ordinary differential
equations is given by
\begin{subequations}
\begin{eqnarray}
&& s(\tau) -s_\star =
   \frac{2(\tau -\tau_\star)}{2\tau'_\star+\alpha_\star (\tau-\tau_\star)}, \label{GeneralReparametrisation1}\\
&& \alpha(\tau)-\alpha_\star  = \frac{\alpha^2_\star}{2\tau'_\star}(\tau-\tau_\star), \label{GeneralReparametrisation2}
\end{eqnarray}
\end{subequations}
where $s_\star$ is a fiduciary value of the affine parameter of the
geodesics and, moreover, we have set
\[
\tau_\star \equiv \tau(s_\star), \qquad \tau'_\star \equiv
\tau'(s_\star), \qquad \alpha_\star \equiv \alpha(\tau_\star).
\]
The derivation of this solution is provided in Appendix \ref{Appendix:Solving the reparametrisation equations}. For future reference, it is observed that the relation
\eqref{GeneralReparametrisation1} can be inverted to yield
\[
\tau-\tau_\star = \frac{2\tau'_\star(s-s_\star)}{2+ \alpha_\star
  (s_\star- s)}.
\]
From this it follows, using the L'Hopital rule, that
\[
\lim_{s\rightarrow \infty} (\tau-\tau_\star)=-\frac{2\tau'_\star}{\alpha_\star}. 
\]

\begin{remark}
    {\em A close inspection of the derivation of equations \eqref{GeneralReparametrisation1}-\eqref{GeneralReparametrisation2} shows that \emph{they are independent of the metric}. Thus, these relations are universal for vacuum spacetimes. This observation will be explored further elsewhere and opens the tantalising possibility of applying the methods discussed in this article to a wider class of spacetimes.}
\end{remark}

\subsubsection{Relation to the parameters of the canonical conformal
  factor}
It will be useful to relate the integration constants $\tau'_\star$
and $\alpha_\star$ to the parameters of the canonical conformal factor
$\Theta$ introduced in Proposition \ref{Lemma:ConformalGaussianGauge}. To this end, recall that for an Einstein spacetime, one has 
\begin{subequations}
    \begin{eqnarray}
        && \dot{\Theta}_\star =\langle \bmbeta_\star, \dot{\bmx}_\star \rangle
            \Theta_\star, \\
        && \Theta_\star \ddot{\Theta}_\star =\frac{\varkappa^2}{2}\tilde{\bmg}^\sharp(\bmbeta_\star,\bmbeta_\star)  
    \end{eqnarray}
    \label{Conformal-factor-data-vacuum}
\end{subequations}
---see Lemma \ref{Lemma:ConformalGaussianGauge}. It is also recalled that
\[
\bmg = \Theta^2 \tilde{\bmg}, \qquad \bmg^\sharp =\Theta^{-2} \tilde{\bmg}^\sharp,
\]
and that the parameters $s$ and $\tau$ are, respectively,
$\tilde{\bmg}$ and $\bmg$ proper times so that
\[
\tilde{\bmg}(\bmx',\bmx') =1, \qquad \bmg(\dot{\bmx},\dot{\bmx}) =\varkappa^2.
\]
Moreover, one also has that
\[
\bmx' = \tau' \dot{\bmx}. 
\]

\medskip
Making use of the above observations, one can compute
\begin{eqnarray*}
&& \langle \bmbeta, \dot{\bmx} \rangle = \langle \bmbeta,
   \frac{1}{\tau'} \bmx' \rangle \\
  && \phantom{\langle \bmbeta, \dot{\bmx} \rangle} = \langle \alpha
     \bmx^{\prime \flat},   \frac{1}{\tau'} \bmx'\rangle\\
  &&  \phantom{\langle \bmbeta, \dot{\bmx} \rangle} =
     \frac{\alpha}{\tau'} \tilde{\bmg}(\bmx',\bmx') =
     \frac{\alpha}{\tau'} . 
\end{eqnarray*}
In addition, observe that from
\[
1= \tilde{\bmg} (\bmx',\bmx') =\tau^{\prime 2} \tilde{\bmg}(\dot{\bmx},\dot{\bmx}),
\]
it follows that
\begin{eqnarray*}
&& \Theta^2 =\tau^{\prime 2} \Theta^2
   \tilde{\bmg}(\dot{\bmx},\dot{\bmx}) \\
&& \phantom{\Theta^2 } = \tau^{\prime 2} \bmg(\dot{\bmx},\dot{\bmx})
   =\varkappa^2\tau^{\prime 2}.  
\end{eqnarray*}
Accordingly, choosing the positive root, one concludes that
\[
\Theta = \varkappa \tau^\prime.
\]
It then follows from the equation $\dot{\Theta}_{\star} =\langle \bmbeta_{\star},
\dot{\bmx}_{\star}\rangle \Theta_{\star}$ that
\[
\dot{\Theta}_{\star} = \varkappa \alpha_{\star}.
\]
Finally, it is observed that
\begin{eqnarray*}
&& \tilde{\bmg}^\sharp(\bmbeta,\bmbeta) = \alpha^2
\tilde{\bmg}^\sharp(\bmx^{\prime\flat}, \bmx^{\prime\flat})\\
&& \phantom{ \tilde{\bmg}^\sharp(\bmbeta_\star,\bmbeta_\star)} =
\alpha^2 \tilde{\bmg}(\bmx',\bmx') =\alpha^2.
\end{eqnarray*}
Thus, the second constraint takes the form
\begin{eqnarray*}
  && \Theta_\star \ddot{\Theta}_\star = \frac{\varkappa^2}{2}\alpha^2_\star\\
  && \phantom{ \Theta_\star \ddot{\Theta}_\star} = \frac{1}{2}\dot{\Theta}_{\star}^2,
\end{eqnarray*}
so that, in fact, one has
\begin{equation}
    \ddot{\Theta}_{\star} =\frac{\dot{\Theta}_\star^2}{2\Theta_\star}.
    \label{Relation-ddotTheta-dotTheta}
\end{equation}

\medskip
One then has the following lemma:

\begin{lemma}[conformal factor data in terms of reparametrisation data]
  \label{Lemma:InitialData}
  One has that
  \[
\Theta_\star = \varkappa \tau^\prime_\star, \qquad \dot{\Theta}_\star =\varkappa 
\alpha_\star, \qquad \ddot{\Theta}_\star = \frac{\varkappa \alpha^2_\star}{2\tau^\prime_\star}.
\]
\end{lemma}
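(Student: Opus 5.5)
The plan is to collect the three identities already derived in the preceding discussion and evaluate them at $\tau=\tau_\star$. There are three steps.

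First, the relation $\Theta=\varkappa\tau'$. It comes from comparing the normalisations $\tilde{\bmg}(\bmx',\bmx')=1$ and $\bmg(\dot{\bmx},\dot{\bmx})=\varkappa^2$. We use $\bmx'=\tau'\dot{\bmx}$ and $\bmg=\Theta^2\tilde{\bmg}$, and choose the positive root, which is consistent with $\Theta>0$ in the physical region and with $\tau$ being future directed. Evaluating at $s=s_\star$ gives $\Theta_\star=\varkappa\tau'_\star$.

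Second, the constraint $\dot{\Theta}_\star=\langle\bmbeta_\star,\dot{\bmx}_\star\rangle\Theta_\star$ from Proposition \ref{Lemma:ConformalGaussianGauge}. Substituting the Ansatz $\bmbeta=\alpha\bmx'^{\flat}$ gives $\langle\bmbeta,\dot{\bmx}\rangle=\alpha/\tau'$. Combining this with the first step yields $\dot{\Theta}_\star=(\alpha_\star/\tau'_\star)\,\varkappa\tau'_\star=\varkappa\alpha_\star$.

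Third, the quadratic constraint. Since $\tilde{\bmg}^\sharp(\bmbeta,\bmbeta)=\alpha^2$, it reads $\Theta_\star\ddot{\Theta}_\star=\tfrac{1}{2}\varkappa^2\alpha_\star^2$, which is \eqref{Relation-ddotTheta-dotTheta}. Dividing by $\Theta_\star=\varkappa\tau'_\star\neq0$ gives $\ddot{\Theta}_\star=\varkappa\alpha_\star^2/(2\tau'_\star)$.

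As a consistency check, one can differentiate $\Theta=\varkappa\tau'$ with respect to $\tau$ using the explicit solution \eqref{GeneralReparametrisation1}-\eqref{GeneralReparametrisation2}. Since $\tau'=1/\dot s$ and $\dot s=4\tau'_\star/(2\tau'_\star+\alpha_\star(\tau-\tau_\star))^2$, one gets $\Theta=\varkappa(2\tau'_\star+\alpha_\star(\tau-\tau_\star))^2/(4\tau'_\star)$. This is exactly quadratic in $\tau$, with Taylor coefficients $\varkappa\tau'_\star$, $\varkappa\alpha_\star$ and $\varkappa\alpha_\star^2/(2\tau'_\star)$, matching the three claimed values. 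It also confirms that $\varkappa$ is constant along the curve, as Proposition \ref{Lemma:ConformalGaussianGauge} requires.

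The only delicate point is the sign choice in the square root, together with the requirement $\tau'_\star\neq0$. Both are fixed by taking $\tau$ to be an increasing function of $s$ on the physical portion of the curve.
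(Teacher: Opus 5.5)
Your argument is correct and is essentially the paper's own derivation: $\Theta=\varkappa\tau'$ comes from comparing the two normalisations, $\langle\bmbeta,\dot{\bmx}\rangle=\alpha/\tau'$ is inserted into the linear constraint, and $\tilde{\bmg}^\sharp(\bmbeta,\bmbeta)=\alpha^2$ is inserted into the quadratic one. Your explicit check $\Theta=\varkappa\bigl(2\tau'_\star+\alpha_\star(\tau-\tau_\star)\bigr)^2/(4\tau'_\star)$ is a nice addition not in the paper, though it presupposes, rather than confirms, that $\varkappa$ is constant along the curve.
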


\begin{remark}
{\em This last result shows the connection between the initial data specifying the reparametrisation of standard geodesics as conformal geodesics and shows how this reparametrisation naturally specifies a conformal extension.}
\end{remark}

\subsection{Geodesics starting from a Cauchy hypersurface}
\label{Section: Geodesics starting from a Cauchy hypersurface}
In order to gain some intuition, we first consider a reparametrisation of the congruence of geodesics \eqref{TimelikeGeodesics} where the data of the reparametrisation \eqref{GeneralReparametrisation1}-\eqref{GeneralReparametrisation2} are prescribed on the hypersurface $\mathcal{S}_\star$ described by the condition $t=0$ ---see Figure \ref{Timelike-geodesics-to-infinity}.

\begin{figure}[t]
\begin{center}
\includegraphics[width=0.3\textwidth]{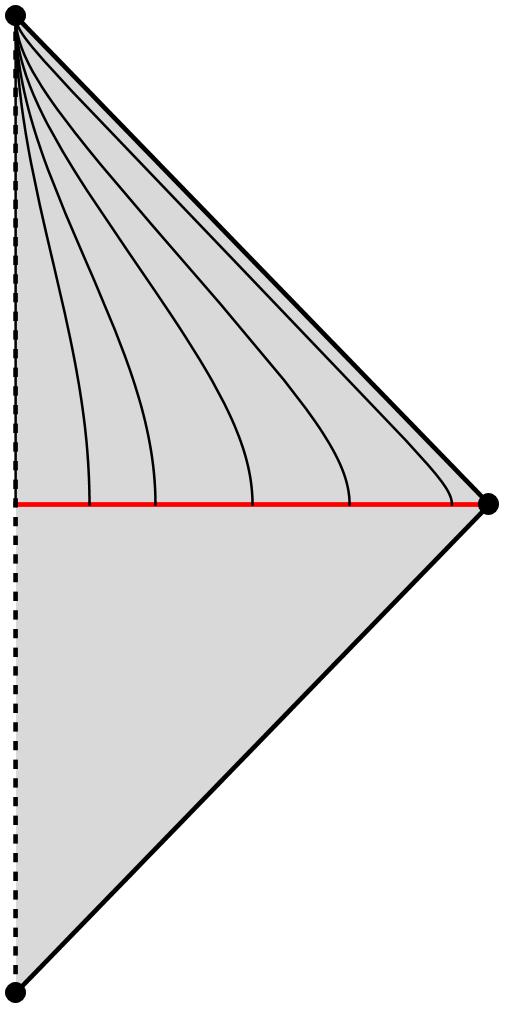}
\put(-50,190){$\scri^+$}
\put(-50,55){$\scri^-$}
\put(-125,260){$i^+$}
\put(-125,-10){$i^-$}
\put(-90,115){${\mathcal{S}_\star}$}
\end{center}
\caption{Penrose diagram of the Minkowski spacetime depicting members of the family of geodesics \eqref{TimelikeGeodesics} with tangent $\bmpartial_t$ starting from the standard Cauchy hypersurface $\mathcal{S}_\star$ given by the condition $t=0$.}
\label{Timelike-geodesics-to-infinity}
\end{figure}

\begin{figure}[t]
\begin{center}
\includegraphics[width=0.5\textwidth]{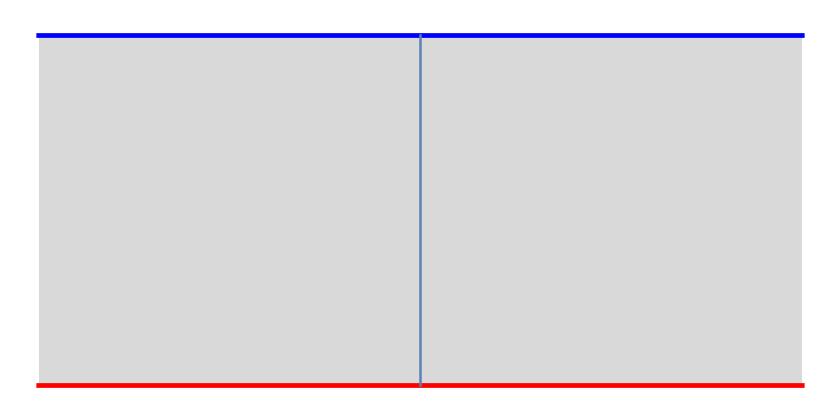}
\put(-150,-4
){${\mathcal{S}_\star}$}
\put(-150,104
){${\mathcal{S}^+}$}
\put(-100,50){$\gamma$}
\end{center}
\caption{The simple blow up of timelike infinity using geodesics starting from the  Cauchy hypersurface $\mathcal{S}_\star$ given by the condition $t=0$. The hypersurface $\mathcal{S}^+$ given by the condition $\tau=1$ is given by the (future) endpoints of the curves and, thus it describes, in some sense, (future) timelike infinity.}
\label{Simple-blow-up-cauchy-surface}
\end{figure}

\medskip
A particular case of
\eqref{GeneralReparametrisation1}-\eqref{GeneralReparametrisation2} is
given by
\[
s(\tau) =\frac{\tau}{1-\tau}, \qquad \alpha(\tau) = 2(\tau-1),
\]
which corresponds to the initial data 
\begin{equation}
    s_{\star} = 0, \qquad \tau_{\star}=0, \qquad \tau'_{\star}=1, \qquad \alpha_{\star} = -2,
    \label{Initial-data-Cauchy-hypersurface}
\end{equation}
and the choice $\varkappa=1$. Notice also that this solution satisfies $\alpha(1) =0$.
Moreover, $\tau \rightarrow 1$ as $s \rightarrow \infty$. The resulting family of conformal geodesics is given by
\[
x(\tau) =\left( \frac{\tau}{1-\tau}, r_\star,\theta_\star,\varphi_\star  \right), \qquad \tau\in[0,1]. 
\]
Thus, the reparametrisation has effectively introduced an end-point at $\tau=1$ representing the point at (future) infinity of the curve ---see Figure \ref{Simple-blow-up-cauchy-surface}. The parameter $\tau$ can be used as new time coordinate:
\[
t = \frac{\tau}{1-\tau} \Longleftrightarrow \tau =\frac{t}{1+t}.
\]
In particular, one has that
\[
\mathbf{d}t = \frac{1}{(1-\tau)^2}\mathbf{d}\tau. 
\]
It follows that the tangent to the conformal geodesics is given by
\begin{eqnarray*}
  && \bmx ' = \bmpartial_{s} = \bmpartial_{t}, \qquad \dot{\bmx} = \dot{t} \bmpartial_t = \frac{1}{(1-\tau)^2}\bmpartial_t, \\
  && \phantom{\bmx ' = \bmpartial_{s} = \bmpartial_{t}, \qquad \dot{\bmx}= \dot{t} \bmpartial_t } = \bmpartial_\tau,
\end{eqnarray*}
while the associated 1-form
\begin{eqnarray*}
  && \bmbeta = -\frac{2}{1+t}\mathbf{d}t \\
  && \phantom{\bmbeta} = -\frac{2}{1-\tau}\mathbf{d}\tau. 
\end{eqnarray*}

\subsubsection{The associated conformal factor}
One can now make use of Proposition \ref{Lemma:ConformalGaussianGauge} and Lemma \ref{Lemma:InitialData}
to construct a conformal factor associated to the congruence. This conformal factor is fixed by requiring $\tau$ to be the $\bmg$-proper time
along the curves ---that is, by setting $\varkappa=1$ in
Proposition \ref{Lemma:ConformalGaussianGauge}, so that
\[
\bmg(\dot{\bmx},\dot{\bmx}) = \Theta^2 \tilde{\bmg}(\dot{\bmx},\dot{\bmx}). 
\]
More precisely, one has that
\[
\Theta(\tau) = \Theta_\star + \dot{\Theta}_\star (\tau-\tau_\star) +
\frac{1}{2}\ddot{\Theta}_\star (\tau-\tau_\star)^2,
\]
with the coefficients $\Theta_\star$, $\dot{\Theta}_\star$ and
$\ddot{\Theta}_\star$ subject to the constraints
\[
\dot{\Theta}_\star =\langle \bmbeta_\star, \dot{\bmx}_\star \rangle
   \Theta_\star, \qquad
\Theta_\star \ddot{\Theta}_\star = \frac{1}{2}\tilde{\bmeta}^\sharp(\bmbeta_\star,\bmbeta_\star).
\]
From the initial data \eqref{Initial-data-Cauchy-hypersurface}, one has $\tau_\star=0$. Then, making use of Lemma \ref{Lemma:InitialData} with $\varkappa=1$, we have
$$
\Theta_\star =1, \qquad \dot{\Theta}_\star = -2, \qquad \ddot{\Theta}_\star =2.
$$
Therefore, in this case, no attempt is made to a conformal compactification of the spatial sections since $\Theta_\star =1$. The conformal factor corresponding to this initial data is then given by
\[
\Theta = (1-\tau)^2.
\]

\begin{remark}
{\em Observe that both $\Theta$ and $\mathbf{d}\Theta$ vanish for
  $\tau=1$. This property characterises timelike infinity in 
  Minkowski-like spacetimes. That is, one has that 
\[
\Theta|_{\tau=1}=0, \qquad \mathbf{d}\Theta|_{\tau=1}=0.
\]  
  }
\end{remark}

We also observe that
\[
\bmeta = \mathbf{d}\tau\otimes \mathbf{d}\tau - (1-\tau)^4 \bmdelta.
\]
From the above, it is clear that the constant $\tau$ slices are \emph{conformally flat}. The simple blow-up of timelike infinity introduced by this reparametrisation is illustrated in Figure~\ref{Simple-blow-up-cauchy-surface}.

\begin{remark}
{\em The discussion of the precedent paragraphs provide a \emph{blow-up} of timelike infinity $i^+$ in the sense that it sets in one-to-one correspondence points of the initial hypersurface $\mathcal{S}_\star$ with the endpoints of the congruence of orbits of the Killing vector $\bmpartial_t$. The latter set is denoted by $\mathcal{S}_+$ and corresponds to the locus of points for which $\tau=1$ ---see Figure \ref{Simple-blow-up-cauchy-surface}. Recall that in Penrose's traditional conformal extension of the Minkowski spacetime, the endpoint of all timelike geodesics is given by the point $i^+$. }
\end{remark}

\begin{remark}
{\em The construction in the previous paragraphs is based on a specific congruence of timelike geidesics ---namely, the integral curves of the timetranslation Killing vector $\bmpartial_t$. However, a similar construction can be implemented, \emph{mutatis mutandi}, with any other congurence of timelike geodesics. As it will be discussed in the next subsection, this observation is crucial to understand why our construction does not lead to a projective compactification.}    
\end{remark}

\subsubsection{Relation to projective extensions}
\label{Subsubsection:CauchyProjectiveExtension}

A natural question concerns the relation of the constructions in the precedent subsections to projective extensions. 

\medskip
Setting, for convenience,  $\varsigma\equiv 1-\tau$ so that $\varsigma\rightarrow 0$ as $t\rightarrow \infty$ one can write the Minkowski line element as
\[
\tilde\bmeta = \frac{1}{\varsigma^4}\mathbf{d}\varsigma\otimes \mathbf{d}\varsigma - \delta_{\alpha\beta}\mathbf{d}x^\alpha\otimes \mathbf{d}x^\beta.
\]
The only non-zero Christoffel symbol is then given by
\[
\tilde\Gamma_\varsigma{}^\varsigma{}_\varsigma =-\frac{2}{\varsigma}.
\]
We make use of the transformation formula \eqref{ProjectiveConnection} with $\varpi = \mathbf{d}(\ln \varsigma )$ to define a projectively related connection $\grave{\bmnabla}$. One has that 
\[
\grave\Gamma_{\varsigma}{}^{\varsigma}{}_{\varsigma} = 0,
\]
however, one also has that 
\[
\grave\Gamma_\varsigma{}^\alpha{}_\beta = \frac{1}{\varsigma}\delta_\beta{}^\upalpha
\]
which is clearly singular at $\varsigma=0$. Consequently, the boundary defining function $\varsigma=1-\tau$ associated to the constructions of the previous subsections cannot be used, in a natural manner, to obtain a regular projective extension. 

\begin{remark}
{\em This problem arises form the fact that our construction hinges on a specific family of timelike geodesics ---namely those for which $x^\alpha(t)/t \rightarrow 0$ for $t\rightarrow \infty$. In the remainder of this discussion, we slightly abuse notation and redefine the boundary defining function by setting $\varsigma = 1/t$. With this convention, a generic timelike geodesic in the Minkowski spacetime is of the form
\[
\big( t, x^\alpha(t) \big) = ( t, x^\alpha_\star + v^\alpha_\star t) 
\]
with $x^\alpha_\star$, $v^\alpha_\star$ constants. Thus, as $\varsigma \rightarrow 0$ one has that $x^\alpha(t)$ behaves like $v^\alpha_\star/\varsigma$,  which generically diverges. In other words, although $\varsigma\rightarrow 0$ along a generic timelike geodesic, the spatial coordinates $x^\alpha(\varsigma)$ diverge, so that the coordinates $(\varsigma,x^\alpha(\varsigma))$ do not provide a finite coordinate representation of its endpoint at the boundary.
}
\end{remark}


In order to construct a regular projective extension, consider new spatial coordinates $\underline{y}=(y^\alpha)$ given by
\begin{equation}
    y^\alpha \equiv \varsigma x^\alpha=\frac{x^\alpha}{t}.
\end{equation}
Along a generic timelike geodesic, these coordinates satisfy
\begin{equation}
    y^\alpha=\frac{x^\alpha_\star}{t}+v^\alpha_\star
    \rightarrow v^\alpha_\star,
    \qquad t\rightarrow\infty,
\end{equation}
so that the limiting values of $y^\alpha$ encode the asymptotic spatial velocity of the geodesic. In these coordinates, a generic timelike geodesic is of the form
\[
\big( t, y^\alpha(t) \big) = \Big( t, \frac{x^\alpha_\star}{t} + v^\alpha_\star \Big).
\]
The Minkowski line element can be written in terms of $(\varsigma,y^\alpha)$ as
\begin{equation}
\tilde \bmeta = \frac{1}{\varsigma^4}(1-|y|^2) \mathbf{d}\varsigma \otimes\mathbf{d}\varsigma + \frac{1}{\varsigma^3}\delta_{\alpha\beta}y^\beta ( \mathbf{d}\varsigma \otimes \mathbf{d}y^\alpha + \mathbf{d}y^\alpha \otimes \mathbf{d}\varsigma) -\frac{1}{\varsigma^2}\delta_{\alpha\beta}\mathbf{d}y^\alpha \otimes \mathbf{d}y^\beta, 
\label{MinkowskiRescaledSpatialCoordinates}
\end{equation}
where $|y|^2\equiv \delta_{\alpha\beta}y^\alpha y^\beta$. Observe that in order to keep the Lorentzian signature of the metric one requires that $|y|^2 <1$. This expression of the Minkowski line element suggests that, in order to obtain a projective compactification, one should introduce rescaled spatial coordinates which remain finite along generic timelike geodesics as $t\rightarrow\infty$.

\medskip
The non-vanishing Christoffel symbols of the line element \eqref{MinkowskiRescaledSpatialCoordinates} are 
\[
\tilde{\Gamma}_\varsigma{}^\varsigma{}_\varsigma =-\frac{1}{\varsigma}, \qquad \tilde{\Gamma}_\varsigma{}^\alpha{}_\beta = \tilde{\Gamma}_\beta{}^\alpha{}_\varsigma = -\frac{1}{\varsigma}\delta_\beta{}^\alpha.  
\]
Proceeding as before and using the transformation formula \eqref{ProjectiveConnection} with $\varpi = \mathbf{d}(\ln \varsigma)$, one finds that 
\[
\grave{\Gamma}_\mu{}^\lambda{}_\nu =0.
\]
In other words, the extended connection $\grave{\bmnabla}$ is the flat affine connection which trivially extends smoothly through the boundary defined by the condition $\varsigma=0$. Expressed in terms of the coordinates $(\varsigma,y^\alpha)$, a generic timelike geodesic $(\varsigma,y^\alpha(\varsigma))$ has the limit $(0,v^\alpha_\star)$.

\section{Conformal Gaussian systems form hyperboloids and \texorpdfstring{$\mathscr{I}$}{I}-fixing} 
\label{Section:Conformal Gaussian Systems}

In this section, we discuss the construction of conformal Gaussian systems for the Minkowski spacetime from a reference hyperboloidal hypersurface. These constructions will give rise not only to a representation of future timelike infinity in which this subset of the conformal boundary is described by an extended set (i.e. no longer a point) but also will implement $\mathscr{I}$-fixing ---i.e. a gauge for which future null infinity is described as the locus of points for which a spatial coordinate vanishes.

\subsection{Hyperboloids in the Minkowski spacetime}
We begin by making a brief review of the construction of hyperboloidal hypersurfaces (hyperboloids for short) in the Minkowski spacetime by means of height functions. The
approach follows the discussion in \cite{Zen08b,Zen11a}. 

\medskip
Given the standard Minkowski metric in spherical coordinates,
we introduce a new time coordinate $\psi$ via a height function
$f(r)$. Namely, we set
\begin{equation}
\psi \equiv t -f(r).
\label{Definition:HeightFunction}
\end{equation}
A natural choice for the height function is
\begin{equation}
    f(r) =\sqrt{\fraka^2 + r^2}, 
    \label{Choice-height-function}
\end{equation}
where $\fraka$ is a positive constant ---it is, however, convenient to keep the following discussion as general as possible. Observe that the new coordinate $\psi$ preserves the time translation Killing vector in the sense that $\bmpartial_\psi = \bmpartial_t$. Notice, however, that
\[
\mathbf{d}t = \mathbf{d}\psi + f'(r)\mathbf{d}r,
\]
where in this section ${}^\prime$ denotes differentiation with respect
to the argument of the function. One thus has that
\begin{equation}
    \tilde{\bmeta} =\mathbf{d}\psi\otimes \mathbf{d}\psi + f'(r) (\mathbf{d}\psi \otimes \mathbf{d}r + \mathbf{d}r\otimes \mathbf{d}\psi) -\big(1-f^{\prime 2}(r)\big)\mathbf{d}r \otimes \mathbf{d}r -r^2 \slashed \bmsigma.
    \label{physical-metric-psi-r}
\end{equation}
From the above, the intrinsic metric of the hypersurfaces of constant $\psi$ is given by
\[
\tilde{\bmh} = -\big(1-f^{\prime 2}(r)\big)\mathbf{d}r \otimes \mathbf{d}r -r^2 \slashed \bmsigma.
\]

\begin{remark}
{\em The hypersurfaces of constant coordinate $\psi$ as given by equation \eqref{Definition:HeightFunction} with the choice \eqref{Choice-height-function} for $f(r)$ are examples of constant mean curvature (CMC) hyperboloids. Indeed, one can readily verify the relation 
\[
(t-\psi)^2 -r^2 =\fraka^2.
\]
It can be verified that the hypersurfaces described by the above condition are umbilical ---that is, their extrinsic curvature is proportional to the intrinsic metric. Accordingly, the extrinsic curvature is described entirely in terms of the mean curvature
\[
\tilde{K} = \frac{3}{\fraka}. 
\]
For fixed value of the constant $\fraka$, different values of $\psi$ gives a foliation of CMC hyperboloids. Some examples of hyperboloidal hypersurfaces for various values of $\fraka$ and $\psi$ are shown in Figure \ref{Constant-psi-hypersurfaces}. The CMC hyperboloid with $\fraka=1$ and $\psi=0$ is known as the \emph{standard unit hyperboloid.} }
\end{remark}

\begin{remark}
{\em Recall that the (negative definite) metric of the $3$-dimensional hyperbolic space $\mathbb{H}^3$ with curvature radius $\fraka>0$ can be written as
\begin{equation*}
    \tilde{\bmh}_{\mathbb{H}^3} = - \fraka^2 \big(\mathbf{d} \varrho \otimes \mathbf{d} \varrho +  \sinh^{2} \varrho \slashed \bmsigma\big).
\end{equation*}
The metric $\tilde{\bmh}$ is isometric to $\tilde{\bmh}_{\mathbb{H}^3}$ if there exists a coordinate transformation $r \mapsto \varrho(r)$ such that
\begin{equation}
    \sqrt{1-f^{\prime 2}(r)} \mathbf{d}r= \fraka \mathbf{d}\varrho, \qquad r = \fraka \sinh \varrho.
    \label{Change-coordinates-r-to-varrho}
\end{equation}
In other words, the metric $\tilde{\bmh}$ is hyperbolic if the height function $f(r)$ satisfies
\begin{equation}
    f'(r)^2 = \frac{r^2}{\fraka^2+r^2}.
    \label{Height-function-condition}
\end{equation}
This is indeed the case for $f(r)$ in equation \eqref{Choice-height-function} and motivates this specific choice of height function.}
\end{remark}

\begin{figure}[t]
\begin{center}
\includegraphics[width=0.3\textwidth]{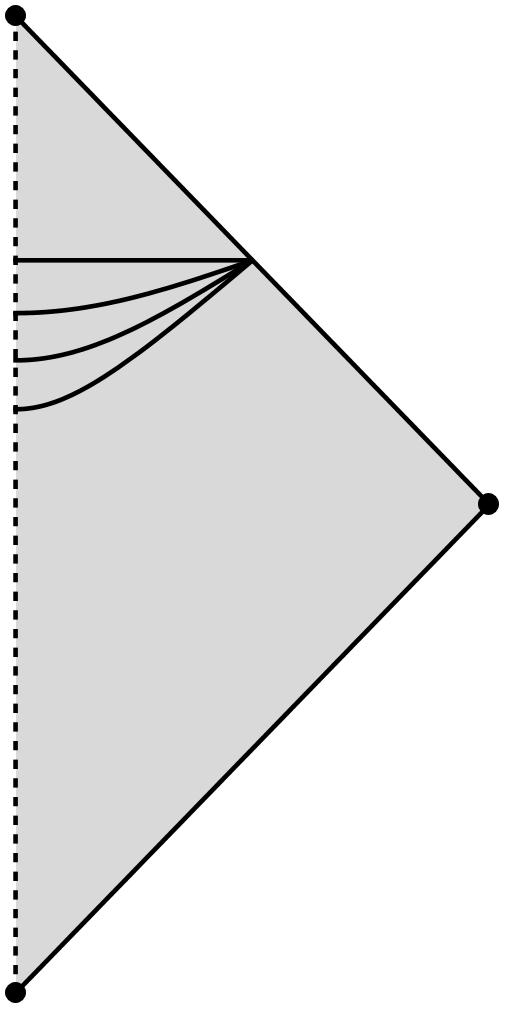}
\put(-50,190){$\scri^+$}
\put(-50,55){$\scri^-$}
\put(-125,260){$i^+$}
\put(-125,-10){$i^-$}
\put(-160, 145){$\fraka^2 =\frac{1}{10}$}
\put(-160, 160){$\fraka^2 =\frac{1}{4}$}
\put(-160, 175){$\fraka^2 =\frac{1}{2}$}
\put(-160, 190){$\fraka^2 =1$}
\end{center}
\caption{Penrose diagram of the Minkowski spacetime describing the location of CMC hyperboloids for $\psi=0$ and various choices of the constant $\fraka$. This diagram is qualitatively accurate.}
\label{Constant-psi-hypersurfaces}
\end{figure}

\begin{figure}[t]
\begin{center}
\includegraphics[width=0.3\textwidth]{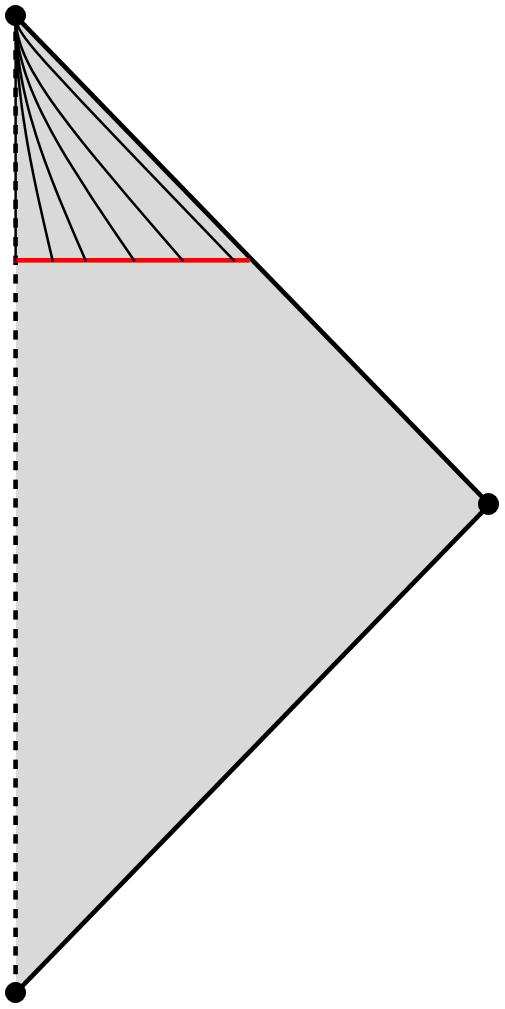}
\put(-50,190){$\scri^+$}
\put(-50,55){$\scri^-$}
\put(-125,260){$i^+$}
\put(-125,-10){$i^-$}
\put(-100,175){${\mathcal{H}_\star}$}
\end{center}
\caption{Penrose diagram of the geodesics starting from a hyperboloidal hypersurface.}
\label{Reparameterisation-from-hyperboloid}
\end{figure}

\subsubsection{Compactification into the interior of \texorpdfstring{$\mathbb{S}^2$}{S2} ---the Poincar\'e ball model} 
\label{Section:Compactification-2-Sphere}
Since $\tilde{\bmh} \equiv \tilde{\bmh}_{{\mathbb{H}^3}}$ with $r \mapsto \varrho{(r)}$ and $f(r)$ satisfying \eqref{Height-function-condition}, we can work directly with $\tilde{\bmh}_{{\mathbb{H}^3}}$ to show that ${\mathbb{H}^3}$ can be compactified into the interior of the unit ball $\mathbb{B}^{3} \subset \mathbb{R}^{3}$ with a conformal boundary $\mathbb{S}^2$. To this end, we write the flat Euclidean metric as
\[
\bmdelta = - \big(\mathbf{d}\chi \otimes \mathbf{d}\chi + \chi^2 \slashed\bmsigma\big).
\]
Now, consider a conformal factor $\varpi$ such that
\begin{eqnarray*}
    && \bmdelta = \varpi^2 \tilde{\bmh}_{{\mathbb{H}^3}}, \\
    && \phantom{\bmdelta } = - \fraka^2 \big(\varpi^{2}  \mathbf{d} \varrho \otimes \mathbf{d} \varrho +\varpi^{2} \sinh^{2} \varrho \slashed \bmsigma\big).
\end{eqnarray*}
From the above, it is clear that we are looking for a conformal factor $\varpi$ and a coordinate $\chi$ such that
\begin{equation}
    \mathbf{d} \chi = \fraka \varpi \mathbf{d} \varrho, \qquad \chi = \fraka \varpi \sinh{\varrho}.
    \label{change-of-coordinates-varrho-to-chi}
\end{equation}
It follows from these conditions, after some manipulations, that 
\begin{equation*}
    \chi'(\varrho)= \frac{\chi(\varrho)}{\sinh{\varrho}},
\end{equation*}
which can be solved to give
\begin{equation*}
    \chi(\varrho)= \tanh{\frac{\varrho}{2}}
\end{equation*}
where the integration constant was fixed so that $ \chi \to 1$ as $\varrho \to \infty$.
 With this, we also have
\begin{equation*}
    \varpi = \frac{1}{\fraka+\fraka \cosh{\varrho}},
\end{equation*}
which, in turn, shows that
$\varpi \to 0$ as  $\varrho \to \infty$.
 From \eqref{Change-coordinates-r-to-varrho} and \eqref{change-of-coordinates-varrho-to-chi}, we have
\begin{equation}
    r = \frac{2 \fraka \chi}{1-\chi^2} \implies \mathbf{d} r =
\frac{2\fraka(1+\chi^2)}{(1-\chi^2)^2}\mathbf{d}\chi. 
\label{RToChi}
\end{equation}
We can also write $\varpi$ and $f'(r)$ as a function of $\chi$ so as to obtain 
\[
\varpi =\frac{(1-\chi^2)}{2\fraka}, \qquad f'(\chi)= \frac{2 \chi}{1+\chi^2}.
\]  

\medskip
The expressions in the previous paragraphs can be used to express the
Minkowski line element in terms of the coordinates $(\psi,\chi)$. One
finds that
\begin{equation}
\tilde{\bmeta} = \mathbf{d}\psi \otimes \mathbf{d} \psi
+\frac{4\fraka\chi}{(1-\chi^2)^2}(\mathbf{d}\psi \otimes
\mathbf{d}\chi + \mathbf{d}\chi \otimes \mathbf{d}\psi ) -\varpi^{-2}
\big( \mathbf{d}\chi\otimes \mathbf{d}\chi + \chi^2 \slashed\bmsigma
\big).
 \label{PhysicalMinkowski:psi-chiCoordinates}
\end{equation}

\begin{remark}
{\em The conformal rescaling in the previous paragraphs relating the metric $\tilde{\bmh}_{\mathbb{H}^3}$ to the flat metric in the unit ball $\mathbb{B}^3$ is essentially the so-called Poincar\'e ball model of hyperbolic geometry --- see \cite[Theorem~3.7(c)]{Lee2018}.}    
\end{remark}

\subsubsection{An alternative spatial compactification ---Zenginoglu's gauge}
\label{Section:Another-spatial-compactification}
In this subsection, we discuss an alternative spatial compactification which does not involve a conformal rescaling of the intrinsic metric to a portion of the Euclidean space. This approach is the one used most often used in Numerical Relativity ---see e.g.  \cite{Zen08b,Zen11a}. In what follows we will refer to it as \emph{Zenginoglu's gauge}.

\medskip
In the following, consider a new coordinate $\rho$ via the relation
\[
r= \frac{\rho}{\Omega(\rho)},
\]
where $\Omega$ is a boundary defining function. A simple choice is given by
\begin{equation}
    \Omega = 1-\rho,
    \label{Choice-of-Omega}
\end{equation}
so that $\mathbf{d}\Omega =-1\neq 0$ at $\rho=1$. Nevertheless, as before, it is better to keep the discussion general. For convenience, define
\[
F(\rho) \equiv f \left( \frac{\rho}{\Omega(\rho)}\right), \qquad
F'(\rho) \equiv \frac{\mathrm{d}F}{\mathrm{d}\rho}.
\]
Observing that
\[
\frac{\mathrm{d}}{\mathrm{d}\rho} \left(
  \frac{\rho}{\Omega(\rho)}\right) = \frac{\Omega -\rho \Omega'}{\Omega^2},
\]
one concludes that
\[
\mathrm{d}r = \left(  \frac{\Omega -\rho \Omega'}{\Omega^2} \right) \mathrm{d}\rho,
\]
and, moreover, using the chain rule, that
\[
f'(r) =\left( \frac{\Omega^2}{\Omega-\rho \Omega'} \right)F'(\rho).
\]

\medskip
Making use of the above expressions, the line Minkowski line element takes
the form
\begin{equation}
\tilde{\bmeta} = \mathbf{d}\psi\otimes \mathbf{d}\psi + F'(\rho) (
\mathbf{d}\psi \otimes \mathbf{d}\rho + \mathbf{d}\rho \otimes
\mathbf{d}\psi) -\left( \left(\frac{\Omega -\rho \Omega'}{\Omega^2} \right)^2-F^{\prime 2}(\rho) \right) \mathbf{d}\rho\otimes \mathbf{d}\rho
-\left( \frac{\rho}{\Omega}\right)^2 \slashed\bmsigma.
\label{Minkowski:Zenginoglu}
\end{equation}
Writing  
$$
G(\rho) \equiv \frac{(\Omega-\rho \Omega')^2}{\Omega^4},
$$
then the contravariant (i.e. inverse) metric $\tilde{\bmeta}^{\sharp}$ can be written as
\begin{equation*}
    \tilde{\bmeta}^{\sharp} = \left(1-\frac{F'^2(\rho)}{G(\rho)} \right) \bmpartial_{\psi}\otimes \bmpartial_{\psi} + \frac{F'(\rho)}{G(\rho)} (\bmpartial_{\psi} \otimes \bmpartial_{\rho} + \bmpartial_{\rho} \otimes \bmpartial_{\psi}) - \frac{1}{G(\rho)} \bmpartial_{\rho} \otimes \bmpartial_{\rho} - \left( \frac{\Omega}{\rho} \right)^2 \slashed{\bmsigma}^{\sharp}.
\end{equation*}
Both $\tilde{\bmeta}$ and $\tilde{\bmeta}^{\sharp}$ can be explicitly computed for the choice of $\Omega$ and $f(r)$ in equations \eqref{Choice-of-Omega} and \eqref{Choice-height-function}.

\medskip
In the present case, the intrinsic metric of the hypersurfaces of constant coordinate $\psi$ is given by
\[
\tilde{\bmh} = -\left( \left(\frac{\Omega -\rho \Omega'}{\Omega^2} \right)^2-F^{\prime 2}(\rho) \right) \mathbf{d}\rho\otimes \mathbf{d}\rho
-\left( \frac{\rho}{\Omega}\right)^2 \slashed\bmsigma.
\]
Following the discussion in \cite{Zen08b}, we introduce a conformal rescaling of $\tilde{\bmh}$ via
\begin{equation}
    \bmh = \Omega^2 \tilde{\bmh}
    \label{Another-spatial-compactification}
\end{equation}
so that 
\begin{equation*}
    \bmh = -\left( \frac{(\Omega -\rho \Omega')^2}{\Omega^2}
    -\Omega^2 F^{\prime 2}(\rho) \right)
    \mathbf{d}\rho\otimes \mathbf{d}\rho
    - \rho^2 \slashed\bmsigma.
\end{equation*}
If $\bmh$ is to admit a regular extension to $\rho=1$, the potentially divergent term
in the radial part of the line element must cancel out. More precisely, it
is sufficient to require
\begin{equation*}
    (\Omega-\rho\Omega')^2-\Omega^4F'(\rho)^2
    =O(\Omega^2) \qquad \text{as } \rho\to1.
\end{equation*}
In this way the coefficient of $\mathbf{d}\rho \otimes \mathbf{d} \rho$ approaches a finite, non-zero value at $\rho=1$. For the
explicit choices of $\Omega$ and $f(r)$ in
\eqref{Choice-of-Omega} and \eqref{Choice-height-function}, this
condition is indeed satisfied.

\begin{remark}
{\em Note that $\tilde{\bmh}=
\tilde{\bmh}_{\mathbb{H}^3}$. Therefore, by construction,
\[
\bmh=\Omega^2\tilde{\bmh}_{\mathbb{H}^3},
\]
so that $\bmh$ is conformal to the hyperbolic metric. Provided that the
radial coefficient of $\bmh$ admits a finite and non-vanishing limit
at $\rho=1$, the metric $\bmh$ extends regularly to the compactified
hypersurface, whose boundary is represented by the $2$-sphere
$\rho=1$. Thus, the transformation
\eqref{Another-spatial-compactification} gives a conformal
compactification of the hyperboloids with constant coordinate $\psi$.}
\end{remark}

\subsection{A gauge system using the Poincar\'e ball model of hyperbolic space}
\label{Subsection:BeltramiKlein}

In Subsection \ref{Section:Compactification-2-Sphere}, it has been shown that the Minkowski metric can be written in the form \eqref{PhysicalMinkowski:psi-chiCoordinates}. This form of the Minkowski line element clearly brings to the forefront the relation of the hypersurfaces of constant coordinate $\psi$ with the Poincar\'e ball model of hyperbolic geometry. For this line element the conformal factor $\varpi$ plays the role of a boundary defining function which vanishes for $\chi=1$. Intuitively, the associated boundary corresponds to sections of future null infinity. 

\medskip
Proceeding by analogy to Subsection \ref{Section: Geodesics starting from a Cauchy hypersurface}, in order to implement a conformal Gaussian system we consider the reparametrisation
\begin{equation}
    \psi = \frac{\tau}{1-\tau}, \qquad \mathbf{d}\psi = \frac{\mathbf{d}\tau}{(1-\tau)^2}. 
    \label{Reparameterisation-psi-tau}
\end{equation}
The above reparametrisation follows from the choice 
\begin{equation}
\tau_\star=0, \quad \psi_\star=0, \quad \alpha_\star=-2, \quad \tau'_\star =1,
\label{PoincareReparametrisationParameters}
\end{equation}
in formula \eqref{GeneralReparametrisation1} with $\psi$ playing the role of the parameter $s$. This renders a Minkowski line element of the form
\begin{eqnarray}
&& \tilde{\bmeta} = \frac{1}{(1-\tau)^4}\mathbf{d}\tau \otimes \mathbf{d} \tau \nonumber \\
&& \hspace{2cm}+\frac{4\fraka\chi}{(1-\tau)^2(1-\chi^2)^2}(\mathbf{d}\tau \otimes
\mathbf{d}\chi+
 \mathbf{d}\chi \otimes \mathbf{d}\tau )\nonumber \\
&& \hspace{4cm} -\frac{4\fraka^2}{(1-\chi^2)^2}
\big( \mathbf{d}\chi\otimes \mathbf{d}\chi + \chi^2 \slashed\bmsigma
\big).
\label{Minkowski:BeltramiKlein}
\end{eqnarray}
\begin{remark}
    {\em Note that in terms of $(\tau,\chi,\theta,\varphi)$ coordinates, the congruence of conformal geodesics corresponding to $\gamma_{(r_\star,\theta_\star,\varphi_\star)}$ introduced in \eqref{TimelikeGeodesics} would be written as
    \begin{equation*}
        x(\tau) = \left( \frac{\tau}{1-\tau}, \chi_{\star}, \theta_\star,\varphi_\star \right).
    \end{equation*}
    Then, the tangent vector $\dot{\bmx}(\tau)$ and the 1-form $\bmbeta(\tau)$ satisfying the conformal geodesic equations are given by
    \begin{eqnarray*}
        && \dot{\bmx}(\tau) = \bmpartial_{\tau}, \qquad \bmbeta(\tau) = - \frac{2}{1-\tau} \mathbf{d} \tau - \frac{8 \fraka \chi (1-\tau)}{1-\chi^2} \mathbf{d} \chi.
    \end{eqnarray*}}
\end{remark}
Next, we consider the rescaling
\[
\bmeta = \Theta^2 \tilde\bmeta,
\]
with
\begin{eqnarray}
&& \Theta \equiv \vartheta \varpi \nonumber \\
&& \phantom{\vartheta} = \frac{(1-\chi^2)}{2\fraka}(1-\tau)^2, \label{ConformalFactor:Beltrami-KleinCGSystem}
\end{eqnarray}
where $\vartheta\equiv (1-\tau)^2$. This conformal factor follows from Proposition \ref{Proposition:Canonical-conformal-factor} consistent with the choice of reparametrisation as given by \eqref{PoincareReparametrisationParameters} and Lemma \ref{Lemma:InitialData}. One readily finds that 
\begin{equation}
\bmeta = \frac{(1-\chi^2)^2}{4\fraka^2} \mathbf{d}\tau \otimes \mathbf{d}\tau + \frac{1}{\fraka}\chi (1-\tau)^2 \big( \mathbf{d}\tau \otimes \mathbf{d}\chi + \mathbf{d}\chi\otimes\mathbf{d}\tau \big)-(1-\tau)^4 \big(\mathbf{d}\chi \otimes \mathbf{d}\chi +\chi^2\slashed\bmsigma  \big).
\label{ConformalMinkowski:B-K}
\end{equation}
In particular, observe that 
\[
\bmeta(\bmpartial_\tau,\bmpartial_\tau) = \varpi^2.
\]
Moreover, it can be seen that
\begin{eqnarray}
    && \bmeta^\sharp = \frac{4\fraka^2}{(1+\chi^2)^2}\bmpartial_\tau\otimes\bmpartial_\tau + \frac{4\fraka \chi }{(1-\tau)^2(1+\chi^2)^2} \big(\bmpartial_\tau \otimes\bmpartial_\chi + \bmpartial_\chi \otimes \bmpartial_\tau  \big) \nonumber \\
    && \phantom{\bmeta^\sharp = \frac{4\fraka^2}{(1+\chi^2)^2}\bmpartial_\tau\otimes\bmpartial_\tau + \qquad } - \frac{(1-\chi^2)^2}{(1+\chi^2)^2(1-\tau)^4}\bmpartial_\chi\otimes\bmpartial_\chi - \frac{1}{\chi^2(1-\tau)^4}\slashed\bmsigma^\sharp.  \qquad 
    \label{EtaSharp}
\end{eqnarray}

\medskip
In the following, we consider the line element given by \eqref{ConformalMinkowski:B-K}. We regard this as a metric defined on the domain
\[
\mathscr{D}\equiv \{ (\tau,\chi,\theta^{\mathcal{A}})\in [0,1]\times [0,1] \times \mathbb{S}^2 \}
\]
which, we argue, provides a conformal extension of the future domain of dependence $D^+(\tilde{\mathcal{H}}_\star)$ of the standard unit hyperboloid $\tilde{\mathcal{H}}_\star$ in the Minkowski spacetime given by
\[
\tilde{\mathcal{H}}_\star \equiv \{p\in \mathbb{R}^{1,3} \;|\; \psi(p) =0 \}
\]
---i.e. the standard unit hyperboloid; see Figure \ref{Timelike-geodesic-diagram}. An appealing feature of this representation is that the spatial sections (i.e. hypersurfaces of constant $\tau$) are conformally flat. In the conformal representation of $D^+(\tilde{\mathcal{H}}_\star)$ given by $\mathscr{D}$, the conformal factor $\Theta$ as given by \eqref{ConformalFactor:Beltrami-KleinCGSystem} vanishes at points described by the conditions $\chi=1$ and/or $\tau=1$. Accordingly, we define the sets
\begin{eqnarray*}
&& \mathscr{I}^+ \equiv \{ p\in \mathscr{D}\;|\; \chi(p)=1 \},\\
&& \mathcal{H}^+ \equiv \{p \in \mathscr{D}\; |\; \tau(p)=1 \}, \\
&& \mathcal{C}^+\equiv \{ p \in \mathscr{D} \;|\; \chi(p)=1, \; \tau(p)=1\},
\end{eqnarray*}
corresponding, respectively to \emph{future null infinity}, the \emph{hyperboloid at future timelike infinity} and the \emph{final cut of future null infinity} ---see Figure \ref{Scri-fixing-region}. Strictly speaking what is described above as future null infinity is, in fact, the portion of it which is the future of the set 
\[
\mathcal{H}_\star \equiv \{ p\in \mathcal{D} \;|\; \tau(p)=0 \}
\]
---i.e. the conformal extension of $\tilde{\mathcal{H}}_\star$ which is sometimes called the \emph{unphysical unit hyperboloid}.

\begin{figure}[t]
\begin{center}
\includegraphics[width=0.3\textwidth]{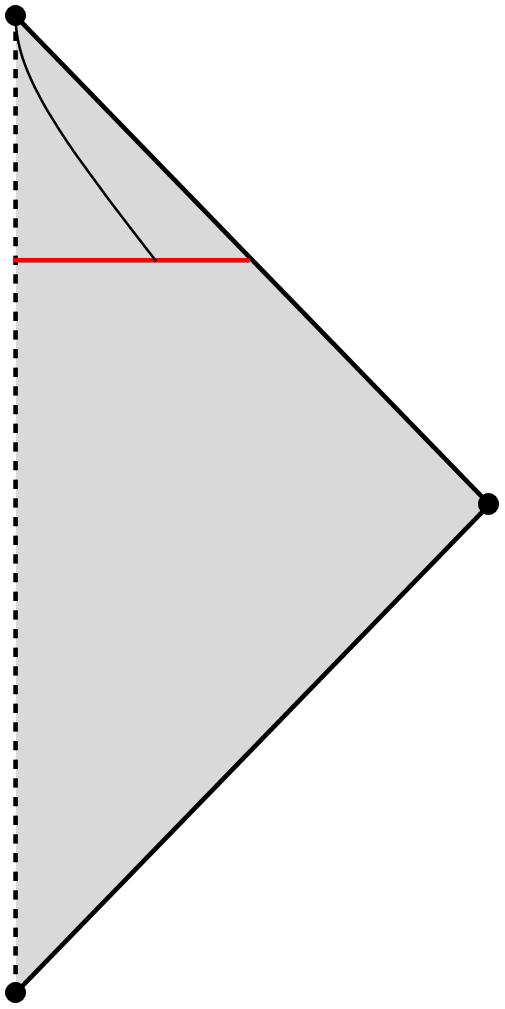}
\put(-50,190){$\scri^+$}
\put(-50,55){$\scri^-$}
\put(-125,260){$i^+$}
\put(-125,-10){$i^-$}
\put(-100,175){${\mathcal{H}_\star}$}
\end{center}
\caption{Penrose diagram of the future domain of dependence $D^+(\tilde{\mathcal{H}})$ of the standard unit hyperboloid $\tilde{\mathcal{H}}$ (red horizontal line). The diagram also depicts one of the curves belonging to the congruence of conformal geodesic ---namely the one starting at the location $\chi_\star =1/2$. This diagram is quantitatively accurate.}
\label{Timelike-geodesic-diagram}
\end{figure}

\begin{remark}
   { \em Observe that, as one would expect, one has that 
    \[
\bmeta(\bmpartial_\tau,\bmpartial_\tau)\big|_{\mathscr{I}^+}=0,
    \]
    so that the congruence of conformal geodesics becomes null at future null infinity. Furthermore, the pullback of the conformal metric to
    $\mathscr{I}^+$ is given by 
    \[
    \bmh_{\mathscr{I}^+}
    \equiv -(1-\tau)^4\slashed{\bmsigma}.
    \]
    Hence, $\bmh_{\mathscr{I}^+}$ is a rank-2 degenerate metric whose
    Kernel is generated by the vector $\bmpartial_\tau$. Consequently, as usual, $\mathscr{I}^+$ carries a \emph{Carrollian structure} ---see e.g. \cite{DuGibHor14,CiaLeiMarPetro19,CiaJai25}. At $\mathcal{C}^+$, the induced metric vanishes identically ---i.e. one has that $\bmh_{\mathscr{I}^+}\big|_{\tau=1}=0$, so that the Carrollian structure degenerates completely at the corner. }
\end{remark}

\begin{remark}
{ \em   One also has that 
    \[
    \bmeta|_{\tau=1} =\frac{1}{4\fraka^2}(1-\chi^2)^2 \mathbf{d}\tau \otimes \mathbf{d}\tau.
    \]
    This coincides with the discussion of Section \ref{Section: Geodesics starting from a Cauchy hypersurface}.
    }     
\end{remark}

\begin{remark}
{\em The construction in this section provides a realisation of what is known in the literature as \emph{$\mathscr{I}$-fixing} ---i.e. a gauge in which the location of $\mathscr{I}^+$ is given in terms of the locus of points for which a certain \emph{spatial coordinate} acquires a specific value. The latter, in turn, gives rise to a boundary defining function. Crucially, as discussed in Section \ref{Section:NonlinearPerturbations}, this feature of the gauge also holds for nonlinear perturbations of the Minkowski spacetime ---thus, enabling a framework for the study of perturbations of the the Minkowski spacetime by analytic and/or numerical methods with $\mathscr{I}$-fixing. }    
\end{remark}

\begin{figure}[t]
\begin{center}
\includegraphics[width=0.5\textwidth]{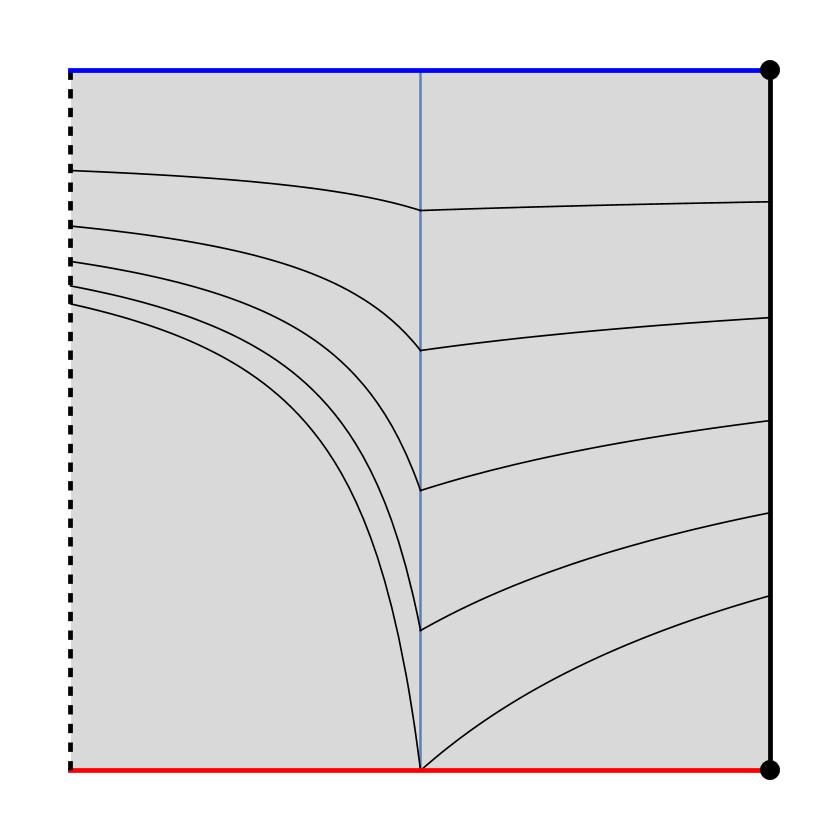}
\put(-12,100){$\scri^+$}
\put(-110,5
){${\mathcal{H}_\star}$}
\put(-110,202
){${\mathcal{H}^+}$}
\put(-12,202
){${\mathcal{C}^+}$}
\put(-12,5
){${\mathcal{C}_\star}$}
\end{center}
\caption{Diagram of the domain $\mathscr{D}$ with timelike infinity described in terms of the hyperboloid at timelike infinity $\mathcal{H}^+$. The light blue line corresponds to the timelike geodesic in Figure \ref{Timelike-geodesic-diagram} and the light black lines are the light cones at various events along this curve.}
\label{Scri-fixing-region}
\end{figure}

\subsection{The conformal Gaussian coordinates in Zenginoglu's gauge}
The construction described in the previous section can be readily adapted to the hyperboloidal hypersurfaces given in Subsection \ref{Section:Another-spatial-compactification}. As already mentioned, this approach is favoured in numerical studies. 

\medskip
Using the reparameterisation \eqref{Reparameterisation-psi-tau}, we can write $\tilde{\bmeta}$ as given by the line element \eqref{Minkowski:Zenginoglu}, in the form
\begin{eqnarray}
        && \tilde{\bmeta} = \frac{1}{(1-\tau)^4} \mathbf{d}\tau\otimes \mathbf{d}\tau + \frac{F'(\rho)}{(1-\tau)^2} ( \mathbf{d}\tau \otimes \mathbf{d}\rho + \mathbf{d}\rho \otimes \mathbf{d}\tau) \nonumber \\
        && \phantom{\tilde{\bmeta} = \frac{1}{(1-\tau)^4} \mathbf{d}\tau\otimes \mathbf{d}\tau  \frac{F'(\rho)}{(1)^2}}-\left( \left(\frac{\Omega -\rho \Omega'}{\Omega^2} \right)^2-F^{\prime 2}(\rho) \right) \mathbf{d}\rho\otimes \mathbf{d}\rho -\left( \frac{\rho}{\Omega}\right)^2 \slashed\bmsigma.
        \label{Physical-metric-conformal-Gaussian-gauge}
\end{eqnarray}
We can then introduce the conformal metric 
\begin{equation*}
    \bar{\bmeta} = \bar{\Theta}^2 \tilde{\bmeta},
\end{equation*}
where 
\begin{equation*}
    \bar{\Theta} \equiv \vartheta \Omega = (1-\tau)^2 (1-\rho), 
\end{equation*}
where we have used $\vartheta \equiv  (1-\tau)^2$. From the above, we have
\begin{multline*}
\bar{\bmeta} = \Omega^2 \mathbf{d}\tau\otimes \mathbf{d}\tau + F'(\rho) (1-\tau)^2 \Omega^2 ( \mathbf{d}\tau \otimes \mathbf{d}\rho + \mathbf{d}\rho \otimes \mathbf{d}\tau) \\
-\left( \frac{(\Omega -\rho \Omega')^2}{\Omega^2}-\Omega^2 F^{\prime 2}(\rho) \right) (1-\tau)^4 \mathbf{d}\rho\otimes \mathbf{d}\rho - \rho^2 (1-\tau)^4 \slashed\bmsigma.
\end{multline*}
Substituting the explicit expressions for $\Omega$ and $F'(\rho)$ in terms of the radial coordinate $\rho$ one finds that 
\begin{eqnarray}
&& \bar{\bmeta} =  (1-\rho)^2 \mathbf{d}\tau\otimes \mathbf{d}\tau + \frac{\rho (1-\tau)^2}{(1-\rho) \sqrt{\fraka^2 + \tfrac{\rho^2}{(1-\rho)^2}}} (
\mathbf{d}\tau \otimes \mathbf{d}\rho + \mathbf{d}\rho \otimes
\mathbf{d}\tau) \nonumber\\
&& \hspace{5cm}- \frac{\fraka^2 (1-\tau)^4}{\fraka^2 (1-\rho)^2+\rho^2} \mathbf{d}\rho\otimes \mathbf{d}\rho
- \rho^2 (1-\tau)^4 \slashed\bmsigma,
\label{Metric:ZenginogluGauge}
\end{eqnarray}
while the contravariant metric $\bar{\bmeta}^{\sharp}$ can be written as 
\begin{multline}
\bar{\bmeta}^{\sharp} = \frac{\fraka^2}{\fraka^2 (1-\rho)^2 +\rho^2} \bmpartial_{\tau} \otimes \bmpartial_{\tau} - \frac{\rho}{(1-\rho) (1-\tau)^2 \sqrt{\fraka^2 + \tfrac{\rho^2}{(1-\rho)^2}}} \left( \bmpartial_{\tau} \otimes \bmpartial_{\rho} + \bmpartial_{\rho} \otimes \bmpartial_{\tau}  \right) \\
- \frac{(1-\rho)^2}{(1-\tau)^4} \bmpartial_{\rho} \otimes \bmpartial_{\rho} - \frac{1}{\rho^2 (1-\tau)^4} \slashed\bmsigma^{\sharp}. \label{EtaBarSharp}
\end{multline}

\begin{remark}
{\em In this case, the metric on the constant $\tau$ hypersurface is given by --- compare with equation (9) in \cite{Zen08b}
\begin{equation*}
    \bar{\bmh} = - \frac{\fraka^2 (1-\tau)^4}{\fraka^2 (1-\rho)^2+\rho^2} \mathbf{d}\rho\otimes \mathbf{d}\rho
- \rho^2 (1-\tau)^4 \slashed\bmsigma.
\end{equation*}}
\end{remark}

Similar to the discussion in the previous section, one can see that 
\[
\bar{\bmeta}(\bmpartial_\tau,\bmpartial_\tau)|_{ \rho =1} =0,
\]
indicating that future null infinity corresponds to the locus of points for which $\rho=1$. Also, the metric on the hypersurfaces of constant coordinate $\rho$ is given by
\begin{equation*}
    \bar{\bmq} = \Omega^2 \mathbf{d}\tau\otimes \mathbf{d}\tau - \rho^2 (1-\tau)^4 \slashed\bmsigma,
\end{equation*}
which, for $\tau=1$, reduces to 
\[
\bar{\bmq}|_{\tau=1} = \Omega^2 \mathbf{d}\tau \otimes \mathbf{d}\tau.
\]
Therefore, the metric completely degenerates at the corner $\tau=1$, $\rho=1$. 

\begin{remark}
{\em Proceeding by analogy to the discussion in the previous subsection, one can define \emph{mutatis mutandi}, a domain $\bar{\mathscr{D}}$ via
\[
\bar{\mathscr{D}}\equiv \{ (\tau,\rho,\theta^{\mathcal{A}})\in [0,1]\times [0,1]\times \mathbb{S}^2  \} .
\]
which provides an alternative conformal representation of $D^+(\tilde{\mathcal{H}}_\star)$. The sets $\mathscr{I}^+$, $\mathcal{H}^+$ and $\mathcal{C}^+$ in the conformal boundary are defined analogously ---see Figure \ref{DomainEstimates}.
}
\end{remark}

\begin{remark}
{\em The construction in this subsection provides an additional example of $\mathscr{I}$-fixing gauge. }
\end{remark}

\begin{figure}[t]
\begin{center}
\includegraphics[width=0.5\textwidth]{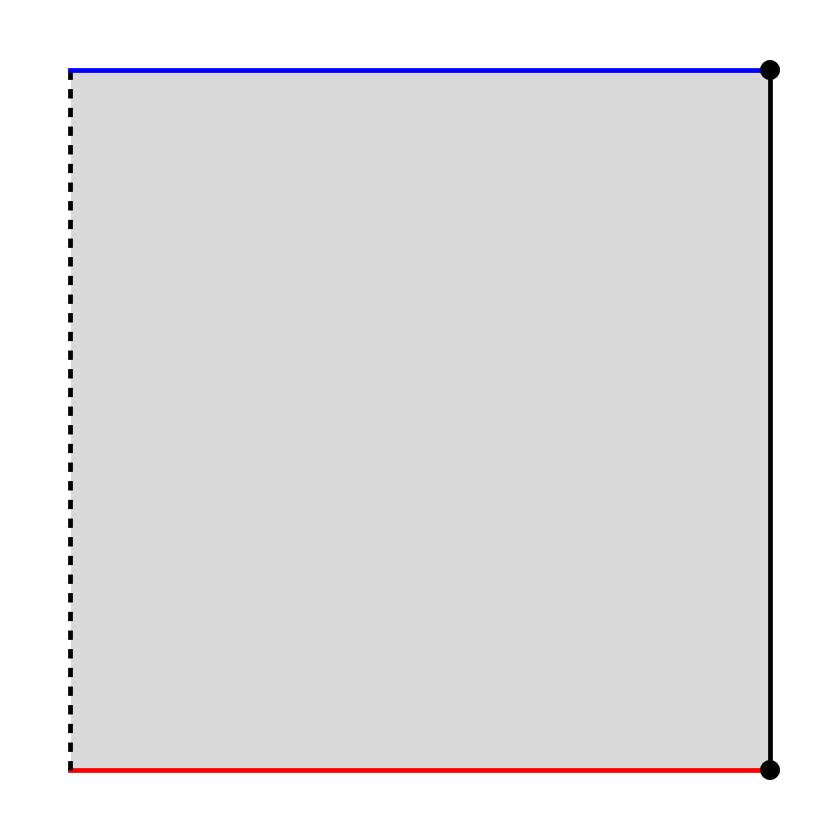}
\put(-12,100){$\scri^+$}
\put(-110,5
){${\mathcal{H}_\star}$}
\put(-110,202
){${\mathcal{H}^+}$}
\put(-110,105){$\bar{\mathscr{D}}$}
\put(-12,202
){${\mathcal{C}^+}$}
\put(-12,5
){${\mathcal{C}_\star}$}
\end{center}
\caption{Diagram of the $\scri$-fixing region with timelike infinity blown-up to an hyperboloid $\mathcal{H}^+$.}
\label{DomainEstimates}
\end{figure}

\section{Hyperbolic equations}
\label{Section:MasslessFields}
In this section, we explore the way the gauge constructions in Section \ref{Section:Conformal Gaussian Systems} interact with test massless fields. We are interested, in particular, on the behaviour of the solutions to these equations towards the hyperboloid at timelike infinity $\mathcal{H}^+$ and, specially, the corner $\mathcal{C}^+$.  

\subsection{The scalar wave equation}
We start by looking at the wave equation
\begin{equation}
\square \phi =0,
\label{ScalarWaveEquation}
\end{equation}
on the conformally rescaled domains $\mathscr{D}$ and $\bar{\mathscr{D}}$. 

\begin{remark}
{\em Strictly speaking, rather than the equation \eqref{ScalarWaveEquation}, one should consider the so-called conformally invariant wave equation 
\[
\square \phi - \frac{1}{6} R\phi =0.
\]
Here, for simplicity of the presentation consider \eqref{ScalarWaveEquation}. The general qualitative behaviour of the solutions is similar in both cases.
}
\end{remark}

\subsubsection{Poincar\'e ball gauge}
A calculation readily shows that for the line element \eqref{ConformalMinkowski:B-K} as discussed in Subsection \ref{Subsection:BeltramiKlein}, the scalar wave equation \eqref{ScalarWaveEquation} takes the form
\begin{eqnarray*}
&& \frac{4\fraka^2(1-\tau)^4}{(1+\chi^2)^2}\partial^2_\tau\phi + \frac{8\fraka (1-\tau)^2\chi}{(1+\chi^2)^2}\partial_\tau\partial_\chi \phi - \frac{(1-\chi^2)^2}{(1+\chi^2)^2}\partial^2_\chi \phi \\
&& \hspace{1cm} + \frac{4\fraka}{(1+\chi^2)^2}(1-\tau)^2\big( 3+\chi^2 + 6\fraka(\tau-1)(1+\chi^2) \big)\partial_\tau \phi \\
&& \hspace{2cm} -\frac{1}{\chi (1+\chi^2)^3}\Big( 2\big(1-(4+8\fraka(\tau-1))\big)\chi^2+\big(1+8\fraka(1-\tau) \big)\chi^4 + 2\chi^6\Big)\partial_\chi\phi\\
&& \hspace{3cm} -\frac{1}{\chi^2}\slashed\Delta\phi=0,  
\end{eqnarray*}
where $\slashed\Delta\phi$ denotes the Laplacian of the round metric  $\slashed \bmsigma$ of $\mathbb{S}^2$.

\medskip
\noindent
\textbf{Behaviour at $\mathcal{H}^+$. }
At $\tau=1$, the above equation degenerates into the elliptic equation
\[
\frac{(1-\chi^2)^2}{(1+\chi^2)^2}\partial^2_\chi \phi + \frac{2(1-\chi^2)}{\chi(1+\chi^2)^3}(2\chi^4 +3\chi^2-1)\partial_\chi \phi + \frac{1}{\chi^2}\slashed \Delta\phi=0.
\]

\begin{remark}
\label{Remark:EllipticEqn}
    {\em The above elliptic equation suggests that the values of the scalar filed $\phi$ are determined by its value at the corner $\mathcal{C}^+$. Notice, however, that the equation degenerates at $\chi=1$ ---thus, questions regarding its solvability cannot be readily addressed by means of standrad methods of the theory of elliptic equations. A detailed study of this equation will be provided elsewhere.  }
\end{remark}

\medskip
\noindent
\textbf{Behaviour at $\mathscr{I}^+$.} On $\mathscr{I}^+$ (i.e. $\chi=1$), one obtains the transport equation
\begin{eqnarray}
&& \fraka^2 (1-\tau)^4 \partial^2_\tau\phi +\frac{\fraka}{2}(1-\tau)^2 \big( 4-12\fraka (1-\tau)\big)\partial_\tau \phi -\slashed\Delta \phi \nonumber \\
&& \hspace{5cm}= 4\fraka(1-\tau)\partial_\chi\phi -2\fraka (1-\tau)^2\partial_\tau\partial_\chi\phi. \label{WaveEqn:Scri}
\end{eqnarray}
The above equation can be integrated if the transverse derivative $\partial_\chi\phi$ is known at $\chi=1$. Insight into the solutions to this equations can be obtained using a decomposition of $\phi$ into spherical harmonics. To this end, recall that
\[
\slashed\Delta Y_{\ell m}=-\ell(\ell+1) Y_{\ell m}. 
\]
In this spirit, consider the ordinary differential equation
\begin{equation}
(1-\tau)^4 \ddot\varphi_{\ell} +\frac{1}{2\fraka}(1-\tau)^2 \big( 4-12\fraka (1-\tau)\big)\dot\varphi_{\ell} +\frac{1}{\fraka^2}\ell(\ell+1) \varphi_{\ell} =0,
\label{ODE:Poincare}
\end{equation}
corresponding to the decomposition in spherical harmonics of the homogeneous part of equation \eqref{WaveEqn:Scri}. For this ordinary differential equation, one has the following lemma describing behaviour of its solutions as $\tau\rightarrow 1$:

\begin{lemma}
\label{Lemma:BehaviourCPlus}
The solutions of equation \eqref{ODE:Poincare} satisfy, for $\ell=0$
\[
\varphi_{0}(\tau) = \mathfrak{b}_0 + \mathfrak{c}_0e^{-\frac{2}{\fraka(1-\tau)}}\left(1-\tau\right)^{-4}\left(1 + O(1-\tau)\right),
\]
with $\mathfrak{b}_0$ and $\mathfrak{c}_0$ two constants. In particular, one has that
\[
\varphi_0 (\tau) \rightarrow  \mathfrak{b}_0, \quad \mbox{for} \quad \tau\rightarrow 1.
\]
For $\ell\geq 1$, one obtains two linearly independent solutions satisfying
\[
\varphi_{\ell, \pm}(\tau) \sim (1-\tau)^{-2} e^{-\frac{1}{\fraka(1-\tau)}} \exp\left(\pm i\left(\frac{\kappa_\ell}{\fraka(1-\tau)} - \frac{2}{\kappa_\ell} \log(1-\tau)\right)\right)
\]
with 
\[
\kappa_\ell\equiv  \sqrt{\ell(\ell +1) - 1}.
\]
Accordingly, one has that 
\[
\varphi_{\ell,\pm}(\tau)\rightarrow 0 , \quad \mbox{for} \quad \tau\rightarrow 1.
\]
\end{lemma}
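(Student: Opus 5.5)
The plan is to move the singular point $\tau=1$ to infinity, where \eqref{ODE:Poincare} becomes an equation with an irregular singular point of rank one, and then read off the behaviour of $\varphi_\ell$ from the standard formal (Thomé) solutions.

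\textbf{Change of variable.} I would set $z\equiv 1/(1-\tau)$, so that $z\to+\infty$ as $\tau\to 1$. Since $\mathrm{d}z/\mathrm{d}\tau=z^2$, one has $\partial_\tau = z^2\partial_z$ and $\partial^2_\tau = z^4\partial_z^2 + 2z^3\partial_z$. Substituting into \eqref{ODE:Poincare} and using $(1-\tau)^4=z^{-4}$ gives, after a short computation,
\[
\varphi_\ell'' + \Big(\frac{2}{\fraka}-\frac{4}{z}\Big)\varphi_\ell' + \frac{\ell(\ell+1)}{\fraka^2}\varphi_\ell=0 ,
\]
where a prime now denotes $\mathrm{d}/\mathrm{d}z$. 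The coefficients are polynomial in $1/z$, so $z=\infty$ is an irregular singular point of rank one.

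\textbf{Case $\ell=0$.} This case can be solved in closed form. The constant function is an exact solution, which gives $\mathfrak{b}_0$. For the second solution, the equation is first order in $\varphi_0'$, so $\varphi_0' = C z^4 e^{-2z/\fraka}$. Integrating from infinity by repeated integration by parts (Watson-type) gives $\varphi_0 = \mathfrak{c}_0 z^4 e^{-2z/\fraka}(1+O(z^{-1}))$. Rewriting in terms of $\tau$ yields exactly the stated expansion, and the second term tends to $0$ as $\tau\to1$.

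\textbf{Case $\ell\geq1$.} I would use the ansatz $\varphi = e^{\lambda z} z^{\mu}(1+O(z^{-1}))$.
\begin{itemize}
\item At leading order, $\lambda^2 + \tfrac{2}{\fraka}\lambda + \ell(\ell+1)/\fraka^2=0$. Since $\ell(\ell+1)\geq 2$, this gives $\lambda_\pm = (-1\pm i\kappa_\ell)/\fraka$.
\item At order $z^{-1}$, $2\lambda\mu + \tfrac{2}{\fraka}\mu - 4\lambda=0$, so $\mu_\pm = 2\lambda_\pm/(\lambda_\pm+1/\fraka) = 2 \pm 2i/\kappa_\ell$.
\item Translating back, $e^{\lambda_\pm z}z^{\mu_\pm} = (1-\tau)^{-2}e^{-1/(\fraka(1-\tau))}\exp\big(\pm i(\kappa_\ell/(\fraka(1-\tau)) - (2/\kappa_\ell)\log(1-\tau))\big)$, which is the claimed form.
\item The factor $e^{-z/\fraka}$ dominates every power of $z$, so both solutions tend to $0$.
\item Because $\lambda_+\neq\lambda_-$, the two solutions are linearly independent.
\end{itemize}

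\textbf{Main obstacle.} The delicate step is upgrading these formal expansions to genuine asymptotics of actual solutions. I would first try the exact reduction: setting $\varphi = e^{\lambda_\pm z} w$ turns the equation into Kummer's confluent hypergeometric equation in the variable $2i\kappa_\ell z/\fraka$. The known large-argument asymptotics of $U(a,b,\cdot)$ then give the result directly. If that bookkeeping becomes awkward, the fallback is the general theorem on rank-one irregular singular points (Olver, Ch.~7), which guarantees actual solutions with the formal series as asymptotic expansions in a sector containing the positive real axis.
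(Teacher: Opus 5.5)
Your proposal is correct, and its core is the paper's own argument (Appendix~\ref{Solution-transport-equations}). Both use an exponential--power ansatz at the irregular singular point $\tau=1$; your $(\lambda,\mu)$ in the variable $z=1/(1-\tau)$ correspond to the paper's $(\mu^{(0)}_\ell,-\mu^{(1)}_\ell)$ in the variable $x=1-\tau$.

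What you add is rigour. The paper simply \emph{assumes} that solutions admit an expansion $e^{\mu^{(0)}_\ell/(1-\tau)}\big((1-\tau)^{\mu^{(1)}_\ell}+O((1-\tau)^{\mu^{(1)}_\ell+1})\big)$ and then solves for the exponents. You instead integrate the case $\ell=0$ exactly, since $\int_z^\infty s^4e^{-2s/\fraka}\,\mathrm{d}s$ is $e^{-2z/\fraka}$ times an explicit quartic in $z$. For $\ell\geq1$ you obtain genuine solutions with the formal asymptotics via Kummer's function $U$ or Olver's rank-one theorem. The nondegeneracy hypothesis of Olver's theorem holds here, since $\tfrac{4}{\fraka^2}\big(1-\ell(\ell+1)\big)\neq0$.

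One further point, in case you compare with the appendix. It actually analyses the equation with $2/\fraka^2$ in place of the coefficient $2/\fraka$ multiplying $(1-\tau)^2\dot\varphi_\ell$ in \eqref{ODE:Poincare}. That is why its exponents carry $\fraka^2$ and why it splits into cases according to the sign of $1-\fraka^2\ell(\ell+1)$. Working with \eqref{ODE:Poincare} as written, your computation gives the exponent $-2/(\fraka(1-\tau))$ for $\ell=0$. For every $\ell\geq1$ and every $\fraka>0$ it gives $\lambda_\pm=(-1\pm i\kappa_\ell)/\fraka$. This is exactly the lemma as stated.
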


The calculations leading to the above result are given in Appendix \ref{Solution-transport-equation}. Note that $\sim$ here implies asymptotic equivalence as $\tau \to 1$, i.e. the ratio of LHS and RHS tends to $1$. 

\subsubsection{Zenginolgu's gauge}
For the metric $\bar\bmeta$ as given by the line element \eqref{Metric:ZenginogluGauge}, one finds that the scalar wave equation $\square_{\bar{\bmeta}} \phi=0$ takes the form 
\begin{eqnarray}
&&  \frac{\fraka^2}{R^2} \partial^{2}_{\tau}\phi +  \frac{2 \rho}{R (1-\tau)^2}\partial_{\tau}\partial_{\rho} \phi - \frac{(1-\rho)^2}{(1-\tau)^4} \partial^{2}_{\rho} \phi \nonumber \\
&& \hspace{2cm}+ \frac{2}{R \rho (1-\tau)^4} (3R \rho - R -2 \rho^2 - 2 R \rho^2+ 2 \rho^2 \tau)\partial_{\rho}\phi\nonumber \\
 && \hspace{3cm} + \frac{2\rho^2 + \fraka^2 (3-6R + \rho (2\rho-5)+ 6 R \tau)}{R^3 (1-\tau)^2} \partial_{\tau}\phi  \nonumber \\
 && \hspace{4cm}-\frac{1}{\rho^2 (1-\tau)^4} \slashed\Delta\phi =0,
\end{eqnarray}
with
\begin{equation}
    R\equiv \sqrt{\fraka^2(1-\rho)^2+\rho^2} = \sqrt{\fraka^2\Omega^2+\rho^2}.
    \label{Definition-R}
\end{equation}
It is clear from the above expressions that the restriction of $\square_{\bar{\bmeta}} \phi$ to $\tau=1$ gives
\begin{equation*}
    \rho (1-\rho)^2 \partial_{\rho}^2 \phi+ (1-\rho) (2-4\rho)\partial_{\rho} \phi+ \frac{1}{\rho} \slashed\Delta\phi =0.
\end{equation*}
This, as in the case of the Poincar\'e ball gauge, is a degenerate elliptic equation ---see Remark \ref{Remark:EllipticEqn}. On $\rho=1$, since $R|_{\rho=1}=1$ and $\Omega|_{\rho=1}=0$, we obtain
\begin{equation*}
    \fraka^2 (1-\tau)^4 \partial_{\tau}^2 \phi + (1-\tau)^2 (2-6\fraka^2 (1-\tau)) \partial_{\tau} \phi - \slashed\Delta\phi= 4 (1-\tau) \partial_{\rho} \phi-2 (1-\tau)^2 \partial_{\tau} \partial_{\rho} \phi.
\end{equation*}
As in the Poincaré ball gauge, one can study the homogeneous part of this equation ---assuming that the right-hand side are known source terms. This analysis provides insight into the behaviour of the solutions to the wave equation towards the corner $\mathcal{C}^+$. The general qualitative behaviour is similar to that described in Lemma \ref{Lemma:BehaviourCPlus}. Further details can be found in Appendix \ref{Solution-transport-equation}.

\subsection{Linearised gravity}

In this section, we consider linearised gravitational perturbations of Minkowski spacetime, described by the massless spin-2 equation, and derive the corresponding equations in the conformal gauge constructed above. Our objective here is simply to obtain the explicit form of the resulting system. A detailed analysis of the differential equations, including their behaviour at the conformal boundary and the associated evolution problem, lies beyond the scope of the present article and will be addressed elsewhere.

\medskip
It is convenient to formulate the linearised Einstein equations in the \emph{2-spinor formalism} where gravitational perturbations are represented by a totally symmetric massless spin-2 field satisfying the massless spin-2 equation
\[
\nabla^A{}_{A'}\phi_{ABCD}=0,
\]
where $\phi_{ABCD}=\phi_{(ABCD)}$. The latter, in turn, implies the wave equation
\[
\square \phi_{ABCD} + 6 \Psi^{PQ}{}_{(AB} \phi_{CD)PQ}- 12\Lambda \phi_{ABCD}=0. 
\]
In the case under consideration, one has that $\Psi_{ABCD}=0$.
Thus, $\phi_{ABCD}$ satisfies the simpler equation
\[
\square \phi_{ABCD}-12\Lambda \phi_{ABCD}=0.
\]
\begin{remark}
    {\em Observe that in the above equation the D'Alembertian operator acts over a valence 4 symmetric spinor. Thus, the expanded operator contains derivatives of the connection coefficients.}
\end{remark}
The massless spin-2 field $\phi_{ABCD}$ is described by 5 complex components $\phi_0$, $\cdots$, $\phi_4$ each with a specific spin-weight. More precisely, for the component $\phi_k$ one has that $\mbox{\$}\phi_k =2-k$.
Using the suite {\tt xAct} for the {\tt Wolfram Programming Language} \cite{xAct}, one readily finds, using the so-called \emph{compact NP notation} \cite{PenRin84},  that 
\begin{eqnarray*}
&&  \square\phi_{0} - 8 \uprho \eth \phi_{1}- 4  \uprho
\uprho ' \phi_{0} -12 \Lambda \phi_{0} = 0, \\
&&  \square \phi_{1}  - 6 \uprho \eth \phi_{2} - 2 \uprho ' \eth' \phi_{0}- 10  \uprho
\uprho '\phi_{1} -12 \Lambda \phi_{1}  = 0\\
&&  \square \phi_{2}  - 4 \uprho \eth \phi_{3} - 4 \uprho ' \eth' \phi_{1}- 12  \uprho
\uprho '\phi_{2} -12 \Lambda  \phi_{2}  = 0, \\
&&  \square \phi_{3}  - 2 \uprho \eth \phi_{4} - 6 \uprho ' \eth' \phi_{2}- 10  \uprho 
\uprho '\phi_{3} -12 \Lambda  \phi_{3} = 0, \\
&&  \square \phi_{4}  - 8 \uprho ' \eth' \phi_{3}- 4  \uprho 
\uprho '\phi_{4}-12 \Lambda  \phi_{4}  = 0.
\end{eqnarray*}

\begin{remark}
{\em The spin connection coefficients $\uprho$ and $\uprho'$ are, for the class of NP frames we will be considering, are non-vanishing. Accordingly, the evolution of all the components is coupled and any analysis of the equation requires to consider the equations simultaneously.
 It is well-known that there are gauges where this coupling is not present.}
\end{remark}

\subsubsection{Expressions in the Poincar\'e ball gauge}
In the following, we will consider the NP frame $\{ \bml,\, \bmn,\, \bmm,\, \bar{\bmm}  \}$ for the metric $\bmeta$ given by 
\begin{eqnarray*}
 && \bml =\frac{\sqrt{2}\fraka(1-\chi)^2}{1-\chi^4}\bmpartial_\tau +\frac{1-\chi^2}{\sqrt{2}(1-\tau)^2(1+\chi^2)}\bmpartial_\chi, \\
        && \bmn =\frac{\sqrt{2}\fraka(1+\chi)^2}{1-\chi^4}\bmpartial_\tau -\frac{1-\chi^2}{\sqrt{2}(1-\tau)^2(1+\chi^2)}\bmpartial_\chi, \\
&&\bmm = \frac{1}{\chi(1-\tau)^2}\bmpartial_+, \\
&& \bar\bmm = \frac{1}{\chi(1-\tau)^2}\bmpartial_-,    
\end{eqnarray*}
where $\{\bmpartial_+,\, \bmpartial_-\}$ span $T\mathbb{S}^2$. For this frame, the spin-connection coefficients $\uprho$ and $\uprho'$ are given, in the Poincar\'e ball gauge, by
\[
\uprho =\frac{(\chi-1)\Big( 1+ \chi \big(\chi+2 +4\fraka (\tau-1) \big)\Big)}{\sqrt{2}\chi(1+\chi)(1+\chi^2)(1-\tau)^2}, \qquad \uprho'= \frac{(\chi+1)\Big( 1+ \chi \big(\chi-2 -4\fraka (\tau-1) \big)\Big)}{\sqrt{2}\chi(\chi-1)\chi(1+\chi^2)(1-\tau)^2},
\]
and the Ricci scalar $\Lambda$ is given by
\[
\Lambda = \frac{3-\chi^2 (2+\chi^2)-4 \fraka (3+ \chi^2) (1-\tau) + 12 \fraka^2 (1+\chi^2) (1-\tau)^2}{2 (1+\chi^2)^3 (1-\tau)^4}.
\]
Observe that while $\uprho'$ is singular at $\chi=1$, the product $\uprho \uprho'$ is regular. Details of the NP-frame used to compute the connection coefficients can be found in Appendix \ref{Appendix:NPComputations}. Note that $\square \phi$ denotes the D'Alembertian operator acting on a scalar with spin-weight $s$. In particular, we can write
\[
\square = \blacksquare - \frac{1}{\chi^2(1-\tau)^4}(\eth_{\mathbb{S}^2} \eth_{\mathbb{S}^2}'+ \eth_{\mathbb{S}^2}'\eth_{\mathbb{S}^2}), 
\]
with $\blacksquare$ involving only $\partial_\tau$ and $\partial_\chi$ derivatives (or $\partial_\rho$ for the other spatial compactification),
\begin{eqnarray*}
    && \blacksquare = \frac{4 \fraka^2}{(1+\chi^2)^2} \partial^{2}_{\tau} + \frac{8 \fraka \chi}{(1+\chi^2)^2 (1-\tau)^2} \partial_{\tau} \partial_{\chi} - \frac{(1-\chi^2)^2}{(1+\chi^2)^2 (1-\tau)^4} \partial^{2}_{\chi} \\
    && \phantom{\blacksquare = } + \frac{4 \fraka (3+\chi^2 - 6 \fraka (1+\chi^2)(1-\tau))}{(1+\chi^2)^3 (1-\tau)^2} \partial_{\tau} \\
    && \phantom{\blacksquare = } - \frac{2 (1+2 \chi^6 + \chi^2 (8 \fraka (1-\tau)-4)+ \chi^4 (1+8 \fraka (1-\tau) ))}{\chi (1+\chi^2)^3 (1-\tau)^4} \partial_{\chi},
\end{eqnarray*}
while $\eth_{\mathbb{S}^2}$ and $\eth_{\mathbb{S}^2}'$ are the $\eth$ and $\eth'$ operators of the round sphere so that $\eth_{\mathbb{S}^2} \eth_{\mathbb{S}^2}'+ \eth_{\mathbb{S}^2}'\eth_{\mathbb{S}^2}$ 
corresponds to the Laplacian on spin-weighted scalars.

\begin{remark}
{\em The wave equations describing the evolution of the various components of the massless spin-2 field are mildly singular at $\mathscr{I}^+$. This type of mild singular behaviour is amenable to an analytic treatment by means of the tools b-calculus ---see e.g. \cite{HinVas20,HinVas26}. This will be discussed elsewhere.}    
\end{remark}

\begin{remark}
{\em Expressions with  qualitatively analogous structural properties are obtained for the Zenginoglu gauge.}
\end{remark}


\section{Projective extensions and Ti}
\label{Section: projective-extnesion-Ti}
In this section, we discuss the relation of the constructions of the hyperboloid at infinity $\mathcal{H}^+$ in Section \ref{Section:Conformal Gaussian Systems} to projective extensions describing timelike infinity and, in particular, the so-called extended boundary Ti. 

\subsection{Projective extensions}
We start by carrying out an analogue of Subsection \ref{Subsubsection:CauchyProjectiveExtension} to the hyperboloidal setting.

\medskip
Letting, as before, $\varsigma \equiv 1-\tau$, then the physical Minkowski metric $\tilde{\bmeta}$ with line element given by \eqref{Minkowski:BeltramiKlein} takes the form
\[
\tilde{\bmeta} =\frac{1}{\varsigma^4}\mathbf{d}\varsigma \otimes \mathbf{d}\varsigma - \frac{2\fraka\chi}{\varsigma^2(1-\chi^2)^2}\big( \mathbf{d}\varsigma\otimes\mathbf{d}\chi +\mathbf{d}\chi\otimes\mathbf{d}\varsigma  \big) -\frac{4\fraka^2}{(1-\chi^2)^2}\big(\mathbf{d}\chi \otimes \mathbf{d}\chi -\chi^2 \slashed\bmsigma \big).
\]
A simple computation readily shows that one cannot obtain a projective extension of this line element with boundary defining function given by $\varsigma$. As in Subsection \ref{Subsubsection:CauchyProjectiveExtension}, this difficulty can be understood by looking at the behaviour of generic timelike geodesics. In spherical coordinates the curves can be described by $x(s) =(s, r(s), \vartheta_\star^\mathcal{A})$ with $r(s)= v_\star s + r_\star$ with $v_\star$, $r_\star$ constants. For $v_\star\neq 0$, as $s\rightarrow\infty$ (i.e. $\varsigma\rightarrow 0$), one gets $r(s)\rightarrow \infty$ and, moreover, using formula \eqref{RToChi} that $\chi(s)\rightarrow 1$. Thus, all the geodesics with $v_\star\neq0$ have their future endpoint at the corner $\mathcal{C}^+$ ---see Figure \ref{Figure:TimelikeGeodesics}.

\begin{figure}[t]
\begin{center}
\includegraphics[width=0.5\textwidth]{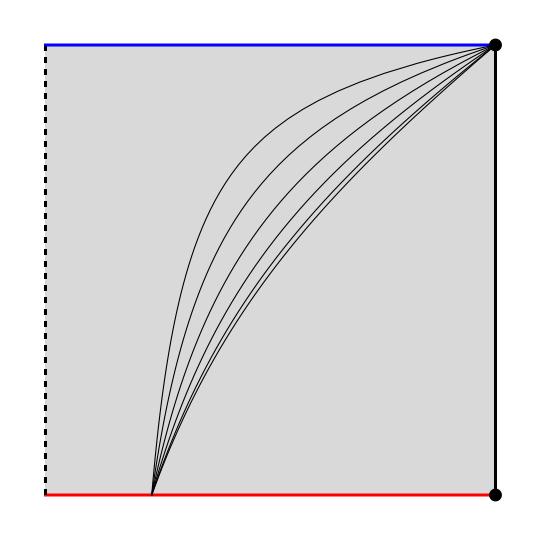}
\put(-12,98){$\scri^+$}
\put(-110,5){${\mathcal{H}_\star}$}
\put(-110,203){${\mathcal{H}^+}$}
\put(-12,202){${\mathcal{C}^+}$}
\put(-12,6){${\mathcal{C}_\star}$}
\end{center}
\caption{Timelike geodesics with $v_\star\neq 0$. All the curves start from the same point on the initial hyperboloid $\mathcal{H}_\star$ and have, from left to right, velocity $v_\star$ given by $\frac{1}{5},\, \frac{1}{3},\, \frac{1}{2},\, \frac{2}{3}, \, \frac{4}{5}, \, \frac{6}{7}$. The diagram is quantitatively correct.}
\label{Figure:TimelikeGeodesics}
\end{figure}

\subsubsection{A blow-up of \texorpdfstring{$\mathcal{C}^+$}{C+}}
A new coordinate which allows to resolve the endpoints of the geodesics terminating at $\mathcal{C}^+$ is given by 
\[
\zeta \equiv \frac{\varsigma \chi}{1-\chi} \quad \Longleftrightarrow \quad \chi = \frac{\zeta}{\zeta +\varsigma }
\]
so that the corner $\mathcal{C}^+$ is now described by
\[
\mathcal{C}^+ =\big\{ \varsigma=0, \; \zeta \in \mathbb{R}, \;\vartheta^\mathcal{A}\in \mathbb{S}^2 \big\}.
\]
Thus, the coordinate $\zeta$ effectively \emph{blows up the corner}. 

\begin{remark}
{\em Indeed, writing the timelike geodesics in terms of the coordinates $(\varsigma, \zeta)$ one has that
\begin{eqnarray*}
&& x(s) = \left(\varsigma(s), \, \zeta(s),\,\vartheta^\mathcal{A}_\star \right) \\
&&  \phantom{ x(s)}= \left(\varsigma(s), \, \frac{\varsigma(s) r(s)}{2\fraka},\,\vartheta^\mathcal{A}_\star \right) \\
&&  \phantom{ x(s)}=\left(\varsigma(s), \frac{\varsigma(s)}{2\fraka}\big( v_\star s +r_\star \big),\,\vartheta^\mathcal{A}_\star \right)\rightarrow \left(0,\frac{v_\star}{2\fraka},\vartheta^\mathcal{A}_\star\right), \qquad \mbox{as} \quad s\rightarrow \infty.
\end{eqnarray*}
Accordingly, the asymptotic value of the coordinate $\zeta$ along a generic timelike geodesic encodes its \emph{velocity}.}
\end{remark}

In terms of the coordinates $(\varsigma,\zeta)$, the Minkowski line element takes the form
\begin{eqnarray*}
&& \tilde\bmeta = \frac{(1+2a \zeta)\Big( (2\zeta +\varsigma)^3 -2 \fraka \zeta \varsigma^2 \Big)}{\varsigma^4 (2\zeta +\varsigma)^2}
\mathbf{d}\varsigma \otimes\mathbf{d}\varsigma \\
&& \hspace{2cm}- \frac{4\fraka \zeta \big(\zeta + \varsigma -\fraka \zeta^2  \big)}{\varsigma^3 (2\zeta + \varsigma)^2} \big(\mathbf{d}\varsigma \otimes \mathbf{d}\zeta + \mathbf{d}\zeta \otimes \mathbf{d}\varsigma  \big)\\
&& \hspace{3cm} -\frac{4\fraka^2}{(2\zeta +\varsigma)^2}\mathbf{d}\zeta \otimes \mathbf{d}\zeta - \frac{4\fraka^2\zeta^2 (\zeta+\varsigma)^2}{\varsigma^2(2\zeta +\varsigma)^2}\slashed\bmsigma.
\end{eqnarray*}

\begin{remark}
    {\em This line element is highly singular at $\varsigma=0$, however, the singular part of the Levi-Civita connection can be removed by means of an order 1 projective extension using the 1-form $\mathbf{d}\varsigma/\varsigma$ ---see equation \ref{ProjectiveExtension:ChristiffleTransformation}. This follows from a lengthy but straightforward calculation.}
\end{remark}

\subsubsection{The conformal rescaled metric}
It is also illuminating to reframe the previous discussion in terms of the conformally rescaled metric \eqref{ConformalMinkowski:B-K}. In terms of the coordinates $(\varsigma,\zeta)$, the line element takes the form
\begin{eqnarray*}
    && \bmeta = \frac{\varsigma^2(1 + 2\fraka \zeta)\big((2\zeta +\varsigma)^2 -2\fraka \zeta \varsigma^2  \big)}{4\fraka^2 (\zeta +\varsigma)^4}\mathbf{d}\varsigma \otimes \mathbf{d}\varsigma  \\
    && \hspace{2cm} - \frac{\zeta \varsigma^3(\zeta +\varsigma - \fraka \varsigma^2)}{\fraka (\zeta + \varsigma)^4}\big(\mathbf{d}\varsigma \otimes \mathbf{d}\zeta + \mathbf{d}\zeta \otimes \mathbf{d}\varsigma  \big) \\
    && \hspace{3cm} - \frac{\varsigma^6}{(\zeta + \varsigma)^4}\mathbf{d}\zeta \otimes \mathbf{d}\zeta - \frac{\varsigma^4 \zeta^2}{(\zeta + \varsigma)^2}\slashed\bmsigma.
\end{eqnarray*}
Observe that this line element is smooth in $(\varsigma,\zeta)$ for $\zeta>0$. Consistently, one can then define a \emph{blow-up} $\mathcal{T}^+$ of $i^+$ as
\[
\mathcal{T}^+ \equiv \big\{ \varsigma =0, \; 0\leq \zeta  \big\}.
\]
This blow-up subsumes the hyperboloid at timelike infinity $\mathcal{H}^+$ and the corner $\mathcal{C}^+$. 

\begin{remark}
{\em The pull-back of the above expression for $\bmeta$ to $\mathcal{T}^+$ vanishes. So, the degeneracy of the metric is more severe at $\mathcal{T}^+$ than at $\mathcal{H}^+$. This is consistent with the form of the conformal factor $\Theta$ as given by equation \eqref{ConformalFactor:Beltrami-KleinCGSystem}. In terms of the coordinates $(\varsigma,\zeta)$, one has that 
\[
\Theta = \frac{\varsigma^3(2\zeta+\varsigma)}{2\fraka (\varsigma + \zeta)^2}.
\]
Consequently, at $\mathcal{T}^+$ the conformal factor vanishes to third order. }
\end{remark}

\begin{remark}
    {\em The coordinate $\zeta$ has a semi-infinite range. This can be compactified by introducing a suitable coordinate transformation $\Acute{\zeta}$ with range, e.g. [0,1]. As an example consider
\[
\Acute{\zeta} = \frac{\fraka \zeta}{1+\fraka \zeta}.
\]    
This new coordinate gives rise to a compactified version of $\mathcal{T}^+$ which we denote by $\Acute{\mathcal{T}}^+$. One then has that $\Acute{\mathcal{T}}^+\simeq \mathbb{B}^3$, where $\mathbb{B}^3$ denotes the 3-dimensional unit ball. The centre of the ball corresponds to $\zeta=0$ ---the endpoint of timelike geodesics with $v_\star=0$, in other words $\mathcal{H}^+$. Further, recall that  $\partial \mathbb{B}^3=\partial\mathbb{S}^2$. This boundary corresponds to the limit $\zeta\rightarrow 0$ which, in turn corresponds to the intersection of $\Acute{\mathcal{T}}^+$ and $\mathscr{I}^+$. These ideas will discussed in detail elsewhere.}
\end{remark}

\subsection{Relation to Ti}
We now connect our construction to the extended boundary $\text{Ti}$ appearing in \cite{FigHavProSal22,BorChaHer25}.  The basic idea is to define a suitable notion of \emph{asymptote} in the sense of Ashtekar.

\begin{figure}[t]
\begin{center}
\includegraphics[width=0.45\textwidth]{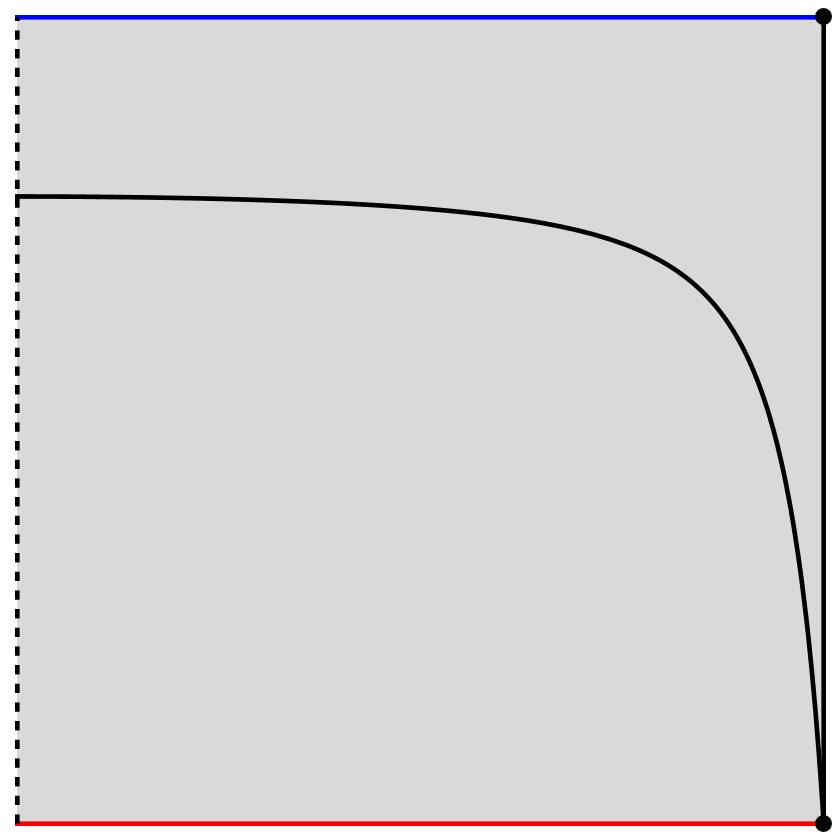}
\put(-1,95){$\scri^+$}
\put(-100,-9){${\mathcal{H}_\star}$}
\put(-100,195){${\mathcal{H}^+}$}
\put(-100,150){${T_{\bullet}}$}
\put(-1,195){${\mathcal{C}^+}$}
\put(-1,-9){${\mathcal{C}_\star}$}
\end{center}
\caption{Diagram showing a constant $T$ hypersurface on the $(\tau,\chi)$ plane. }
\label{Constant-Omega-Surface}
\end{figure}

\subsubsection{Hyperboloidal coordinates in the future light cone}
It is convenient to start from the Minkowski metric in polar coordinates as given by \eqref{Minkowski-spherical} and introduce \emph{hyperbolic coordinates} $(\Psi,X)$ in the interior of the future light cone corresponding to $t>0$ and $t^2>r^2$. The latter are related to the coordinates $(t,r)$ via
\[
  t\equiv \Psi\cosh X,
  \qquad
  r\equiv \Psi\sinh X,
\]
with $\Psi>0, X\in[0,\infty)$. The inverse relations are
\begin{equation}
  \Psi=\sqrt{t^2-r^2},
  \qquad
  \tanh X=\frac{r}{t}.
  \label{Inverse-coordinates-hyperbolic-spherical}
\end{equation}
In terms of these coordinates, the metric becomes
\begin{equation}
  \tilde{\bmeta}= \mathbf{d}\Psi \otimes \mathbf{d}\Psi+\Psi^2 \hat{\bmh}_{\mathbb{H}^{3}},
  \qquad
  \hat{\bmh}_{\mathbb{H}^{3}}\equiv  -\big( \mathbf{d} X \otimes \mathbf{d} X +
  \sinh^2 X\, \slashed \bmsigma\big). 
  \label{eq:hyperbolic-slicing}
\end{equation}
where $\hat{\bmh}_{\mathbb{H}^{3}}$ denotes the unit hyperbolic metric. In these coordinates, future timelike infinity $i^{+}$ is described by the limit $\Psi \to \infty$. To map $i^{+}$ to the origin, introduce $T = \Psi^{-1}$ so that the metric $\tilde{\bmeta}$ becomes
\begin{equation*}
    \tilde{\bmeta} = T^{-4} \mathbf{d} T \otimes \mathbf{d} T + T^{-2} \hat{\bmh}_{\mathbb{H}^{3}}.
\end{equation*}
From \eqref{Inverse-coordinates-hyperbolic-spherical}, we have
\begin{equation*}
    T^{2} = \frac{1}{t^2-r^2},
\end{equation*}
which, in turn, in terms of $(\tau,\chi)$ coordinates can be written as
\begin{equation*}
    T^{2} = \frac{(1-\tau)^2 (1-\chi^2)}{\mu(\tau,\chi)},
\end{equation*}
where for convenience we have introduced the function 
\[
 \mu(\tau,\chi) \equiv(\tau+\fraka(1-\tau))^2-(\tau-\fraka(1-\tau))^2\chi^2
\]
corresponding to the denominator in the previous expression. Observe that $\mu(1,\chi) = (1-\chi^2)$ so that $T^2=O((1-\tau)^2)$. 

\begin{remark}
{\em From the above expression, $T$ vanishes at $\tau=1$ for fixed $\chi<1$. Moreover, for $0<\tau<1$, $T$ also vanishes at $\chi=\pm 1$. In the following analysis, unless otherwise stated, we restrict to $0\leq\chi<1$ and consider the component of $T=0$ obtained in the limit $\tau\to1$ at fixed $\chi<1$. The limiting behaviour as $\chi\to \pm 1$ is not considered here.}
\end{remark}

From the previous expressions, choosing the negative square root of $T^2$, we readily have that 
\begin{equation*}
    \mathbf{d} T = -\frac{(1-\tau) \left( \tau(1 - \chi^2)^2 + \fraka (1 - \tau)
  (1 - \chi^4) \right)}{\mu(\tau,\chi)^2 T} \mathbf{d} \tau - \frac{4 \fraka \tau \chi (1-\tau)^3}{\mu(\tau,\chi)^2 T} \mathbf{d}\chi,
\end{equation*}
so that
\begin{equation*}
    \lim_{\tau \to 1} \mathbf{d} T =  \mathbf{d} \tau.
\end{equation*}
 For the coordinate $X$ on the unit hyperboloid, we have
\begin{equation*}
    \sinh^2X = \frac{4 \fraka^2 (1-\tau)^2\chi^2}{(1-\chi^2)\nu(\tau,\chi)}, \qquad \cosh^2 X=\frac{(\tau (1-\chi^2)+\fraka (1-\tau)(1+\chi^2))^2}{(1-\chi^2)\nu(\tau,\chi)},
\end{equation*}
where 
\begin{equation*}
    \nu(\tau,\chi) \equiv (\tau + \fraka(1 - \tau)(1 - \chi) + \tau\chi) (\tau(1 - \chi) + \fraka(1 - \tau)(1 + \chi)).
\end{equation*}
Since $\nu(1,\chi) = 1-\chi^2$, 
we have $\sinh^2 X = O((1-\tau)^2)$ and $\cosh^2 X = O(1)$. For $\mathbf{d}X$, we get
\begin{equation*}
    \mathbf{d}X = - \frac{2 \fraka \chi}{\nu(\tau,\chi)} \mathbf{d}\tau + \frac{2 \fraka (1-\tau) (\tau (1+\chi^2) + \fraka (1-\tau)(1-\chi^2))}{(1-\chi^2)\nu(\tau,\chi)}\mathbf{d}\chi, 
\end{equation*}
which behaves as
\begin{equation*}
    \lim_{\tau \to 1}\mathbf{d} X = -\frac{2 \fraka \chi}{1-\chi^2} \mathbf{d} \tau.
\end{equation*}

\subsubsection{The rescaled inverse metric}
Now, consider the inverse metric 
\begin{equation*}
    \tilde{\bmeta}^{\sharp} = T^{4} \bmpartial_{T} \otimes \bmpartial_{T} + T^{2} \hat{\bmh}^{\sharp}_{\mathbb{H}^{3}}, \qquad \hat{\bmh}^{\sharp}_{\mathbb{H}^{3}}\equiv - \bmpartial_{X} \otimes \bmpartial_{X} - \csch^2 X \slashed \bmsigma^{\sharp}. 
\end{equation*}
In the following, we will upgrade the coordinate $T$ to a boundary defining function. One can readily verify that 
\begin{equation*}
    \lim_{\tau \to 1}T =0, \qquad \lim_{\tau \to 1}\mathbf{d} T \neq 0.
\end{equation*}
Accordingly, it is natural to define a  \emph{rescaled inverse metric} via 
\begin{equation*}
    \hat{\bmeta}^{\sharp} \equiv  T^{-2} \tilde{\bmeta}^{\sharp} = T^{2} \bmpartial_{T} \otimes \bmpartial_{T} +  \hat{\bmh}^{\sharp}_{\mathbb{H}^{3}}. 
\end{equation*}
The latter extends smoothly to $T =0$.

\subsubsection{Behaviour of the constant \texorpdfstring{$T$}{T} hyperboloids}
 The inverse coordinate transformation between $(T,X)$ and $(\tau,\chi)$ is given by
\begin{equation*}
    \chi = \left( - \fraka T \pm \frac{\sqrt{\cosh 2X -1 + 2 \fraka^2 T^2}}{\sqrt{2}} \right) \csch X. 
\end{equation*}
To keep $\chi > 0$ as $T \to 0$, we pick the positive sign. Solving for $\tau$, we get
\begin{equation}
    \tau = 1+ \frac{\omega(T,X) T}{\cosh 2 X -1 - (1+ \fraka) \omega(T,X)  T - \omega(T,X)  \cosh X},
\end{equation}
where
\begin{equation*}
    \omega(T,X) \equiv  \sqrt{2(\cosh 2X -1 + 2 \fraka^2 T^2)}-2 \fraka T.
\end{equation*}

\begin{remark}
{\em Using these coordinate transformations and setting $T=T_{\bullet}$ where $T_\bullet$ is a constant. One has that $(\tau(T_{\bullet},X),\chi(T_{\bullet},X))$ describes a curve in the $(\tau,\chi)$ plane ---see  Figure \ref{Constant-Omega-Surface}. Note, in particular,  that
\begin{equation*}(\tau(T_{\bullet},X),\chi(T_{\bullet},X))\longrightarrow (0,1) \qquad \mbox{as} \qquad X\rightarrow  \infty.
\end{equation*}
In other words, the hyperboloids of constant $T$ never reach $\mathscr{I}^{+}$. }
\end{remark}

\begin{remark}
{\em Similarly, setting $X=X_{\bullet} \neq 0$, we have 
\begin{equation*}(\tau(T,X_{\bullet}),\chi(T,X_{\bullet}))\longrightarrow (1,1) \qquad \mbox{as} \qquad T\rightarrow  0.
\end{equation*}
Therefore, the constant $X$ hypersurfaces all intersect at the corner $\mathcal{C}^{+}$. }
\end{remark}

\begin{remark}
{\em Using the relation between $(\varsigma,\zeta)$ and $(\tau,\chi)$ discussed earlier, we can show that 
\begin{eqnarray*}
    && T^2 =
    \frac{\varsigma^2(2\zeta+\varsigma)}
    {\bigl(1+2\fraka\zeta+(\fraka-1)\varsigma\bigr)
     \bigl(2\zeta(1-\varsigma)+\varsigma+(\fraka-1)\varsigma^2\bigr)}, \\
    && X = \frac{1}{2}\log \left(
1+\frac{4\fraka\zeta(\zeta+\varsigma)}
{2\zeta(1-\varsigma)+\varsigma+(\fraka-1)\varsigma^2}
\right).
\end{eqnarray*}
These coordinates have the limits 
\begin{eqnarray*}
    && (T,X) \to \left(0, \frac{1}{2} \log (1+ 2 \fraka \zeta) \right) \quad \text{at } \mathcal{C}^{+}, \\
    && (T,X) \to \left(0, 0 \right) \quad \text{at } \mathcal{H}^{+}.
\end{eqnarray*}
In other words, these coordinates encode the same asymptotic information as $(\varsigma,\zeta)$ and as such, can be used to introduce a blow-up of $\mathcal{C}^{+}$ but do not provide valid coordinates for points on $\mathcal{H}^{+}$.
}
\end{remark}

\subsubsection{The rescaled normal}
Now, consider the vectors $\bmpartial_T$ and $\bmpartial_\chi$ expressed in terms of the coordinates $(\tau.\chi)$. A calculation readily gives that 
\begin{equation*}
     \bmpartial_{T} = \frac{\partial \tau}{\partial T} \bmpartial_{\tau} + \frac{\partial \chi}{\partial T} \bmpartial_{\chi}, \qquad \bmpartial_{X} = \frac{\partial \tau}{\partial X} \bmpartial_{\tau} + \frac{\partial \chi}{\partial X} \bmpartial_{\chi},
\end{equation*}
with 
\begin{eqnarray*}
    && \frac{\partial\tau}{\partial T} = \frac{\nu(\tau,\chi) \left(\fraka (1-\tau)(\chi^2-1)+\tau(1+\chi^2)\right)T}{(1-\tau)(1-\chi^2)(1+\chi^2)}, \\
    && \frac{\partial \tau}{\partial X} = \frac{2(1-\tau)\tau\chi}{1+\chi^2}, \qquad \qquad \frac{\partial \chi}{\partial T} = -\frac{\chi \nu(\tau,\chi)  T}{(1-\tau)^2(1+\chi^2)}, \\
    && \frac{\partial \chi}{\partial X}= -\frac{ \tau(1-\chi^2)^2 +\fraka (1-\tau)(1-\chi^4) }{ 2\fraka (1-\tau)(1+\chi^2)}.
\end{eqnarray*}
Now, using the rescaled metric $\hat{\bmeta}$, one can define a \emph{rescaled normal} via
\begin{equation*}
    \bmN \equiv  T^{-2} \hat{\bmeta}^{\sharp}(\mathbf{d} T, \cdot) = \bmpartial_{T}. 
\end{equation*}
Observe, also, that $\langle \mathbf{d}T, \bmN \rangle=1$. The rescaled normal has a finite limit to  $\tau=1$ since
\begin{equation*}
    \lim_{\tau \to 1} \frac{\partial\tau}{\partial T}= -1, \qquad  \left. \frac{\partial \tau}{\partial X} \right|_{\tau=1} =0. 
\end{equation*}
In particular, one has that 
\begin{equation*}
    \lim_{\tau \to 1} \bmN = \lim_{\tau \to 1} \bmpartial_{T} = - \bmpartial_{\tau}.
\end{equation*}
The restriction of $\hat{\bmeta}^{\sharp}$ to the locus of points given by $T =0$ is given by $\hat{\bmh}^{\sharp}_{\mathbb{H}^{3}}$.

\subsubsection{Relation to the conformal Gaussian coordinates}
In order to relate $\hat{\bmh}^{\sharp}_{\mathbb{H}^{3}}$ to the intrinsic metric on the constant $\tau$ hypersurfaces, take the metric $\tilde{\bmeta}$ written in $(\tau,\chi)$ coordinates and define $\grave{\bmeta} = \varpi^2 \tilde{\bmeta}$, so that
\begin{equation*}
    \grave{\bmeta} = \frac{(1-\chi^2)^2}{4 \fraka^2 (1-\tau)^4} \mathbf{d} \tau \otimes \mathbf{d} \tau + \frac{\chi}{\fraka (1-\tau)^2} (\mathbf{d}\tau \otimes
\mathbf{d}\chi + \mathbf{d}\chi \otimes \mathbf{d}\tau ) - \mathbf{d} \chi \otimes \mathbf{d}\chi - \chi^2 \slashed \bmsigma, 
\end{equation*}
with inverse given by
\begin{equation*}
    \grave{\bmeta}^{\sharp} = \frac{4 \fraka^2 (1-\tau)^4}{(1+\chi^2)^2} \bmpartial_{\tau} \otimes \bmpartial_{\tau} + \frac{4 \fraka \chi (1-\tau)^2}{(1+\chi^2)^2} (\bmpartial_{\tau} \otimes \bmpartial_{\chi} + \bmpartial_{\chi} \otimes \bmpartial_{\tau})- \frac{(1-\chi^2)^2}{(1+\chi^2)^2} \bmpartial_{\chi} \otimes \bmpartial_{\chi} - \frac{1}{\chi^2} \slashed \bmsigma^{\sharp}.
\end{equation*}
Note, in particular,  that 
\begin{equation*}
    \grave{\bmh}^{\sharp}_{\mathbb{H}^{3}} \equiv \left. \grave{\bmeta}^{\sharp} \right|_{\tau=1} = - \frac{(1-\chi^2)^2}{(1+\chi^2)^2} \bmpartial_{\chi} \otimes \bmpartial_{\chi} - \frac{1}{\chi^2} \slashed \bmsigma^{\sharp}. 
\end{equation*}
Although $\hat{\bmh}^{\sharp}_{\mathbb{H}^{3}}$ is regular with respect to the coordinate basis $\{ \bmpartial_{T}, \bmpartial_{X} \}$, note that the line element
\begin{eqnarray*}
    && \hat{\bmh}^{\sharp}_{\mathbb{H}^{3}} = -\frac{4(1-\tau)^2\tau^2\chi^2}{(1+\chi^2)^2} \bmpartial_{\tau} \otimes \bmpartial_{\tau} \\
    && \hspace{2cm}+ \frac{\tau\chi( \tau(1-\chi^2)^2 +\fraka (1-\tau)(1-\chi^4)) }{ \fraka (1+\chi^2)^2} (\bmpartial_{\tau} \otimes \bmpartial_{\chi} + \bmpartial_{\chi} \otimes \bmpartial_{\tau}) \\
    && \hspace{3cm}- \frac{(1-\chi^2)^2\left(\tau(1-\chi^2)+\mathfrak a(1-\tau)(1+\chi^2)\right)^2}{4\mathfrak a^2(1-\tau)^2(1+\chi^2)^2} \bmpartial_{\chi} \otimes \bmpartial_{\chi} \\
    &&\hspace{4cm}- \frac{(1-\chi^2)\nu(\tau,\chi)}{4 \fraka^2 (1-\tau)^2\chi^2} \slashed \bmsigma^{\sharp}
\end{eqnarray*}
diverges as $\tau \to 1$.  However, scaling by $\varpi^{-2} \vartheta^2$ (see their definitions in Subsection \ref{Section:Compactification-2-Sphere}) implies a regular metric at $\tau=1$ since 
\begin{equation*}
    \grave{\bmh}^{\sharp}_{\mathbb{H}^{3}} =  \left. (\varpi^{-2} \vartheta^{2} \hat{\bmh}^{\sharp}_{\mathbb{H}^{3}}) \right|_{\tau=1}.
\end{equation*}

\begin{remark}
{\em
Although the metrics $\tilde{\bmh}_{\mathbb H^3}$ and $\hat{\bmh}_{\mathbb H^3}$ have the same standard hyperbolic form, the coordinates $X$ and $\varrho$ should not be identified in the present construction ---$\varrho$ is an intrinsic radial coordinate on the hyperboloids of constant $\psi$ defined in Section \ref{Section:Compactification-2-Sphere} and its relation to the compact radial coordinate $\chi$ is
\[
\chi=\tanh\left(\frac{\varrho}{2}\right),
\qquad
\varrho=2\operatorname{arctanh}\chi.
\]
In other words, it is independent of $\tau$. By contrast,
$X$ is introduced via a spacetime coordinate transformation and is a function of both $\tau$ and $\chi$. In particular, since $\lim_{\tau\rightarrow1}X(\tau,\chi)=0$ for fixed $0\leq\chi<1$, it cannot distinguish different points on $\mathcal{H}^{+}$, whereas $\varrho$, through its dependence on $\chi$, remains a valid radial coordinate there. A similar distinction between intrinsic and spacetime coordinate transformations arises in the comparison of Friedrich’s cylinder with the Ashtekar–Romano hyperboloid at spatial infinity; see Subsections 3.4.1 and 3.4.2 in \cite{MagVal21}.
}
\end{remark}

\begin{remark}
    {\em This discussion gives a direct interpretation of the hyperboloid $\mathcal{H}^{+}$ at $\tau=1$. In the construction introduced in \cite{FigHavProSal22,BorChaHer25}, the unit hyperboloid $T_{\bullet} \simeq \mathbb{H}^{3}$ at $i^+$ provides the base space of the extended Carrollian boundary $\mathrm{Ti}$. Our calculation shows that this base space is related to $\mathcal{H}^{+}$ by a conformal rescaling with the conformal factor given by $\varpi^{-2}\vartheta^2$. The full extended timelike boundary is a line bundle over this base, $\mathrm{Ti}\simeq\mathbb R_u\times\mathbb H^3$,
    where the additional coordinate $u$ distinguishes asymptotic timelike curves with the same limiting direction. }
\end{remark}

\section{Nonlinear perturbations of Minkowski spacetime}
\label{Section:NonlinearPerturbations}

In this section, it is shown that the construction of the hyperboloid at timelike infinity for the Minkowski spacetime discussed in the preceding sections can be extended to suitable nonlinear perturbations. The perturbations of the Minkowski spacetime under consideration are obtained as a solution to an initial value problem with data prescribed on a hyperboloid. The results of this section show the robustness of our construction and open the door to further analysis of the hyperboloidal initial value problem for the vacuum Einstein field equations.

\medskip
We recall the classic semiglobal existence and nonlinear stability of
the Minkowski spacetime ---\cite{Fri86a,LueVal09,CFEBook}:

\begin{theorem}
 \label{Theorem:SemiglobalStabilityMinkowski}
Small enough perturbations of conformally smooth  hyperboloidal initial data for the
Minkowski spacetime give rise to solutions to the vacuum Einstein field equations
which exist globally towards the future, are future null geodesically complete, and have an asymptotic structure similar
to that of the Minkowski spacetime. In particular, the development
admits a smooth conformal extension for which the null generators of
future null infinity $\mathscr{I}^+$ intersect at timelike infinity $i^+$. 
\end{theorem}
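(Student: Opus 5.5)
Since this is Friedrich's semiglobal stability theorem, quoted from \cite{Fri86a,LueVal09,CFEBook}, the plan follows the argument of \cite{LueVal09,CFEBook}, which uses conformal Gaussian systems. We start from conformally smooth hyperboloidal data for the Minkowski spacetime on the unphysical hyperboloid $\mathcal{H}_\star$, given in the representation of Section \ref{Section:Conformal Gaussian Systems}. Perturbed data are solutions of the conformal constraint equations on $\mathcal{H}_\star$ that are close to the Minkowski data in a Sobolev norm $H^m$ with $m$ large. We then formulate the evolution problem for Friedrich's extended conformal Einstein field equations in a conformal Gaussian gauge. The gauge is built from conformal geodesics leaving $\mathcal{H}_\star$ orthogonally, with a Weyl propagated frame, and with initial data for $\bmbeta$ chosen as in the Minkowski case. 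By Proposition \ref{Proposition:Canonical-conformal-factor}, the conformal factor $\Theta$ is then a quadratic polynomial in $\tau$ whose coefficients are fixed by the data on $\mathcal{H}_\star$. It therefore coincides with the Minkowski expression \eqref{ConformalFactor:Beltrami-KleinCGSystem}, independently of the perturbation. In this gauge the extended conformal field equations reduce to a first order symmetric hyperbolic system for the frame, connection coefficients, Schouten tensor and rescaled Weyl spinor. The Minkowski spacetime on the compact domain $\mathscr{D}$ (suitably extended beyond $\chi=1$ and beyond $\tau=1$ to avoid boundaries) is an explicit smooth background solution.

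The core step is Cauchy stability for symmetric hyperbolic systems, as in Kato's theorem. The Minkowski solution exists on a slightly enlarged compact region containing $\{\tau\leq 1\}$. Hence, for data sufficiently close to the Minkowski data, the perturbed solution exists on the same region and is $H^m$-close to the background, so it is $C^k$-close by Sobolev embedding. To obtain a domain of dependence structure that does not depend on the unphysical extension of the data beyond $\chi=1$, we either use the causal structure of the unphysical metric or use Friedrich's extension trick for the data on a larger slice. Next, the propagation of the constraints shows that the solution of the reduced system solves the full extended conformal equations. Hence $\tilde{\bmg}=\Theta^{-2}\bmg$ solves the vacuum Einstein equations wherever $\Theta\neq 0$.

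The remaining step is to read off the asymptotic structure. Because $\Theta$ is known a priori, its zero set is fixed. Closeness to Minkowski gives $\mathbf{d}\Theta\neq0$ on $\mathscr{I}^+$, with $\mathscr{I}^+$ null and the conformal geodesics reaching it at finite $\tau$. On the other hand, $\Theta$ and $\mathbf{d}\Theta$ vanish at the $\tau=1$ endpoint, as in the Remark of Section \ref{Section: Geodesics starting from a Cauchy hypersurface}. Future null geodesic completeness follows from the standard argument that the unphysical affine parameter becomes infinite as one approaches $\Theta=0$.

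I expect the main obstacle to be the identification of $i^+$ as a point in a smooth conformal extension. This is because the conformal Gaussian gauge degenerates there: the metric becomes degenerate at $\tau=1$, which is precisely the hyperboloid $\mathcal{H}^+$. My first approach would be to note that the unphysical solution in the Gaussian gauge stays regular up to and beyond $\tau=1$. One then shows that the metric $\bmg$ itself, not the frame coordinates, extends smoothly, and that the whole set $\mathcal{H}^+$ collapses to a single point in a regular chart of $\bmg$, close to the Minkowski one. The generators of $\mathscr{I}^+$ then converge to that point. Alternatively, one can appeal directly to Friedrich's original argument \cite{Fri86a} for this last step.
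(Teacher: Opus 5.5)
\textbf{Verdict: there is a genuine gap, and it lies in the choice of gauge.} The paper gives no proof of this theorem; it quotes it from \cite{Fri86a,LueVal09,CFEBook} and then uses it as an \emph{input} in Section~\ref{Section:NonlinearPerturbations}. Your overall scheme is the one in \cite{LueVal09,CFEBook}: extended conformal field equations in a conformal Gaussian gauge, Cauchy stability for the reduced symmetric hyperbolic system, and propagation of the constraints. The problem is that you take $\bmbeta_\star$ as in this paper, so that $\Theta$ reduces to \eqref{ConformalFactor:Beltrami-KleinCGSystem}. In that gauge the Minkowski background is not a smooth solution of the reduced evolution system on any set meeting $\{\tau=1\}$ or $\{\chi=1\}$.

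\textbf{Why the background is singular at $\tau=1$.} Proposition~\ref{Proposition:Canonical-conformal-factor} gives $\dot\Theta_\star^2-2\Theta_\star\ddot\Theta_\star=\varkappa^2\Theta_\star^2\sum_{\bmi}(\beta_{\bmi})_\star^2$. For $\bmbeta=\alpha\bmx^{\prime\flat}$ all the $\beta_{\bmi}$ vanish, so $\Theta$ has a double zero at $\tau=1$ on \emph{every} curve. The whole congruence therefore converges to the single point $i^+$, and a Gaussian system built on it must collapse there. Correspondingly:
\begin{itemize}
\item $\bmeta|_{\tau=1}$ has rank one, so any $\bmeta$-orthonormal frame has spatial legs of size $(1-\tau)^{-2}$;
\item $\beta_{\bmzero}\propto(1-\tau)^{-1}$;
\item curvature quantities blow up, e.g.\ $\Lambda\sim(1-\tau)^{-4}$ in Section~\ref{Section:MasslessFields}.
\end{itemize}

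\textbf{Why it is singular at $\chi=1$.} Here $\varkappa=\varpi\to0$ as $\chi\to1$. The curves become null there, and $\bme_{\bmzero}=\dot{\bmx}/\varkappa=\frac{2\fraka}{1-\chi^2}\bmpartial_\tau$ blows up on $\mathscr{I}^+$. Also, $\mathscr{I}^+$ is ruled by the limiting null curves $\chi_\star=1$; it is not crossed by the congruence at finite $\tau$, as you assert.

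\textbf{Consequences.} There is no compact region containing $\{\tau\leq1\}$, or extending past $\chi=1$, on which the background exists. Cauchy stability then controls the perturbed solution only on compact subsets of $\{\tau<1,\ \chi<1\}$, and neither $\mathscr{I}^+$ nor $i^+$ is reached. Your remedy for the last step is to assume the Gaussian-gauge solution is regular up to and beyond $\tau=1$ and then collapse $\mathcal{H}^+$ to a point in a regular chart. That presupposes what has to be proved, and it contradicts the degeneracy you yourself point out.

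\textbf{The missing idea.} As in \cite{LueVal09,CFEBook}, choose the gauge data so that the congruence is regular at $i^+$. Take the spatial part of $\bmbeta_\star$ equal to $\Theta_\star^{-1}\mathbf{d}\Theta_\star$. The identity above then shows that only the curve through the critical point of $\Theta_\star$ has a double zero, and that zero is $i^+$. All other curves stay timelike, cross $\mathscr{I}^+$ transversally at the first simple zero of $\Theta$, and continue beyond it.

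For example, take $\varkappa=1$, $\Theta_\star=\tfrac{1}{2}(1-\rho^2)$ in Poincar\'e ball coordinates on a unit hyperboloid, and $\dot\Theta_\star=-1$. One gets $\Theta=\tfrac{1}{2}\big((1-\tau)^2-\rho^2\big)$: the curves are straight lines of a flat unphysical metric, $\mathscr{I}^+$ is the cone $\rho=1-\tau$, and $i^+$ is its vertex. In this gauge the background is smooth on a neighbourhood of the compactified $D^+(\mathcal{H}_\star)$ including $i^+$. Cauchy stability applies, and the a priori known $\Theta$ locates $\mathscr{I}^+$ and $i^+$ in the perturbed development.

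The blown-up representation with $\mathcal{H}^+$ can only be built afterwards, from the theorem. That is the order the paper follows.

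\textbf{A smaller error.} In the null completeness argument it is the \emph{physical} affine parameter that diverges, not the unphysical one. This is because $\mathrm{d}\tilde{s}=\Theta^{-2}\,\mathrm{d}s$, while $\Theta$ vanishes to first order at finite unphysical affine parameter.
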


\medskip
While the original statement of Friedrich's result only considers the future completeness of null geodesics, the future completeness of timelike geodesics can be readily obtained. Indeed, one has the following corollary:

\begin{corollary}
The semiglobal spacetimes of Theorem
\ref{Theorem:SemiglobalStabilityMinkowski} are future timelike
geodesically complete.
\end{corollary}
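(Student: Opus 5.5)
The plan is to use the smooth conformal extension $(\mathcal{M},\bmg,\Theta)$ provided by Theorem \ref{Theorem:SemiglobalStabilityMinkowski} together with Lemma \ref{Lemma:GeodesicsAsConformalGeodesics}, arguing as in Section \ref{Section:TimelikeGeodesicsMinkowski}. Let $x(s)$ be a future-directed timelike $\tilde{\bmg}$-geodesic in $D^+(\tilde{\mathcal{H}}_\star)$, parametrised by $\tilde{\bmg}$-proper time $s$. By Lemma \ref{Lemma:GeodesicsAsConformalGeodesics}, the curve reparametrised as $x(\tau)$ with $s=s(\tau)$ solving \eqref{ReparametrisationEquation1}-\eqref{ReparametrisationEquation2}, and with $\bmbeta=\alpha\bmx'^\flat$, is a conformal geodesic. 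Because the reparametrisation formulae \eqref{GeneralReparametrisation1}-\eqref{GeneralReparametrisation2} do not depend on the metric, we know $s(\tau)$ explicitly. Choosing data with $\alpha_\star<0$ gives
\[
s\to\infty \quad\Longleftrightarrow\quad \tau\to\tau_\star-\frac{2\tau'_\star}{\alpha_\star}.
\]
So the task reduces to showing that the conformal geodesic exists on the whole finite interval $[\tau_\star,\tau_\star-2\tau'_\star/\alpha_\star)$.

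The key step is to work with the conformal geodesic equations in the unphysical spacetime, where they are regular. Conformal geodesics are invariant under conformal rescalings: the curve stays the same and $\bmbeta$ changes by $\Theta^{-1}\mathbf{d}\Theta$. The system \eqref{CGEqn1}-\eqref{CGEqn2} written with respect to $\bmg$ and its Schouten tensor $\bmL$ is therefore a smooth ODE system on $\mathcal{M}$, including near $\mathscr{I}^+$ and $i^+$. By Proposition \ref{Proposition:Canonical-conformal-factor} and Lemma \ref{Lemma:InitialData}, the canonical factor $\Theta(\tau)$ along the curve is an explicit quadratic that vanishes exactly at $\tau_+\equiv\tau_\star-2\tau'_\star/\alpha_\star$, where $\Theta=\dot\Theta=0$. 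In exact Minkowski this gives $\Theta=(1-\tau)^2$.

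To conclude, suppose the physical geodesic were inextendible at some finite proper time $s_0<\infty$. This corresponds to $\tau_0<\tau_+$, where $\Theta(\tau_0)>0$. The unphysical conformal geodesic is a solution of a smooth ODE, and the curve remains in $J^+(\mathcal{H}_\star)$. Since $\mathcal{M}$ includes $\mathscr{I}^+\cup\{i^+\}$, this region is compact in the extension of Theorem \ref{Theorem:SemiglobalStabilityMinkowski}. Hence the conformal geodesic can be continued past $\tau_0$, possibly up to the boundary. While $\Theta>0$, the continued curve lies in the physical region, and the physical geodesic extends past $s_0$, a contradiction. Thus $s$ ranges over $[s_\star,\infty)$.

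The main obstacle is controlling $\dot{\bmx}$ and $\bmbeta$, so that the unphysical solution does not blow up before $\tau_+$ through unbounded velocity rather than by leaving compact sets. I would first handle this by Weyl-propagating a $\bmg$-orthonormal frame along the curve. Then $\bmg(\dot{\bmx},\dot{\bmx})=\varkappa^2$ bounds $\dot{\bmx}$ in the frame, and Proposition \ref{Proposition:Canonical-conformal-factor} gives $\Theta\beta_{\bmi}=\Theta_\star(\beta_{\bmi})_\star$ and $\varkappa\Theta\beta_{\bmzero}=\dot\Theta$. These show that the frame components of $\bmbeta$ stay bounded while $\Theta$ is bounded away from zero on $[\tau_\star,\tau_0]$. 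The remaining frame propagation is then a linear ODE with coefficients bounded on compact sets of $\mathcal{M}$, so Gr\"onwall's inequality rules out blow-up.

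If this route is awkward, a fallback is a stability argument near the Minkowski congruence, using smallness of the data and continuous dependence for ODEs on the compact extended manifold. That approach only covers geodesics close to the reference congruence, so I would keep the compactness argument as the primary one.
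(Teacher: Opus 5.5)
\textbf{There is a gap: nothing shows that the unphysical conformal geodesic stays regular up to $\tau_+$.}

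Your reduction to existence on $[\tau_\star,\tau_+)$ is fine. The step that carries the whole argument is that the unphysical conformal geodesic ``can be continued past $\tau_0$'', and this is not established.

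First, you use one symbol for two different functions. One is the conformal factor $\Xi$ of Friedrich's extension $\bmg=\Xi^2\tilde{\bmg}$, whose zero set is $\mathscr{I}^+\cup\{i^+\}$. The other is the canonical factor $\Theta=\varkappa\tau'$ of Proposition~\ref{Proposition:Canonical-conformal-factor}, which is fixed by the parametrisation of the individual curve. The sentence ``while $\Theta>0$ the curve lies in the physical region'' needs a link between them.

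A link exists. Let $\bmbeta_{\bmg}=\bmbeta-\Xi^{-1}\mathbf{d}\Xi$ be the 1-form of the same conformal geodesic relative to $\bmg$. From $\dot\Theta=\langle\bmbeta,\dot{\bmx}\rangle\Theta$ one gets $\frac{\mathrm{d}}{\mathrm{d}\tau}(\Theta/\Xi)=\langle\bmbeta_{\bmg},\dot{\bmx}\rangle\,\Theta/\Xi$. Hence $\Theta$ and $\Xi$ vanish together wherever the unphysical solution is regular.

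The same identity shows exactly what you must exclude. Suppose the curve approached $\{\Xi=0\}$ as $\tau\to\tau_0<\tau_+$. Then $\Theta/\Xi\to\infty$, so $\int^{\tau_0}\langle\bmbeta_{\bmg},\dot{\bmx}\rangle\,\mathrm{d}\tau=+\infty$, and the unphysical solution necessarily blows up there.

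Compactness of $J^+(\mathcal{H}_\star)$ in $\mathcal{M}$ does not prevent this. The ODE lives on the bundle of $(x,\dot{\bmx},\bmbeta_{\bmg})$, and the fibre variables can diverge while the base point stays in a compact set. This is exactly how incomplete geodesics imprisoned in compact Lorentzian manifolds arise, so smoothness plus compactness cannot by themselves give completeness.

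\textbf{Your Gr\"onwall repair does not close the gap.} The identities $\Theta\beta_{\bmi}=\Theta_\star(\beta_{\bmi})_\star$ and $\varkappa\Theta\beta_{\bmzero}=\dot\Theta$ bound the components of the \emph{physical} $\bmbeta$ in a frame orthonormal for $\Theta^2\tilde{\bmg}$. They do not bound components in a fixed chart of $\mathcal{M}$:
\begin{itemize}
\item the frame can be boosted without bound relative to any fixed reference frame, since the Lorentz group is non-compact;
\item the components $\beta_{\bma}-\Xi^{-1}\bme_{\bma}(\Xi)$ of $\bmbeta_{\bmg}$ are not controlled at all near $\Xi=0$.
\end{itemize}
Moreover, whether you use $\tilde{\bmnabla}$ (whose Christoffel symbols are singular at $\Xi=0$ in a chart of $\mathcal{M}$) or $\bmnabla$, the Weyl propagation in coordinates contains the term $-\varkappa\,\Gamma_\mu{}^\nu{}_\lambda\, e_{\bmzero}{}^{\mu}e_{\bma}{}^{\lambda}$. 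Because $\dot{\bmx}=\varkappa\bme_{\bmzero}$, this term is quadratic in the frame. The system is therefore of Riccati type rather than linear, and Gr\"onwall's inequality gives nothing.

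\textbf{What the paper uses instead.} The missing ingredient is quantitative control of the metric along arbitrary timelike geodesics. The paper's one-line proof imports this by appealing to the techniques of \cite{ChoCot02}. As I understand that reference, it is a completeness criterion for globally hyperbolic spacetimes. It asks for uniform bounds on the lapse, shift and spatial metrics of a time foliation, together with time-integrability of the lapse gradient and the extrinsic curvature. The required bounds would come from the asymptotics in Theorem~\ref{Theorem:SemiglobalStabilityMinkowski}.

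Your fallback cannot replace this. As you note, it only reaches geodesics near the reference congruence, and it degenerates as the initial direction approaches the light cone.
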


\begin{proof}
This follows using the techniques in \cite{ChoCot02}.
\end{proof}

\begin{remark}
{\em The timelike geodesics terminate at $i^+$. }
\end{remark}

The spacetimes $(\tilde{\mathcal{M}},\tilde\bmg)$ obtained from the previous theorem have a a line element of the form
\begin{equation}
\tilde\bmg = \tilde\bmeta + \breve{g}_{\mu\nu}(\bar{x}) \mathbf{d} \bar{x}^\mu \otimes \mathbf{d}\bar{x}^\nu,
\label{PerturbedMetric}
\end{equation}
where $\tilde\bmeta$ is, for example, the Minkwoski line element in the coordinates $\bar{x}=(\bar{x}^\mu)=(\psi,\chi,\vartheta^\mathcal{A})$  ---see equation \eqref{PhysicalMinkowski:psi-chiCoordinates}. The coefficients $\breve{\bmg}_{\mu\nu}(\bar{x})$ decay suitably towards the conformal boundary. Observe that the coordinates $\bar{x}$ of the background spacetime are used to coordinatise the perturbed spacetime. In particular, there exist  a conformal factor $\Xi=\Xi(\bar{x})$ such that the unphysical metric 
\[
\bmg \equiv \Xi^2 \tilde\bmg
\]
is smooth and, moreover, extends smoothly to the locus of points for which $\Xi=0$ ---i.e. future null infinity $\mathscr{I}^+$. 

Now, recall that a congruence of $\tilde\bmeta$-geodesics of the Minkowski spacetime reaching $i^+$ are of the form
\[
x(s)=(s,\chi_\star,\vartheta^\mathcal{A}_\star), \qquad s\in[0,\infty), \quad (\chi_\star,\vartheta^\mathcal{A}_\star)\in \mathcal{H}_\star.
\]
In the perturbed spacetime we prescribe an analogous congruence with initial data
\[
x(0)= (0,\chi_\star,\vartheta^\mathcal{A}_\star), \qquad \dot{\bmx}(0) = \bmpartial_\psi.
\]
The resulting congruence of $\tilde{\bmg}$-geodesics is of the form
\[
x_{(\chi_\star,\vartheta^\mathcal{A}_\star)}(s)= \big(\psi(s,\chi_\star,\vartheta^\mathcal{A}_\star),\chi(s,\chi_\star,\vartheta^\mathcal{A}_\star), \vartheta^\mathcal{A}(s,\chi_\star,\vartheta^\mathcal{A}_\star)\big), \qquad s\in[0,\infty), \quad (\chi_\star,\vartheta^\mathcal{A}_\star) \in\mathcal{H}_\star.
\]

From the stability statement it follows not only that the above curves are future complete but also that the curves in the congruence are non-intersecting. In turn, this implies that the (smooth) map
\[
(s,\chi_\star,\vartheta^\mathcal{A}_\star)\mapsto \big(\psi(s,\chi_\star,\vartheta^\mathcal{A}_\star),\chi(s,\chi_\star,\vartheta^\mathcal{A}_\star), \vartheta^\mathcal{A}(s,\chi_\star,\vartheta^\mathcal{A}_\star)\big)
\]
is invertible with smooth inverse given by
\[
(\psi,\chi,\vartheta^\mathcal{A})\mapsto \big(s(\psi,\chi,\vartheta^\mathcal{A}),\chi_\star(\psi,\chi,\vartheta^\mathcal{A}),\vartheta^\mathcal{A}_\star(\psi,\chi,\vartheta^\mathcal{A})\big).
\]
Thus, 
\[
\grave{x}=(\grave{x}^\mu) =(s,\chi_\star,\vartheta^\mathcal{A}_\star)
\]
defines a system of Gaussian coordinates covering $D^+(\mathcal{H}_\star)$ ---i.e. the future domain of dependence of $\mathcal{H}_\star$. One can thus rewrite the metric \eqref{PerturbedMetric} in terms of the coordinates $\grave{x}$ so as to obtain a line element of the form
\[
\tilde{\bmg} = \grave{g}_{\mu\nu}\mathbf{d}\grave{x}\otimes \mathbf{d}\grave{x},
\]
for some smooth metric coefficients $\grave{g}_{\mu\nu}=\grave{g}_{\mu\nu}(\grave{x})$. 

\medskip
As discussed in Section \ref{Subsection:Reparametrisation} the reparametrisation $s\mapsto\tau=\tau(s)$ that allows to recast timelike geodesics in a vacuum spacetime as conformal geodesics is Universal ---in the sense that it does not depend on the specific form of the metric $\tilde\bmg$. Following the discussion in Section \ref{Section:Compactification-2-Sphere} we thus consider
\[
s=\frac{\tau}{1-\tau}
\]
so that $\tau\rightarrow 1$ as $s\rightarrow \infty$ ---see equation \eqref{Reparameterisation-psi-tau}. Accordingly, writing the metric $\tilde{\bmg}$ in terms of $\tau$ one obtains a line element of the form
\begin{equation}
\tilde{\bmg} = \check{g}_{\mu\nu} \mathbf{d}\check{x}^\mu \otimes \mathbf{d}\check{x}^\nu,
\label{ProjectiveCompactifiedMetric}
\end{equation}
with $\check{g}_{\mu\nu}=\check{g}_{\mu\nu}(\check{x})$ and $\check{x}=(\check{x}^\mu)=(\tau,\chi_\star,\vartheta^\mathcal{A}_\star)$. By construction $\mathcal{H}^+$ is diffeomorphic to $\mathcal{H}_\star$.

\medskip
To complete the discussion we now consider the conformal completion associated to the line element \eqref{ProjectiveCompactifiedMetric}. Again, following the discussion of Section \ref{Subsection:BeltramiKlein} consider a conformal factor of the form
\[
\Theta =(1-\tau)^2 \varpi
\]
where $\varpi$ is the conformal factor that conformally compactifies the asymptotically hyperbolic manifold $\tilde{\mathcal{H}}_\star$ into a manifold with boundary $\mathcal{H}_\star$ diffeomorphic to the interior of $\mathbb{S}^2$ as a subset of $\mathbb{R}^3$. It then follows that the metric 
\begin{eqnarray*}
&& \bmg = \Theta^2\tilde\bmg \\
&& \phantom{\bmg} = (1-\tau)^4\varpi^2  \check{g}_{\mu\nu} \mathbf{d}\check{x}^\mu \otimes \mathbf{d}\check{x}^\nu.
\end{eqnarray*}
This metric is a degenerate but smooth at $\tau=1$. 

\medskip
Using the above results one deduces the following:

\begin{theorem}
\label{Proposition:StabilityGauge}
The  semiglobal spacetimes of Theorem
\ref{Theorem:SemiglobalStabilityMinkowski} admit a conformal
representation with $\mathscr{I}$-fixing for which timelike infinity
is diffeomorphic to the unit solid ball in $\mathbb{R}^3$. 
\end{theorem}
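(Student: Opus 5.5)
The proof assembles the constructions of this section into a single conformal representation. It has three parts: building the gauge on the physical perturbed spacetime, compactifying with the universal reparametrisation and conformal factor, and identifying $\mathscr{I}^+$ and $\mathcal{H}^+$ in it.

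We start from the semiglobal development $(\tilde{\mathcal{M}},\tilde\bmg)$ of Theorem \ref{Theorem:SemiglobalStabilityMinkowski}, written in background coordinates $\bar x=(\psi,\chi,\vartheta^\mathcal{A})$ as in \eqref{PerturbedMetric}. From each point $(\chi_\star,\vartheta^\mathcal{A}_\star)$ of the initial hyperboloid $\mathcal{H}_\star$ we launch the timelike $\tilde\bmg$-geodesic with initial tangent $\bmpartial_\psi$, normalised to unit $\tilde\bmg$-length. The corollary above guarantees future completeness of each curve. The stability theorem, through Cauchy stability of the geodesic ODE against the Minkowski congruence $x(s)=(s,\chi_\star,\vartheta_\star^\mathcal{A})$, shows that the congruence has no focal points and covers $D^+(\mathcal{H}_\star)$. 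Hence $(s,\chi_\star,\vartheta^\mathcal{A}_\star)$ are smooth Gaussian coordinates.

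Next we invoke Lemma \ref{Lemma:GeodesicsAsConformalGeodesics}. The reparametrisation \eqref{GeneralReparametrisation1}--\eqref{GeneralReparametrisation2} is independent of the metric, so $s=\tau/(1-\tau)$ with $\alpha=2(\tau-1)$ turns every member of the congruence into a conformal geodesic. The constants $\Theta_\star,\dot\Theta_\star,\ddot\Theta_\star$ are fixed by Lemma \ref{Lemma:InitialData}, and Proposition \ref{Proposition:Canonical-conformal-factor} then gives the canonical conformal factor along each curve. Choosing $\varkappa=\varpi(\chi_\star)$, constant along each curve, this factor is
\[
\Theta=\varpi(\chi_\star)(1-\tau)^2 ,
\]
which is explicit and known a priori. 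Setting $\bmg=\Theta^2\tilde\bmg$, the coordinate range becomes $\tau\in[0,1]$, $\chi_\star\in[0,1]$. The set $\{\tau=1\}$ is the image of the future endpoints of the congruence, and it is parametrised by $(\chi_\star,\vartheta^\mathcal{A}_\star)$ ranging over the compactified $\mathcal{H}_\star$. In the Poincar\'e ball picture of Subsection \ref{Section:Compactification-2-Sphere} this is $\overline{\mathbb{B}}^3$, so $\mathcal{H}^+\simeq\mathcal{H}_\star\simeq$ the unit solid ball.

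For $\mathscr{I}$-fixing we must show that the zero set of $\Theta$ for $\tau<1$, namely $\{\chi_\star=1\}$, is exactly $\mathscr{I}^+$ of the smooth conformal extension in Theorem \ref{Theorem:SemiglobalStabilityMinkowski}. Two facts are needed. First, $\Theta/\Xi$ must extend smoothly and positively to $\mathscr{I}^+$, where $\Xi$ is Friedrich's conformal factor. Second, the conformal Gaussian coordinates must remain regular up to $\chi_\star=1$. We would establish both by reformulating the problem as the conformal Gaussian system for the extended conformal Einstein equations, as in \cite{LueVal09,CFEBook}. In that system the conformal geodesics are integrated in the unphysical spacetime, where they are regular curves that become null at $\mathscr{I}^+$. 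The Minkowski congruence of Section \ref{Subsection:BeltramiKlein} is a smooth solution on the compact domain $\mathscr{D}$. Smooth dependence on the data then gives a nearby smooth congruence with non-intersecting curves, a smooth Weyl-propagated frame, and the same explicit $\Theta$. With this, $\mathscr{I}^+=\{\chi_\star=1\}$, $\mathcal{H}^+=\{\tau=1\}$, and $\bmg$ is smooth but degenerate at $\tau=1$, with the conformal metric of the form described in that section.

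The main obstacle is the behaviour at $\chi_\star\to1$ and at the corner $\mathcal{C}^+$. Along the generators of $\mathscr{I}^+$ the curves become null, and the Minkowski metric \eqref{ConformalMinkowski:B-K} is already degenerate at $\tau=1$. Standard Cauchy stability on a compact domain therefore needs care there. We would handle it by working with the regular unphysical conformal geodesic equations and the Weyl-propagated frame. These are smooth up to and including $\tau=1$ in the Minkowski case because $(1-\tau)$ appears only polynomially. We would then apply the stability of \cite{Fri86b,LueVal09} for the extended conformal field equations on $[0,1]\times\overline{\mathbb{B}}^3$, which yields regularity uniformly up to $\mathcal{C}^+$.
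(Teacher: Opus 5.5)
Your first two parts coincide with the paper's argument. These are: the $\tilde\bmg$-geodesics launched from $\mathcal{H}_\star$ with initial tangent $\bmpartial_\psi$, the Gaussian coordinates $(s,\chi_\star,\vartheta^\mathcal{A}_\star)$, the metric-independent reparametrisation $s=\tau/(1-\tau)$, the factor $\Theta=\varpi(1-\tau)^2$, and $\mathcal{H}^+=\{\tau=1\}\simeq\mathcal{H}_\star\simeq\overline{\mathbb{B}}^3$. Your unit normalisation of $\bmpartial_\psi$ and the choice $\varkappa=\varpi(\chi_\star)$ in Lemma~\ref{Lemma:InitialData} are, if anything, more careful than the text. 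The paper stops there: it asserts non-intersection ``from the stability statement'' and reads off $\mathscr{I}^+$ as the zero set $\{\chi_\star=1\}$ of $\varpi$.

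The genuine problem is the extra argument you give for $\mathscr{I}$-fixing and regularity up to $\mathcal{C}^+$. Its premise is that the unphysical conformal geodesic equations and the Weyl-propagated frame are smooth up to $\tau=1$ because $(1-\tau)$ ``appears only polynomially''. This is false: $(1-\tau)$ appears with negative powers. The spatial part of $\bmeta$ in \eqref{ConformalMinkowski:B-K} carries the factor $(1-\tau)^4$, so the inverse \eqref{EtaSharp} contains $(1-\tau)^{-4}$. Consequently any $\bmeta$-orthonormal frame must have coordinate components growing like $(1-\tau)^{-2}$, and the connection coefficients blow up; compare the non-vanishing spin coefficients of Appendix~\ref{Appendix:NPComputations}, all of which carry $(1-\tau)^{-2}$. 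The edge is no better. The curves become null on $\mathscr{I}^+$, so $\bme_\bmzero=\dot{\bmx}/\varkappa=2\fraka(1-\chi^2)^{-1}\bmpartial_\tau$ diverges as $\varkappa=\varpi\to0$, and $\upmu$, $\upgamma$ carry $(1-\chi)^{-1}$. So in this gauge the Minkowski background is not a regular solution on $[0,1]\times\overline{\mathbb{B}}^3$ of the conformal geodesic and Weyl-propagation equations, nor of the extended conformal field equations. Hence neither smooth dependence on data nor Cauchy stability can be invoked there. The stability results of \cite{Fri86b,LueVal09} rely precisely on a background that is a smooth solution of a regular symmetric hyperbolic system on a compact domain. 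They therefore do not transfer to a gauge that collapses the whole congruence onto $i^+$ and degenerates it to null curves at the edge.

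The same non-uniformity affects your first part. Cauchy stability controls the perturbed congruence only on $\tau\in[0,1-\epsilon]$, $\chi_\star\leq 1-\epsilon$, with a smallness threshold that may shrink to zero as $\epsilon\to0$. So non-intersection and covering of $D^+(\mathcal{H}_\star)$ up to $\mathcal{H}^+$ and $\mathscr{I}^+$ do not follow from it. What is missing is a uniform argument near $i^+$ and near the edge of $\mathcal{H}_\star$. One option is to integrate the conformal geodesic equations in Friedrich's smooth unphysical spacetime $(\mathcal{M},\Xi^2\tilde\bmg)$. There they are regular ODEs through $i^+$, and curves with null initial tangent along $\mathscr{I}^+$ stay on its null generators. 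You would then need to control quantitatively the blow-down map $(\tau,\chi_\star,\vartheta^\mathcal{A}_\star)\mapsto\mathcal{M}$. Its Jacobian degenerates at $\tau=1$, so its injectivity for $\tau<1$ is not preserved by merely $C^1$-small perturbations. The paper does not supply this step either; it is exactly what is hidden in ``from the stability statement it follows''. Your outline is therefore as complete as the paper's, but the mechanism you add to close it would fail as written.
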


\begin{remark}
    {\em Theorem \ref{Proposition:StabilityGauge} ensures the robustness of the constructions described in the precedent subsection and opens the possibility of using this type of gauge for the analysis of full non-linear perturbations of the Minkowski spacetime by means of $\mathscr{I}$-fixing gauges based on (conformal) Gaussian systems. This will be analysed elsewhere.}
\end{remark}

\begin{remark}
    {\em While the analysis in this section has been based on the Poincar\'e ball gauge, a similar construction can be carried out for the Zenginoglu's gauge.}
\end{remark}

\section{Conclusions and outlook}
\label{Section:Conclusions}
The analysis in the article, which can be regarded of exploratory nature, show that conformal gauges based on the properties of timelike geodesics can be used to provide an extended representation of timelike infinity for spacetimes which are suitable perturbations of the Minkowski spacetime. Crucially, this representation of timelike infinity is different to the one given by the framework of the $\mbox{Ti}^+$ construction. Thus, we show that, conceptually, there is no unique manner of obtaining a blow up of timelike infinity.

Our discussion has focused on spacetimes which are obtained as a solution to an initial value problem on a hyperboloid. The reason for this focus lies on the fact that for these spacetimes, one has available a nonlinear stability result ---accordingly, one has detailed information concerning the asymptotics of the solution. In turn, our construction has been used to provide a succinct analysis of the behaviour of solutions to the wave equations along future null infinity and in a neighbourhood of timelike infinity. Our construction shows close connections to the methods of \emph{b-calculus} ---see e.g. \cite{Mel09,Vas18}. A more ambitious application would be to use the gauges considered in this article to study the conformal Einstein equations for perturbations of Minkowski spacetime. These ideas will be developed elsewhere.

Finally, we also stress that the ideas developed in this article can, in principle, be applied to other vacuum spacetimes like the Schwarzschild and Kerr spacetimes. This, in turn, would provide a new set-up to study the late time behaviour of linear and non-linear perturbations of stationary black holes by means of conformal methods. Again, these ideas will be developed elsewhere.

\acknowledgments
JAVK acknowledges the hospitality of the Erwin Schr\"odinger Institute for Mathematics and Physics of the University of Vienna while attending workshop on \emph{Celestial Holography and Carrollian Physics} in April 2024. The research described in this article was initiated in this period. 

\medskip
MM gratefully acknowledges support from Perimeter Institute for Theoretical Physics and the Fields Institute for Research in Mathematical Sciences through the Fields-AIMS-Perimeter fellowship. Research at Perimeter Institute is supported by the Government of Canada through the Department of Innovation, Science and Economic Development and by the Province of Ontario through the Ministry of Colleges and Universities. This work was supported by the Simons Collaboration on Celestial Holography.  MM acknowledges support by the Simons Foundation grant SFI-MPS-T-Institutes-00010825, and by the State Treasury funds as part of a task commissioned by the Minister of Science and Higher Education under the project Organization of the Simons Semesters at the Banach Center - New Energies in 2026- 2028 (MNiSW/2025/DAP/491). MM thanks the hospitality of the Mathematical Institute of the Polish Academy of Sciences as part of the programme \emph{Simons Programme on Twistor Theory and its Applications} in August/September 2026. 

\medskip

Achilleas Vasileios Koukas gratefully acknowledges the hospitality and support of the Perimeter Institute for Theoretical Physics during his participation in the 2026 PSI Start Research Internship program, where part of this research was conducted.

\medskip
We are grateful to Rod Gover, Emil Have and Yannick Herfray for their helpful comments while working on this manuscript. Calculations in this project used the computer algebra system Mathematica with the package xAct \cite{xAct}. 

\paragraph{Use of AI} The authors declare that the main ideas and major mathematical derivations in this work were developed independently without the use of artificial intelligence tools. GPT-6 Astra was used more locally to assist with standard coordinate changes, in the initial analysis of the transport equations in Appendix \ref{Solution-transport-equations} and with final proofreading. ChatGPT-6.1 Sol was used to assist with literature searches, the revision of the abstract and introduction, and exploratory work on the construction of projective extensions. 

All references and mathematical results were independently verified by the authors and the authors assume full responsibility for the correctness of the final manuscript.


\appendix
\setcounter{equation}{0}  
\renewcommand{\theequation}{\thesection.\arabic{equation}}

\section{Solving the reparametrisation equations}
\label{Appendix:Solving the reparametrisation equations}
For completeness and given that it is of interest on its own right we
provide a derivation of the solution
\eqref{GeneralReparametrisation1}-\eqref{GeneralReparametrisation2}.

\medskip
We begin by observing that equation \eqref{GeneralReparametrisation2},
namely
\[
\dot{\alpha} =\frac{1}{2}\alpha^2 \dot{s}
\]
can be casted in the form
\[
-\left( \frac{1}{\alpha} \right)=\frac{1}{2}\dot{s},
\]
so that after integrating one gets
\[
\frac{1}{\alpha_\star}-\frac{1}{\alpha} = \frac{1}{2}(s-s_\star).
\]
Further rearrangements lead to the expression
\begin{equation}
  \alpha =\frac{2\alpha_\star}{2-\alpha_\star(s-s_\star)}.
  \label{Alpha:FunctionOfs}
\end{equation}  

\medskip
Now, from equation \eqref{ReparametrisationEquation1}, namely 
\[
  \ddot{s} +\alpha \dot{s}^{ 2} =0,
\]
some manipulations readily yield that
\[
\frac{\mathrm{d}}{\mathrm{d}\tau}\left( \frac{1}{\dot{s}} \right) = \alpha.
\]
Thus, using equation  \eqref{Alpha:FunctionOfs} to eliminate $\alpha$
in the last expression, one obtains
\[
\frac{\mathrm{d}}{\mathrm{d}\tau}\left( \frac{1}{\dot{s}}  \right) = \frac{2\alpha_\star}{2-\alpha_\star(s-s_\star)}.
\]
Now, observing that
\begin{eqnarray*}
&& \frac{\mathrm{d}}{\mathrm{d}\tau}\left( \frac{1}{\dot{s}}  \right)
   = \frac{\mathrm{d}}{\mathrm{d}\tau} \left(
   \frac{1}{\mathrm{d}s/\mathrm{d}\tau}\right)\\
  && \phantom{\frac{\mathrm{d}}{\mathrm{d}\tau}\left(
     \frac{1}{\dot{s}}  \right)} =
     \frac{\mathrm{d}}{\mathrm{d}\tau}\left(
     \frac{\mathrm{d}\tau}{\mathrm{d}s} \right)\\
  && \phantom{\frac{\mathrm{d}}{\mathrm{d}\tau}\left(
     \frac{1}{\dot{s}}  \right)} =
     \frac{\mathrm{d}s}{\mathrm{d}\tau}\frac{\mathrm{d}}{\mathrm{d}s}\left(
     \frac{\mathrm{d}\tau}{\mathrm{d}s} \right)\\
  && \phantom{\frac{\mathrm{d}}{\mathrm{d}\tau}\left(
     \frac{1}{\dot{s}}  \right)} =
     \frac{1}{\mathrm{d}\tau/\mathrm{d}s}\frac{\mathrm{d}}{\mathrm{d}s}\left(\frac{\mathrm{d}\tau}{\mathrm{d}s}
     \right) = \frac{\mathrm{d}}{\mathrm{d}s}\ln \left| \frac{\mathrm{d}\tau}{\mathrm{d}s}\right|.
\end{eqnarray*}
Hence, one has that
\[
 \frac{\mathrm{d}}{\mathrm{d}s}\ln \left|
   \frac{\mathrm{d}\tau}{\mathrm{d}s}\right| =\frac{2\alpha_\star}{2-\alpha_\star(s-s_\star)}, 
\]
which, upon integration, yields
\[
\ln \left| \frac{\tau'}{\tau'_\star} \right| = \ln\left( \frac{4}{\big(2-\alpha_\star(s-s_\star)\big)^2} \right).
\]
Taking exponentials on both sides and observing that the resulting
right-hand side is positive one can remove the absolute value on the
left-hand side so as to obtain
\[
\tau' = \frac{4 \tau'_\star}{\big( (2+\alpha_\star s_\star)-\alpha_\star s \big)^2}.
\]
A final integration then gives
\[
\tau -\tau_\star = \frac{2\tau'_\star(s-s_\star)}{2+\alpha_\star(
  s_\star- s)}.
\]
Inverting this last equations one then finds
\[
s-s_\star = \frac{2(\tau-\tau_\star)}{2\tau'_\star +\alpha_\star(\tau-\tau_\star)}.
\]
Finally, substituting this expression into equation
\eqref{Alpha:FunctionOfs} one obtains an expression for $\alpha$ as a
function of $\tau$ ---namely
\[
\alpha-\alpha_\star =
\frac{\alpha^2_\star}{2\tau'_\star}(\tau-\tau_\star). 
\]

\section{The NP spin coefficients}
\label{Appendix:NPComputations}
In this appendix we provide further details of the computation of the NP spin coefficients of the varoius conformal representations of the Minkowski spacetimes considered in the main text.

\subsection{A NP-frame for the Poincar\'e ball gauge}
\label{SubSection:NPFrame}

Starting with the contravariant metric $\bmeta^\sharp$ as given by equation \eqref{EtaSharp} and completing the squares, we can bring it into the form
\begin{multline*}
\bmeta^\sharp = \frac{4\fraka^2}{(1-\chi^2)^2}\bmpartial_\tau\otimes\bmpartial_\tau \\-\frac{(1-\chi^2)^2}{(1+\chi^2)^2}\left(\frac{1}{(1-\tau)^2}\bmpartial_\chi - \frac{4\fraka\chi}{(1-\chi^2)^2}\bmpartial_\tau \right) \otimes \left( \frac{1}{(1-\tau)^2}\bmpartial_\chi - \frac{4\fraka\chi}{(1-\chi^2)^2}\bmpartial_\tau\right) \\
- \frac{1}{\chi^2(1-\tau)^4}\slashed\bmsigma^\sharp.\end{multline*}

Thus, one can read the orthonormal vectors 
\begin{eqnarray*}
&& \bme_\bmzero = \frac{2\fraka}{1-\chi^2}\bmpartial_\tau , \qquad \bme_\bmone = \frac{1-\chi^2}{(1-\tau)^2(1+\chi^2)}\bmpartial_\chi-\frac{4\fraka \chi }{1-\chi^4}\bmpartial_\tau.
\end{eqnarray*}

One can verify that 
\[
\bmeta(\bme_\bmzero,\bme_\bmzero) =1, \qquad \bmeta(\bme_\bmone,\bme_\bmone) =-1, \qquad \bmeta(\bme_\bmzero,\bme_\bmone)=0.
\]

\medskip
Now, a NP-frame $\{\bml,\,\bmn, \, \bmm,\, \bar\bmm \}$ satisfies
\[
\bmeta^\sharp = \bml\otimes\bmn +\bmn\otimes\bml - \bmm\otimes \bar\bmm -\bar\bmm \otimes \bmm,
\]
where the only non-vanishing products are given by
\[
\bmeta(\bml,\bmn)=1, \qquad \bmeta(\bmm,\bar\bmm)=-1.
\]
Given $\bme_\bmzero$, $\bme_\bmone$ as above one has that
\[
\bml = \frac{1}{\sqrt{2}}(\bme_\bmzero+\bme_\bmone), \qquad \bmn=\frac{1}{\sqrt{2}}(\bme_\bmzero-\bme_\bmone).
\]
Thus, a quick computation gives that 
\begin{subequations}
    \begin{eqnarray}
        && \bml =\frac{\sqrt{2}\fraka(1-\chi)^2}{1-\chi^4}\bmpartial_\tau +\frac{1-\chi^2}{\sqrt{2}(1-\tau)^2(1+\chi^2)}\bmpartial_\chi,   \label{NP-tetrad-first-choice_a} \\
        && \bmn =\frac{\sqrt{2}\fraka(1+\chi)^2}{1-\chi^4}\bmpartial_\tau -\frac{1-\chi^2}{\sqrt{2}(1-\tau)^2(1+\chi^2)}\bmpartial_\chi. 
          \label{NP-tetrad-first-choice_b}
    \end{eqnarray}
  
\end{subequations}
Also, we have that 
\[
\bmm\otimes \bar\bmm +\bar\bmm \otimes \bmm=\frac{1}{\chi^2(1-\tau)^4}\slashed\bmsigma^\sharp.
\]
Thus, one can set
\begin{subequations}
\begin{eqnarray}
\bmm = \frac{1}{\chi(1-\tau)^2}\bmpartial_+, \label{NP-tetrad-first-choice_c}\\
\bar\bmm = \frac{1}{\chi(1-\tau)^2}\bmpartial_-,
\label{NP-tetrad-first-choice_d}
\end{eqnarray}
\end{subequations}
where $\{\bmpartial_+,\, \bmpartial_-\}$ span $T\mathbb{S}^2$. In particular, we require
\[
\slashed\bmsigma^\sharp = \bmpartial_+\otimes\bmpartial_-+\bmpartial_-\otimes\bmpartial_+.
\]

Following the notation of \cite{Ste91}, the NP connection coefficients $(\upkappa,\uptau,\upsigma,\uprho,\upepsilon,\upgamma,\upbeta,\upalpha,\uppi,\upnu,\upmu,\uplambda)$ can be calculated using the choice of the tetrad 
\eqref{NP-tetrad-first-choice_a}-\eqref{NP-tetrad-first-choice_b}, \eqref{NP-tetrad-first-choice_c}-\eqref{NP-tetrad-first-choice_d}
 and the commutator relations in page 216 of \cite{Ste91}, we get
\begin{eqnarray*}
    && \upkappa =0, \qquad \uptau =0, \qquad \upsigma =0, \qquad \uprho = -\frac{(1-\chi)(1+\chi(2+\chi-4 \fraka(1-\tau)))}{\sqrt{2} \chi (1+\chi) (1+\chi^2) (1-\tau)^2}, \\
    && \upepsilon = \frac{ 2 \fraka (\chi -1 + \tau - \chi \tau)-\chi (1+\chi)}{\sqrt{2} (1+\chi) (1+\chi^2) (1-\tau)^2}, \qquad \upgamma = -\frac{\chi - \chi^2- 2 \fraka (1+\chi) (1-\tau)}{\sqrt{2} (1-\chi) (1+\chi^2) (1-\tau)^2}, \\
    && \upbeta = \frac{\cot \theta}{2 \sqrt{2} \chi (1-\tau)^2}, \qquad \upalpha = - \frac{\cot \theta}{2 \sqrt{2} \chi (1-\tau)^2}, \qquad \uppi =0, \qquad \upnu =0, \\
    && \upmu = -\frac{(1+\chi)(1+\chi(\chi-2+4 \fraka(1-\tau)))}{\sqrt{2} \chi (1-\chi) (1+\chi^2) (1-\tau)^2}, \qquad \uplambda =0.
\end{eqnarray*}

In the main text, we adopt the compact NP notation, in which the following relabelling of the spin coefficients is made:
\[
\uppi \mapsto -\uptau', \qquad
\upmu \mapsto -\uprho', \qquad
\uplambda \mapsto -\upsigma', \qquad
\upnu \mapsto -\upkappa'.
\]

\subsection{The NP-frame for the Zenginoglu gauge}
Similarly, for the contravariant metric $\bar{\bmeta}^\sharp$ as given by \eqref{EtaBarSharp} we can write 
\begin{multline*}
\bar{\bmeta}^{\sharp} = \frac{\fraka^2}{\fraka^2 (1-\rho)^2 +\rho^2} \bmpartial_{\tau} \otimes \bmpartial_{\tau} - \frac{\rho}{(1-\rho) (1-\tau)^2 \sqrt{\fraka^2 + \tfrac{\rho^2}{(1-\rho)^2}}} \left( \bmpartial_{\tau} \otimes \bmpartial_{\rho} + \bmpartial_{\rho} \otimes \bmpartial_{\tau}  \right) \\
- \frac{(1-\rho)^2}{(1-\tau)^4} \bmpartial_{\rho} \otimes \bmpartial_{\rho} - \frac{1}{\rho^2 (1-\tau)^4} \slashed\bmsigma^{\sharp}.
\end{multline*}
As such, we can introduce the orthonormal vectors
\begin{eqnarray*}
    && \bar{\bme}_{\bmzero} = \frac{1}{\Omega} \bmpartial_{\tau}, \; \; \qquad \bar{\bme}_{\bmone}  = \frac{\Omega}{(1-\tau)^2(\Omega-\rho \Omega')} (\bmpartial_{\rho} - (1-\tau)^2 F'(\rho) \bmpartial_{\tau}) \\
    && \phantom{\bar{\bme}_{\bmzero}} = \frac{1}{1-\rho} \bmpartial_{\tau}, \quad \phantom{\bar{\bme}_{\bmone}} = \frac{1-\rho}{(1-\tau)^2} \bmpartial_{\rho} - \frac{\fraka^2}{(1-\rho) \left(\fraka^2 + \frac{\rho^2}{(1-\rho)^2} \right)^{\tfrac{3}{2}}} \bmpartial_{\tau},
\end{eqnarray*}
and 
\begin{equation*}
    \bar{\bme}_{\bmtwo} = \frac{1}{\rho (1-\tau)^2} \bmpartial_\theta, \qquad \bar{\bme}_{\bmthree} = \frac{\sin \theta}{\rho (1-\tau)^2} \bmpartial_{\varphi}.
\end{equation*}

In a slight abuse of notation, let $\{ \bml, \bmn, \bmm, \bar{\bmm} \}$ denote the NP null tetrad associated with $\bar{\bmeta}$ which is defined by
\begin{equation*}
    \bml = \frac{1}{\sqrt{2}} (\bar{\bme}_{\bmzero} + \bar{\bme}_{\bmone}), \qquad \bmn= \frac{1}{\sqrt{2}} (\bar{\bme}_{\bmzero} - \bar{\bme}_{\bmone}), \qquad \bmm = \frac{1}{\sqrt{2}} (\bar{\bme}_{\bmtwo} + i \bar{\bme}_{\bmthree}), \qquad \bmm = \frac{1}{\sqrt{2}} (\bar{\bme}_{\bmtwo} - i \bar{\bme}_{\bmthree})
\end{equation*}
which can be written explicitly as 
\begin{subequations}
    \begin{eqnarray}
        && \bml =  \frac{\Omega-\Omega^2 f'(\rho)-\rho \Omega'}{\sqrt{2} \Omega (\Omega-\rho \Omega')} \bmpartial_{\tau}+ \frac{\Omega}{\sqrt{2} (1-\tau)^2 (\Omega-\rho \Omega')} \bmpartial_{\rho},  \label{NP-tetrad-second-choice_a}\\
        && \bmn =\frac{\Omega+\Omega^2 f'(\rho)-\rho \Omega'}{\sqrt{2} \Omega (\Omega-\rho \Omega')} \bmpartial_{\tau} -\frac{\Omega}{\sqrt{2} (1-\tau)^2 (\Omega-\rho \Omega')} \bmpartial_{\rho}.  \label{NP-tetrad-second-choice_b}
    \end{eqnarray}
\end{subequations}
Furthermore, we can choose complex frames $\{\bmpartial_+,\, \bmpartial_-\}$ that span $T\mathbb{S}^2$ so that
\begin{equation}
    \bmm = \frac{1}{\rho (1-\tau)^2} \bmpartial_{+}, \qquad \bar{\bmm} = \frac{1}{\rho (1-\tau)^2} \bmpartial_{-}.
    \label{NP-tetrad-second-choice_c}
\end{equation}

For the frame given by \eqref{NP-tetrad-second-choice_a}-\eqref{NP-tetrad-second-choice_b} and \eqref{NP-tetrad-second-choice_c} it follows then that the associated connection coefficients are given by 
\begin{eqnarray*}
    && \upkappa =0, \qquad \uptau =0, \qquad \upsigma =0, \qquad \uprho = -\frac{R \left( 1+ \rho^2 - 2 \rho (\tau-2) \right) + 2 \rho^2 (1-\tau)}{\sqrt{2} R \rho (1-\rho) (1-\tau)^2}, \\
    && \upepsilon = \frac{2 \rho (1-\tau) + R (\rho-3 + 2\tau)}{2 \sqrt{2} R (1-\rho) (1-\tau)^2}, \qquad \upgamma = \frac{R (1+\rho - 2 \tau) + 2 \rho (1-\tau)}{ 2 \sqrt{2} R(1-\rho) (1-\tau)^2}, \\
    && \upbeta = \frac{\cot \theta}{2 \sqrt{2} \rho (1-\tau)^2}, \qquad \upalpha = - \frac{\cot \theta}{2 \sqrt{2} \rho (1-\tau)^2}, \qquad \uppi =0, \qquad \upnu =0, \\
    && \upmu = -\frac{2 \rho^2 (1-\tau)+ R (1+\rho^2 - \rho \tau)}{\sqrt{2} R  \rho (1-\rho) (1-\tau)^2}, \qquad \uplambda =0,
\end{eqnarray*}
where $R$ is the function of $\rho$ defined in the main text ---see equation \eqref{Definition-R}.

\section{Solving the transport equations near \texorpdfstring{$\tau=1$}{tau=1}}
\label{Solution-transport-equations}
In this Appendix, we provide further details on the asymptotic behaviour of the solutions to the transport equations on $\mathscr{I}^+$ implied by the wave equation.

\subsection{Poincar\'e ball gauge}
In this section, we provide an overview of the arguments leading to Lemma \eqref{Lemma:BehaviourCPlus}.

\medskip
In the main text we consider the ODE
\begin{equation}
    (1-\tau)^4 \ddot{\phi}_{\ell} + \frac{(1-\tau)^2}{\fraka^2} (2-6\fraka^2 (1-\tau)) \dot{\phi}_{\ell} + \frac{\ell (\ell +1)}{\fraka^2} \phi_{\ell} =0
    \label{ODE-homogeonous-phi-l}
\end{equation}
---cfr. \eqref{ODE:Poincare}. A more detailed version of Lemma \ref{Lemma:BehaviourCPlus} is given by the following:

\begin{proposition}
Let $\phi_{\ell}$ be a solution of the homogeneous ODE
\eqref{ODE-homogeonous-phi-l}, and define
\[
D_{\ell}^{2}\equiv 1-\fraka^{2}\ell(\ell+1).
\]
Assume that $\phi_{\ell}(\tau)$ can be expanded near $\tau=1$ as
\begin{equation*}
    \phi_{\ell} (\tau) = e^{\tfrac{\mu^{(0)}_{\ell}}{1-\tau}} ((1-\tau)^{\mu^{(1)}_{\ell}}+ O((1-\tau)^{\mu^{(1)}_{\ell}+1}),
\end{equation*}
and that $D_{\ell}^{2}\neq0$. Then, as $\tau\to1^{-}$:
\begin{enumerate}
    \item[(i)] For $\ell=0$, the solution $\phi_{0}$ extends smoothly to $\tau=1$, with $\phi_{0}(1)=A_{0}$.

    \item[(ii)] If $\ell>0$ and $D_{\ell}^{2}>0$, then both independent solutions decay exponentially and are flat at $\tau=1$. Consequently, every solution extends smoothly to $\tau=1$, with $\phi_{\ell}(1)=0$.

    \item[(iii)] If $D_{\ell}^{2}<0$, then the solution is rapidly oscillatory with an exponentially decaying amplitude and extends smoothly to $\tau=1$, with $\phi_{\ell}(1)=0$.
\end{enumerate}
In cases \textnormal{(ii)} and \textnormal{(iii)}, the smoothness
statements assume that the lower-order terms admit smooth asymptotic
expansions.
\end{proposition}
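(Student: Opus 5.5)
Set $x\equiv 1-\tau$, so that $\tau\to1^-$ corresponds to $x\to0^+$ and $\mathrm{d}/\mathrm{d}\tau=-\mathrm{d}/\mathrm{d}x$. Equation \eqref{ODE-homogeonous-phi-l} then reads
\[
x^4\phi_\ell'' - \frac{x^2}{\fraka^2}(2-6\fraka^2 x)\phi_\ell' + \frac{\ell(\ell+1)}{\fraka^2}\phi_\ell=0 ,
\]
where primes now denote $x$-derivatives. This equation has an irregular singular point at $x=0$, of Poincar\'e rank one. The natural variable is $z\equiv 1/x$: with $\mathrm{d}/\mathrm{d}x=-z^2\,\mathrm{d}/\mathrm{d}z$ one has $x^4\phi''=\ddot\phi+2z^{-1}\dot\phi$ and $x^2\phi'=-\dot\phi$, with dots now denoting $z$-derivatives. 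The equation becomes
\[
\ddot\phi+\Big(\frac{2}{\fraka^2}+\frac{2-6/z}{z}\Big)\dot\phi+\frac{\ell(\ell+1)}{\fraka^2}\phi=0 ,
\]
which has constant leading coefficients plus $O(1/z)$ corrections. This is the standard setting for formal WKB/Birkhoff solutions $e^{\mu z}z^{-\nu}(1+O(1/z))$, which is exactly the Ansatz assumed in the statement.

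The first step is to substitute the Ansatz $\phi_\ell=e^{\mu^{(0)}/x}x^{\mu^{(1)}}(1+O(x))$. At leading order this gives the characteristic equation $\fraka^2(\mu^{(0)})^2+2\mu^{(0)}+\ell(\ell+1)=0$, up to the sign and scaling conventions fixed by the coefficients above. Its roots are
\[
\mu^{(0)}_\pm=\frac{-1\pm D_\ell}{\fraka^2},\qquad D_\ell^2=1-\fraka^2\ell(\ell+1).
\]
This is where $D_\ell$ enters. The hypothesis $D_\ell^2\neq0$ guarantees that the two roots are distinct, so there are two independent formal solutions. At the next order, matching the $x^{\mu^{(1)}+1}$-type terms gives a linear equation for $\mu^{(1)}$ with coefficient $2\fraka^2\mu^{(0)}+2\propto \pm D_\ell$, which is nonzero. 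So $\mu^{(1)}_\pm$ is uniquely determined, and the higher coefficients follow recursively. To pass from formal to actual solutions, I would invoke the standard asymptotic existence theorem for rank-one irregular singular points (Birkhoff, Wasow, or Olver's Liouville--Green error bounds applied to the $z$-form). This gives genuine solutions with these asymptotics as $x\to0^+$ in the relevant sector.

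The case analysis then follows directly.
\begin{enumerate}
\item[(i)] For $\ell=0$ the roots are $\mu^{(0)}=0$ and $\mu^{(0)}=-2/\fraka^2$. The first gives the regular branch, and in fact $\phi=$ const is an exact solution. The second is multiplied by $e^{-2/(\fraka^2x)}$, which is flat at $x=0$. Hence $\phi_0\to A_0$ and $\phi_0$ extends smoothly.
\item[(ii)] For $\ell>0$ with $D_\ell^2>0$, one has $0<D_\ell<1$, so both roots are real and negative. Both solutions are therefore $e^{-c/x}$ times a power of $x$ with $c>0$, hence flat. Every solution extends smoothly with $\phi_\ell(1)=0$.
\item[(iii)] For $D_\ell^2<0$ the roots are complex with real part $-1/\fraka^2<0$. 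The solutions are $e^{-1/(\fraka^2x)}e^{\pm i|D_\ell|/(\fraka^2 x)}x^{\mu^{(1)}}$, where $\mu^{(1)}$ is complex; its imaginary part produces the $\log(1-\tau)$ phase seen in Lemma~\ref{Lemma:BehaviourCPlus}. These are rapidly oscillating with exponentially decaying amplitude.
\end{enumerate}

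For smoothness in cases (ii) and (iii), every $x$-derivative of $e^{\mu/x}x^{\nu}(1+r(x))$ is bounded by a polynomial in $1/x$ times $e^{\mathrm{Re}\,\mu/x}$, provided the remainder $r$ has derivative bounds of asymptotic-expansion type. This is precisely the stated assumption on the lower-order terms. Such bounds can also be obtained by differentiating the equation itself. Since $\mathrm{Re}\,\mu<0$, all derivatives tend to $0$, so the solution extends smoothly by zero, as a flat function.

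The main obstacle is the rigorous justification that actual solutions carry the formal asymptotics, together with the derivative control needed for smoothness. The computation of exponents is routine. For this step I would rely on Olver's error-controlled Liouville--Green theorem in the $z$-variable, where the $O(1/z)$ coefficient perturbations are absolutely integrable after the exponent correction, rather than reproving the Birkhoff theory.
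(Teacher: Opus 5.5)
Your proposal is correct and follows the paper's route: the same WKB Ansatz at the irregular singular point $x=0$, the same characteristic roots $\mu^{(0)}_\pm=(-1\pm D_\ell)/\fraka^2$, and the same sign analysis of $\mathrm{Re}\,\mu^{(0)}_\pm$. Your appeal to Olver's rank-one theorem (whose nondegeneracy condition $f_0^2\neq 4g_0$ is exactly $D_\ell^2\neq0$) and your control of higher derivatives through the ODE add rigour that the paper leaves as assumptions. One harmless slip: the $\dot\phi$ coefficient in the $z$-form is $\frac{2}{\fraka^2}-\frac{4}{z}$, not $\frac{2}{\fraka^2}+\frac{2-6/z}{z}$; this does not affect the leading-order argument, but the corrected form is the one that reproduces the paper's $\mu^{(1)}_{\ell\pm}$.
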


\subsubsection*{Details}
\label{Solution-transport-equation}

Starting from \eqref{ODE-homogeonous-phi-l}, introducing $x\equiv 1-\tau$ and dividing by $x^4$, we get
\begin{equation}
    \ddot{\phi}_{\ell} + \left( \frac{6}{x} - \frac{2}{\fraka^2 x^2}\right) \dot{\phi}_{\ell} + \frac{\ell (\ell +1)}{\fraka^2 x^4}\phi_{\ell} =0, 
    \label{ODE-alt-compact-wave-equation}
\end{equation}
where, in an abuse of notation, we used $\dot{(\; \;)}$ to denote differentiation with respect to $x$. For a second order homogenous ODE of the form
\begin{equation*}
    \ddot{\phi} + P(x) \dot{\phi} + Q(x) \phi =0,
\end{equation*}
a point $x=x_{0}$ is singular if either $P(x)$ or $Q(x)$ is not analytic at $x_{0}$. This point is a regular singularity if $(x-x_{0}) P(x)$ and $(x-x_{0})^2 Q(x)$ are analytic. In our case, it is clear that $x=0$ is a \emph{an irregular} singular point. In this case, a suitable Ansatz is given by
\begin{equation*}
    \phi_{\ell} (x) = e^{\mu_{\ell}(x)},
\end{equation*}
with $\mu_{\ell}(x)$ expanded near $x=0$ as
\begin{equation*}
    \mu_{\ell}(x) = \frac{\mu^{(0)}_{\ell}}{x} + \mu^{(1)}_{\ell} \log{x} + \mathrm{l.o.t},
\end{equation*}
so that $\phi_{\ell} (x)$ can be written near $x=0$ or $(\tau=1)$ as
\begin{equation}
    \phi_{\ell} (\tau) = e^{\tfrac{\mu^{(0)}_{\ell}}{1-\tau}} ((1-\tau)^{\mu^{(1)}_{\ell}}+ O((1-\tau)^{\mu^{(1)}_{\ell}+1}).
    \label{Solution-phi-l-other-compactification}
\end{equation}
Substituting the leading order contribution in \eqref{ODE-alt-compact-wave-equation} and solving for $\mu^{(0)}_{\ell}$ gives
\begin{equation*}
    \mu^{(0)}_{\ell \pm} = \frac{-1\pm \sqrt{1- \fraka^2 \ell(\ell+1)}}{\fraka^2}.
\end{equation*}
The corresponding solutions for $\mu^{(1)}_{\ell\pm}$ in these cases are given by
    \begin{equation}
        \mu^{(1)}_{\ell \pm} = - \frac{2( \sqrt{1- \fraka^2 \ell(\ell+1)} \mp 1 )}{\sqrt{1- \fraka^2 \ell(\ell+1)}}.
        \label{mu-1-l}
    \end{equation}

\medskip
At this point we distinguish two cases:

\begin{itemize}
    \item \textbf{For $\ell =0$}: 
    \begin{equation*}
    \mu^{(0)}_{0 +} = 0,  \qquad \text{or } \quad \mu^{(0)}_{0 -} = -\frac{2}{\fraka^2},
    \end{equation*}
    and $\mu^{(1)}_{\ell \pm}$ evaluates to
    \begin{equation*}
        \mu^{(1)}_{0+} = 0,  \qquad \text{or } \quad \mu^{(1)}_{0-} = -4.
    \end{equation*}
    From the above, we can write $\phi_{0}$ as
    \begin{equation*}
        \phi_{0} (\tau) = A_{0} + B_{0}  e^{-\tfrac{2}{\fraka^2 (1-\tau)}} \left( \frac{1}{(1-\tau)^4} + O\left((1-\tau)^{-3} \right)\right),
    \end{equation*}
    where $A_{0}$ and $B_{0}$ are some constants that can be determined from initial data. Since the exponential factor decays faster than any inverse power of $1-\tau$, the second term vanishes as $\tau\to1^{-}$. Moreover, this term is flat at $\tau=1$, in the sense that all of its derivatives vanish there. Consequently, the solution admits a smooth extension to $\tau=1$, with
\begin{equation}
    \phi_{0}(1)=A_{0}.
\end{equation}

\item \textbf{For general $\ell>0$}: we can write the general solution as 
\begin{equation}
    \phi_{\ell}(\tau) = A_{\ell}  \phi_{\ell +}(\tau)+ B_{\ell}  \phi_{\ell -}(\tau),
    \label{Solution-l-bigger-than0}
\end{equation}
where 
\begin{eqnarray*}
    && \phi_{\ell +}(\tau)=e^{\tfrac{\mu^{(0)}_{\ell+}}{1-\tau}} ((1-\tau)^{\mu^{(1)}_{\ell+}}+ O((1-\tau)^{\mu^{(1)}_{\ell+}+1}), \\
    && \phi_{\ell -}(\tau)=e^{\tfrac{\mu^{(0)}_{\ell-}}{1-\tau}} ((1-\tau)^{\mu^{(1)}_{\ell-}}+ O((1-\tau)^{\mu^{(1)}_{\ell-}+1}).
\end{eqnarray*}
Note that coefficients $\mu^{(0)}_{\ell\pm}$ are real if $D^2_{\ell} >0$ and form a complex-conjugate pair if $D_\ell^2<0$.

\begin{table}[H]
\centering
\renewcommand{\arraystretch}{1.15}
\setlength{\tabcolsep}{7pt}
\begin{tabular}{ccccc}
\toprule
\textbf{Condition}
& $\boldsymbol{\mu^{(0)}_{\ell+}}$
& $\boldsymbol{\mu^{(1)}_{\ell+}}$
& $\boldsymbol{\mu^{(0)}_{\ell-}}$
& $\boldsymbol{\mu^{(1)}_{\ell-}}$ \\
\midrule
$0<D_\ell<1$
& $<0$
& $>0$
& $<0$
& $<0$ \\
\bottomrule
\end{tabular}
\caption{Signs of the exponents when $D_\ell^2>0$ and $\ell>0$.}
\label{Table:signs-exponents-positive-D}
\end{table}

For real $\mathfrak a\neq0$ and $\ell>0$, the condition
$D_\ell^2>0$ implies $0<D_\ell<1$. In this case,
$\mu^{(0)}_{\ell+}<0$ and $\mu^{(0)}_{\ell-}<0$, so the exponential
factors in both independent solutions decay faster than any power of
$1-\tau$. Hence, provided that the lower-order terms admit smooth
asymptotic expansions, both $\phi_{\ell+}$ and $\phi_{\ell-}$ extend
smoothly to $\tau=1$ by setting
\[
\phi_{\ell+}(1)=\phi_{\ell-}(1)=0.
\]
Consequently, every
solution of the homogeneous equation with $D_\ell^2>0$ and $\ell>0$
satisfies
\[
\lim_{\tau\to1^-}\phi_\ell(\tau)=0.
\]

\begin{remark}
{\em The case $D_{\ell}=1$ occurs if and only if $\ell=0$ (or if $\fraka=0$ which we assume to be non-zero). Hence, it does not give rise to an additional class of asymptotic behaviour beyond the monopole case considered above.}
\end{remark}

If $D^2_{\ell}<0$, then we have $D_{\ell} \in i \mathbb{R}$ and we can write
\begin{equation*}
    \mu^{(0)}_{\ell \pm} = \frac{-1\pm i \sigma \delta_{\ell}}{\fraka^2} =  \frac{-1\pm i  \delta_{\ell}}{\fraka^2},
\end{equation*}
where $\delta_{\ell} \equiv \sqrt{|1-\fraka^2 \ell(\ell+1)|}$ and $\sigma = \pm1$. We can also write
\begin{equation}
        \mu^{(1)}_{\ell \pm} = - \frac{2 ( \delta_{\ell} \pm i \sigma)}{ \delta_{\ell}} = - \frac{2 ( \delta_{\ell} \pm i)}{ \delta_{\ell}}, 
\end{equation}
Using these expressions in \eqref{Solution-l-bigger-than0}, we can write $\phi_{\ell}(\tau)$ as 
\begin{equation}
    \phi_{\ell}(\tau) = f(\tau) \left( A_{\ell} e^{i \Phi_{\ell}(\tau)} + B_{\ell} e^{-i \Phi_{\ell}(\tau)} \right)+ \mathrm{l.o.t},
    \label{Solution-l-bigger-than0-complex}
\end{equation}
where
\begin{eqnarray*}
    && f(\tau) = \frac{e^{-\frac{1}{\fraka^2 (1-\tau)}}}{(1-\tau)^2}, \qquad \Phi_{\ell}(\tau)= \frac{\delta_{\ell}}{\fraka^2 (1-\tau)} - \frac{2}{\delta_{\ell}} \log(1-\tau). 
\end{eqnarray*}
As $\tau\to1^{-}$, the amplitude $f(\tau)$ decays exponentially to
zero, since the exponential decay dominates the factor
$(1-\tau)^{-2}$, while the phase $\Phi_{\ell}(\tau)$ becomes
unbounded. Consequently, the leading-order solution consists of two
increasingly rapidly oscillating modes with a common exponentially
decaying amplitude.
\end{itemize}

\begin{remark}
{\em The critical case $D_{\ell}=0$ corresponds to a repeated leading-order root and requires a separate asymptotic analysis.}
\end{remark}




\bibliography{Newgrbib}
\bibliographystyle{JHEP}

\end{document}